\newcommand{\R}{{\mathbb R}}
\newtheorem{definition}{Definition}[section]
\newtheorem{corollary}[definition]{Corollary}
\newtheorem{theorem}[definition]{Theorem}
\newtheorem{lemma}[definition]{Lemma}
\newtheorem{remark}[definition]{Remark}
\newtheorem{example}[definition]{Example}
\def\bfx{{\bf x}}
\def\bfy{{\bf y}}
\def\bfb{{\bf b}}
\def\bfv{{\bf v}}
\def\bfs{{\bf s}}
\def\bfe{{\bf e}}
\def\bfb{{\bf b}}
\def\bfa{{\bf a}}
\def\bfu{{\bf u}}
\date{}
\begin{document}

\baselineskip 14pt
\bibliographystyle{plain}

\title[Graph Sparsification via UGA]{Graph Sparsification \\ by Universal Greedy Algorithms}
\author{Ming-Jun Lai}
\address{Department of Mathematics, University of Georgia, Athens, GA 30602. U.S.A.}
\email{mjlai@uga.edu}

\author{Jiaxin Xie}
\address{School of Mathematical Sciences, Beihang University, Beijing, 100191, China }
\email{xiejx@buaa.edu.cn}

\author{Zhiqiang Xu}
\address{LSEC, Inst.~Comp.~Math., Academy of
Mathematics and System Science,  Chinese Academy of Sciences, Beijing, 100091, China \newline
School of Mathematical Sciences, University of Chinese Academy of Sciences, Beijing 100049, China}
\email{xuzq@lsec.cc.ac.cn}

\begin{abstract}
Graph sparsification is to approximate an arbitrary graph by a sparse graph and is useful
in many applications, such as  simplification of social networks, least squares problems,
numerical solution of symmetric positive definite linear systems and etc.
In this paper, inspired by the well-known sparse signal recovery algorithm
called orthogonal matching pursuit (OMP), we introduce a deterministic, greedy edge selection
algorithm  called universal greedy approach (UGA)  for  graph sparsification.
For a general spectral sparsification problem, e.g., positive subset selection problem from a set of $m$ vectors
from $\mathbb{R}^n$, we propose a nonnegative UGA algorithm which needs $O(mn^2+ n^3/\epsilon^2)$ time to find a
$\frac{1+\epsilon/\beta}{1-\epsilon/\beta}$-spectral sparsifier with positive coefficients
with sparsity $\le\lceil\frac{n}{\epsilon^2}\rceil$, where $\beta$ is the ratio between the smallest length and largest length of the vectors.   The convergence of the nonnegative UGA algorithm will
be established. For the graph sparsification problem,
another UGA algorithm will be proposed which can output a
$\frac{1+O(\epsilon)}{1-O(\epsilon)}$-spectral sparsifier
with $\lceil\frac{n}{\epsilon^2}\rceil$ edges in $O(m+n^2/\epsilon^2)$ time from a graph with $m$ edges and
$n$ vertices under some mild assumptions.
This is a linear time algorithm in terms of the number of edges
that the community of graph sparsification is looking for.
 The best result in the literature to the knowledge of the authors
is the existence of a deterministic algorithm which is almost linear,
i.e. $O(m^{1+o(1)})$ for some $o(1)=O(\frac{(\log\log(m))^{2/3}}{\log^{1/3}(m)})$.
Finally, extensive experimental results, including applications to graph clustering and least squares regression,
show  the effectiveness of proposed approaches.
\end{abstract}

\maketitle

\section{Introduction}
Graph sparsification aims to find a sparse subgraph from a dense graph $G$ with $n$
vertices and $m$ edges (typically $m\gg n$) so that the sparsified subgraph can serve
as a proxy for $G$ in numerical computations for graph-based applications. In
\cite{BSS14},  Batson, Spielman and Srivastava showed that for any undirected graph
$G$ one can find a sparse graph (sparsifier) whose graph Laplacian matrix can well
preserve the spectrum of the original graph Laplacian matrix. Such a spectral graph
sparsification plays increasingly important roles in many applications areas in
mathematics and computer science \cite{Kyng2016,Li2018, SpielmanTeng2014}. A related
research, known as \emph{Laplacian Paradigm}, is illustrated as an emerging paradigm
for the design of scalable algorithms in recent years. We refer the reader to
\cite{Spielmanbook2019, Teng2016, Vishnoi2013} for excellent surveys on its
background and applications.

Mathematically, we can state the graph sparsification problem as follows.
Consider a  undirected and weighted graph $G=(V, E, {\bf w})$, where $V$ is a set of vertices,
$E$ is a set of edges, and ${\bf w}$ is a weight function that assigns a positive weight to each edge.
The Laplacian matrix of graph $G$ is defined by
$$
L_{G}=\sum\limits_{(u,v)\in E}w_{(u,v)}(\bfe_u-\bfe_v)(\bfe_u-\bfe_v)^\top ,
$$
where  $w_{(u,v)}\ge 0$ is the weight of edge $(u,v)$ and $\bfe_u\in\mathbb{R}^{|V|}$ is the characteristic vector of vertex
$u$ (with a $1$ on coordinated $u$ and zeros elsewhere). In other words, for any $\bfx\in \mathbb{R}^n$,
$$
\bfx^\top L_G\bfx =\sum\limits_{(u,v)\in E}w_{(u,v)}\big(\bfx(u)-\bfx(v)\big)^2 \ge 0.
$$
That is, $L_G$ is positive semidefinite.
\emph{Spectral graph sparsification} is the process of approximating the graph $G$ by a sparse (linear-sized) graph
 $H=(V,\tilde{E}, \tilde{\bf w})$ such that
\begin{equation}
\label{GS}
a \bfx^\top L_G \bfx \le \bfx^\top L_H \bfx \le b\bfx^\top L_G\bfx
\end{equation}
for all $\bfx\in \mathbb{R}^{|V|}$, where $b\geq a>0$. Setting $\kappa:=b/a$, $H$ is
called a $\kappa$-approximation of $G$ or a $\kappa$-sparsifier of $G$. Actually, if
we restrict the inequality in \eqref{GS} only for all $\bfx\in\{0,1\}^{|V|}$, one can
obtain the cut sparsification \cite{bk96}. Batson, Spielman and Srivastava
\cite[Theorem $1.1$]{BSS14} proved that for every weighted graph $G$ and every
$\epsilon \in (0,1)$ there exists a weighted graph $H$ with at most
$\lceil(n-1)/\epsilon^2\rceil$ edges  which  is an
$\frac{(1+\epsilon)^2}{(1-\epsilon)^2}$-approximation of $L_G$. More generally, let
$V=\{\bfv_1, \ldots, \bfv_m\}\subset \mathbb{R}^n$ be a collection of vectors with
$m\gg n$. We replace $L_G$ by
\begin{equation}
\label{isotropic}
B= \sum_{i=1}^m \bfv_i \bfv_i^\top,
\end{equation}
which is clearly positive semidefinite. In \cite[Theorem $1.2$]{BSS14}, the authors
proved that for any $\epsilon\in (0,1)$, these exists a  $\bfs=(s_1,
\ldots, s_m)\in \mathbb{R}^m_+$ with  $\|\bfs\|_0\le \lceil \mbox{rank}(B)/\epsilon^2\rceil$ such that
\begin{equation}
\label{app21}
(1- \epsilon)^2 B \preceq \sum_{i=1}^m s_i \bfv_i \bfv_i^\top  \preceq (1+\epsilon)^2  B,
\end{equation}
where $\mathbb{R}_{+}^m$ denotes the nonnegative orthant in $\mathbb{R}^m$ and   $\|\bfs\|_0$ stands
for the number of nonzero entries of vector $\bfs$.

Our aim is to  use the ideas of the well-known orthogonal matching  pursuit (OMP) to
study spectral sparsification problems. We first focus on the following problem:
\begin{equation}\label{app2}
\min_{\bfs\in \mathbb{R}^m, \bfs\geq 0}\|\bfs\|_0\quad \text{\rm s.t.}\quad
(1- \epsilon)^2 B \preceq \sum_{i=1}^m s_i \bfv_i \bfv_i^\top  \preceq (1+\epsilon)^2  B,
\end{equation}
where
 $
 B= \sum_{i=1}^m \bfv_i \bfv_i^\top
$ and $\{\bfv_1 \ldots, \bfv_m\}\subset \mathbb{R}^n$.

We shall call (\ref{app2}) {\em sparse positive subset selection } problem and will
study it first. For convenience, we shall start with the isotropic case, i.e., when
$B=I_n$ is the $n\times n$ identity matrix. Such decomposition appear in many areas
of mathematics and are also called isotropic sets, tight frames, John's
decompositions (when their mean is zero), etc. We shall also explain how to do with
general symmetric positive semidefinite matrix $B$. Mainly we shall provide an
algorithm called nonnegative UGA algorithm to find such a sparse subset with positive
coefficients and establish the convergence of the algorithm. Then we solve graph
sparsification problem  using a similar UGA algorithm and establish the convergence.

\subsection{Related work}
In recent years, spectral sparsification is a widely studied topic and
has applications to many areas
in mathematics and theoretical computer science. Recent work
includes \cite{ALO15,Bansal2019,BSS14,batson2013spectral, chufocs2018,Peng2019,Feng2016,Feng2018,youssef2017,Hermsdorff2019,Jambulapati2018,Kyng2016, KPPS2017, LS2018, LS15,dCSHS16,
SpielmanSrivastava2011, SpielmanTeng2011,SpielmanTeng2014,Wuchen2020,
 Zhang-zhao-feng2020, zhao2018}.

In the seminal paper \cite{SpielmanTeng2011},  Spielman and Teng first introduced the
notion of spectral sparsification and showed that  any undirected graph $G$ of $n$
vertices and $m$ edges has a spectral sparsifier with $O(n \log^cn)$ edges that can
be computed in $O(m \log^c(m))$ time for some positive constant $c$.
 In terms of the number of vertices  $n$ with assumption of the number of edges $m$ being $O(n^2)$,
 the
Spielman and Teng's algorithm  need $O(n^2 2^c\log^2(n))$ time. Some progresses had
been made by the work of Spielman and Srivastava \cite{SpielmanSrivastava2011} and
Lee and Sun \cite{LS2018}, who showed how to find the spectral sparsifiers with
$O(n\log n)$ and $O(n)$ edges, respectively, in $O(m \log^c(m))$ time. We remark that
all of this type of algorithms requires random sampling or random projection.

A celebrated result of Batson, Spielman and Srivastava \cite{BSS14} states that for any undirected
graph $G$ of $n$ vertices and $m$ edges and a given parameter $\epsilon\in(0,1)$, there exists
a spectral sparsifer with $O(n/\epsilon^2)$ edges. They further provided a polynomial time, deterministic algorithm
that in each iteration one edge is chosen deterministically to optimize the change of some
`barrier' potential functions, however such algorithm and subsequent algorithms by
\cite{zouzias2012} require $O(mn^3/\epsilon^2)$ and $O(mn^2/\epsilon^2+n^4/\epsilon^2)$
time respectively.  When a graph has $m=O(n^2)$ edges, the computational time is $O(n^5)$ or $O(n^4)$.
 Most recently, the work \cite{Peng2019} solves a big
unsettled problem by showing that there exist a deterministic algorithm to find the spectral
sparsification in almost-linear time, i.e. $O(m^{1+o(1)})$ with
$o(1)=O(\frac{(\log\log(m))^{2/3}}{\log^{1/3}(m)})$.  Again, in terms of the number of
vertices, the time is $O(n^{2+o(1)})$.
Table \ref{tablesum} is a summary of various algorithms for graph sparsification.

\begin{table}
\caption{Summary of various algorithms for graph sparsification with $n$ and $m$ being the number of
nodes and edges of the input graph $G$, respectively. D/R stands for deterministic or randomized algorithms.\label{tablesum}}
\centering
{\small
\begin{tabular}{ |c|  c| c| c| c|   }
\hline
\multicolumn{1}{ |c| }{Algorithms}  &\multicolumn{1}{c| }{Sparsifier Size} &\multicolumn{1}{c| }{Approximation } &\multicolumn{1}{ c| }{Flops Count } &\multicolumn{1}{c| }{D/R}\\
\hline
  Theorem $6.1$ in \cite{Jambulapati2018} \cite{SpielmanTeng2011}& $O(n \log^cn)$ & $\frac{1+\epsilon}{1-\epsilon}$-approx.&$O(m \log^c(m))$ & Random.  \\
\hline
Theorem $1.1$ in \cite{BSS14} & $O(n)$ & $\big(\frac{1+\epsilon}{1-\epsilon}\big)^2$-approx.&$O(mn^3/\epsilon^2)$ & Determ.   \\
\hline
Corollary $7.3$ in \cite{Peng2019} & $O(n\log^2 n)$ & $n^{o(1)}$-approx.&$O(m^{1+o(1)})$ & Determ.   \\
\hline
Theorem $1.1$ in \cite{LS2018} & $O(n)$ & $\frac{1+\epsilon}{1-\epsilon}$-approx.&$O(m \log^c(m))$ & Random.  \\
\hline
Theorem $1$ in \cite{SpielmanSrivastava2011} & $O(n\log n)$ & $\frac{1+\epsilon}{1-\epsilon}$-approx.&$O(m \log^c(m))$ & Random.  \\
\hline
Algorithm \ref{alg-eco}  & $O(n/\epsilon^2)$ & $\frac{1+O(\epsilon)}{1-O(\epsilon)}$-approx.& $O(m+n^2/\epsilon)$ & Determ.  \\
\hline
\end{tabular}
}
\end{table}

\subsection{Our motivation and contribution}
In this paper, we  propose a simple and efficient algorithm based on a universal
greedy approach (UGA) to solve these graph sparsification problems to have at most
$n/\epsilon^2$ vectors or edges for any given $\epsilon\in (0,1)$. Our motivation can
be explained as follows. Let us consider (\ref{app2}) as an example. Denote by
$\phi_i\in \R^{n^2}$ the vectorization of matrix $\bfv_i \bfv_i^\top$ for $i=1,
\ldots, m$, and let $\Phi =[\phi_1, \ldots, \phi_m]$ be the sensing matrix and denote
by $\bfb\in \R^{n^2}$ the vetorization of $B$. We can rewrite  (\ref{isotropic})  as
the following underdetermined linear system
$$
 \Phi \bfe = \bfb,
$$
 where $\bfe\in \mathbb{R}^m$ is the vector with $1$ for all entries.
We can formulate the problem \eqref{app2} as a compressive sensing \cite{D06,FR13} problem:
Find a sparse solution $\bfs\ge 0$ with $\|\bfs\|_0\ll m$ such that
\begin{equation}
\label{CS}
\min \|\bfs\|_0 \quad \mbox{s.t.}  \quad \|\Phi \bfs -\bfb\|_2\leq \epsilon, \bfs\ge 0.
\end{equation}
Thus, the sparsification problem in (\ref{app2}) is a constrained compressive sensing problem.

In the standard compressive sensing approach, i.e. the problem in (\ref{CS}) without
$\bfs\ge 0$ or the following version (\ref{CS0}), one efficient way to solve it is to
use a greedy algorithm, which is  called orthogonal matching pursuit (OMP)
\cite{omp1993}:
\begin{equation}
\label{CS0}
\min \|\Phi \bfs - \bfb\|_2^2 \quad \mbox{s.t.} \quad \|\bfs\|_0 \le s. 
\end{equation}
OMP is a very popular algorithm and has been studied by many researchers. See, e.g.,
\cite{CDD17,omp1993,T04,tropp-omp,XuOMP,Z11} and many variations of the OMP. We refer
to \cite{DT05} and \cite{FK14} for an approach for (\ref{CS}). Besides, the OMP
procedure  is also very efficient in completing a low rank matrix with given partial
known entries (see \cite{wang2014, wang2015}). To deal
with positive subset selection problem (\ref{app2}), we have to enforce the
nonnegativity in the OMP algorithm. To deal with the graph sparsification
problem (\ref{GS}), we have to improve the efficiency of the OMP algorithm. In this paper, we mainly extend the ideas in the
orthogonal rank $1$ matrix pursuit (OR1MP) algorithm in \cite{wang2014,wang2015},
especially, the  economical OR1MP to the settings of the computation  for the positive subset selection and graph
sparsification.

Our main results can be summarized as follows.
For the sparse positive subset selection problem, under some mild assumptions, our Algorithm~\ref{alg-ecoIso} will produce a subset $\{\bfv_{j_\ell}, \ell=1,
\cdots, s\}$ from (\ref{isotropic}) and positive coefficients $c(j_\ell)$ such that
\begin{equation}
\label{napp2}
(1- \epsilon/\beta) B \preceq \sum_{\ell=1}^s c(j_\ell) \bfv_{j_\ell} \bfv_{j_\ell}^\top  \preceq (1+\epsilon/\beta)  B,
\end{equation}
with $s\le \hbox{rank}(B)/\epsilon^2$, where $\beta$ denotes the ratio between the smallest length and largest length of the vectors given in \eqref{newformulae}.
Our computational time is $O(m n^2/\epsilon^2)$ which is faster than the one in \cite{BSS14} which needs $O(m n^3/\epsilon^2)$.

For the graph sparsification problem, under the some mild assumptions, our Algorithm~\ref{alg-eco} will output  a
sparsified graph $H_i$ with $n/\epsilon^2$ edges  in
$O(m+n^2/\epsilon^2)$ time  such that
\begin{equation}
\label{nGS}
(1- O(\epsilon)) L_G \preceq L_{H_i}  \preceq (1+O(\epsilon))  L_G.
\end{equation}
Our computational time $O(m+n^2/\epsilon^2)$ is linear in the number of edges.
The best result in the literature to our knowledge is
the recent study in \cite{Peng2019} as mentioned above, where the researchers show that
there is a deterministic algorithm for finding graph sparsification in
a  near-linear time $O(m^{1+o(1)})$ with $o(1) =O(\frac{(\log\log(m))^{2/3}}{\log^{1/3}(m)})$.

In addition, the sparse subset selection problem can provide a linear sketching method for solving least squares problem
according to literature \cite{boutsidis2013l2,yosef2018,woodruff2014}. The main ideas of linear sketching are to
compress the data matrix $A$ and  observation vector $\bfb$ as small as possible before doing linear regression.  Once we
use Algorithm~\ref{alg-ecoIso} to find a linear sketching $\tilde{A}$ and $\tilde{{\bf b}}$, we propose a new method to solve the
linear regression  instead of the linear sketching method. We shall provide a theorem and numerical results to justify  our new
method.

\subsection{Notation and organization}
Let $A=(a_1,\ldots,a_n)\in\mathbb{R}^{n\times n}$ be an $n\times n$ real matrix.  We
use $A(i,j)$ to denote the $(i,j)$-th component of $A$. The operator norm and the
Frobenius norm of $A$ is defined as $\|A\|_2$ and
$\|A\|_F:=\sqrt{\sum_{i,j}A(i,j)^2}$, respectively. For any $\bfx\in\mathbb{R}^n$, we
use $\|\bfx\|_2$ to denote the $\ell_2$-norm of $\bfx$. For any subset $\Lambda$, we
use $A_{\Lambda}$ to denote the sub-matrix of $A$ obtained by extracting the columns
of $A$ indexed by $\Lambda$. Let $vec(A):=(a_1^\top ,\ldots,a_n^\top )^\top $ denote
a vector reshaped from matrix $A$ by concatenating all its column vectors. The inner
product of two matrices $A$ and $B$ is defined as $\langle A,B\rangle:=\langle
vec(A),vec(B)\rangle$. We call a matrix $A$ positive semidefinite if $\bfx^\top A
\bfx\geq 0$ holds for any $\bfx\in\mathbb{R}^n$, and a matrix $A$ positive definite
if $\bfx^\top A \bfx>0$ holds for any nonzero $\bfx\in\mathbb{R}^n$. For any two
matrices $A$ and $B$, we write $A\preceq B$ to represent $B-A$ is positive
semidefinite, and $A\prec B$ to represent $B-A$ is positive definite.
Finally, for any
$b_1,b_2\in\mathbb{R}$, we often use $b_1=O(b_2)$ if $|b_1/b_2|$ is bounded from the
above.

The paper is organized as follows. In the next section, we shall first discuss the
problem (\ref{app2}) to produce a subset $S\subset \{1, \cdots, m\}$ with nonnegative
coefficients $s_i>0, i\in S$ and establish the convergence of the nonnegative UGA
algorithm together with computational complexity.  Then we will study the problem of
(\ref{GS}) in Section 3 to show an universal greedy approach (UGA) algorithm will find a
desired graph sparsifier.
 We shall begin with the computational complexity of Algorithm~\ref{alg-eco} and then establish
the convergence to justify that the sparsified graph satisfies (\ref{nGS}).
 In the end of this paper, we shall present some numerical results on graph sparsification and least
squares regression, where we show that our new method is more accurate than the standard linear sketching method.

\section{The nonnegative UGA algorithm for isotropic subsets}
In this section, we shall propose an universal greedy approach (UGA) to find a sparse positive subset
 $\bfs=(s_1, \ldots, s_m)\in \mathbb{R}^m_+$ with  $\|\bfs\|_0\le \lceil
n/\epsilon^2\rceil$ and $V_{\bf s}=\{\bfv_{s_i}, i=1, \cdots, |\bfs|\}$ from the given set $V
=\{\bfv_1, \cdots, \bfv_m\}$ satisfying $\displaystyle \sum_{i=1}^m \bfv_i \bfv_i^\top=I_n$ such that
(\ref{app2}) holds with $B$ being replaced by $I_n$.
Our UGA strategy follows from the ideas of the economical OR1MP  in \cite{wang2015} which will significantly
speed up the computation as stated in Theorem~\ref{complexity3} below.
As this technique can be used to speed up all OMP like algorithms,  we call it
the universal greedy approach (UGA) instead.

\begin{algorithm}[H]
\caption{The nonnegative UGA for subset selection \label{alg-ecoIso}}
\begin{algorithmic}
\Require
$V=\{\bfv_1,\bfv_2,\ldots,\bfv_m\}\subset \mathbb{R}^{n}$ with $\sum\limits_{i=1}^m\bfv_i\bfv_i^{\top}=I_n$, and $\epsilon\in(0,1)$.
\begin{enumerate}
\item[1:] $R_0:=I_n$, $\Lambda_0:=\emptyset$, $L_0:=0$, $c_0:=0\in\mathbb{R}^m$ and $i:=1$.
\item[2:] Find the index $j_i$ such that
\begin{equation}
\label{positive}
 \bfv_{j_i}^\top R_{i-1}\bfv_{j_i} =\max_{\bfv\in V} \bfv^\top R_{i-1}\bfv,
\end{equation}
and update $\Lambda_i=\Lambda_{i-1}\cup\{j_i\}$.
\item[3:] Compute the optimal weights
$$(\alpha_1^i,\alpha_2^i)=\arg\min\limits_{(\alpha_1,\alpha_2)\in\mathbb{R}^2}
\big\|I_n-\alpha_1L_{i-1}-\alpha_2\bfv_{j_i}\bfv_{j_i}^{\top}\big\|^2_F.
$$
\item[4:] Update
$$
L_{i}=\alpha_1^iL_{i-1}+\alpha^i_2\bfv_{j_i}\bfv_{j_i}^{\top}
$$
and
$$
R_i=I_n-L_{i}.
$$
\item[5:] Update the coefficient $c_{i}=\alpha_1^ic_{i-1}$ and $ c_{i}(j_i)=\alpha_2^i+c_{i-1}(j_i)$.
\item[6:] If $i>\lceil\frac{n}{\epsilon^2}\rceil$, stop and go to output. Otherwise, set $i=i+1$ and return to Step $2$.
\end{enumerate}

\Ensure
The sparsifier $L_{\lceil\frac{n}{\epsilon^2}\rceil}$, the selected index $\Lambda=\Lambda_i$  and the coefficient $c=c_i$.
\end{algorithmic}
\end{algorithm}

Mainly, we update Step $2$ by using (\ref{positive}) instead of
$$
 |\bfv_{j_i}^\top R_{i-1}\bfv_{j_i}| =\max_{\bfv\in V} |\bfv^\top R_{i-1}\bfv|
$$
in Algorithm~\ref{alg-eco}.
This will ensure the non-negativity of the coefficients of subsets in all iterations. See Section \ref{sec3.2} after the
discussion of the computational complexity.

\subsection{Computational complexity}
We shall first establish  the following result for the  running time of Algorithm~\ref{alg-ecoIso}.
\begin{theorem}
\label{complexity3}
The computational complexity of Algorithm~\ref{alg-ecoIso} is $O(mn^2/\epsilon^2)$.
\end{theorem}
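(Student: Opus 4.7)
The plan is to show that although each iteration of Algorithm~\ref{alg-ecoIso} naively looks like it would cost $O(mn^2)$ (one would evaluate $\bfv^\top R_{i-1}\bfv$ for each of the $m$ candidates at a seeming cost of $O(n^2)$ each), by maintaining only a small number of scalar and vector quantities across iterations we can drive the per-iteration cost down to $O(mn)$. Since there are $\lceil n/\epsilon^2\rceil$ iterations, this yields the claimed $O(mn^2/\epsilon^2)$ bound.

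First, I would observe that the quantity to be maximized in Step~$2$ admits the decomposition
\[
\bfv_j^\top R_{i-1}\bfv_j \;=\; \|\bfv_j\|_2^2 \;-\; \bfv_j^\top L_{i-1}\bfv_j,
\]
so if we set $a_j^{(i-1)} := \bfv_j^\top L_{i-1}\bfv_j$ for every $j\in\{1,\dots,m\}$, the arg-max reduces to scanning $m$ precomputed scalars, an $O(m)$ operation once the norms $\|\bfv_j\|_2^2$ have been cached during an $O(mn)$ initialization. Next, using the recursion $L_i=\alpha_1^i L_{i-1}+\alpha_2^i\bfv_{j_i}\bfv_{j_i}^\top$ from Step~$4$, one obtains
\[
a_j^{(i)} \;=\; \alpha_1^i\, a_j^{(i-1)} \;+\; \alpha_2^i \bigl(\bfv_j^\top \bfv_{j_i}\bigr)^2,
\]
so updating the whole table $\{a_j^{(i)}\}_{j=1}^m$ only costs an $O(mn)$ pass to form the inner products $\bfv_j^\top\bfv_{j_i}$ plus $O(m)$ arithmetic, rather than ever touching the $n\times n$ matrix $R_{i-1}$ explicitly.

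Next I would verify that Step~$3$ adds no extra burden. Expanding the Frobenius objective,
\[
\bigl\|I_n-\alpha_1 L_{i-1}-\alpha_2\bfv_{j_i}\bfv_{j_i}^\top\bigr\|_F^2
\]
involves only the five scalars $\tr(I_n)=n$, $\tr(L_{i-1})$, $\|\bfv_{j_i}\|_2^2$, $\|L_{i-1}\|_F^2$, and $a_{j_i}^{(i-1)}=\bfv_{j_i}^\top L_{i-1}\bfv_{j_i}$. The first two are trivially maintained via $\tr(L_i)=\alpha_1^i\tr(L_{i-1})+\alpha_2^i\|\bfv_{j_i}\|_2^2$, the third is cached, the fifth is in our table, and the fourth updates in $O(1)$ via
\[
\|L_i\|_F^2 \;=\; (\alpha_1^i)^2\|L_{i-1}\|_F^2 + 2\alpha_1^i\alpha_2^i\, a_{j_i}^{(i-1)} + (\alpha_2^i)^2\|\bfv_{j_i}\|_2^4.
\]
Thus the $2\times 2$ linear system for $(\alpha_1^i,\alpha_2^i)$ is solved in $O(1)$. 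Step~$4$ is never carried out on the full matrix $R_i$ (that would cost $O(n^2)$ and is unnecessary), and Step~$5$ updates at most one additional coordinate of the coefficient vector $c_i$, so both are $O(m)$ at worst.

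Combining these, the per-iteration cost is $O(mn)$ dominated by forming the inner products $\bfv_j^\top\bfv_{j_i}$, and with $\lceil n/\epsilon^2\rceil$ iterations plus the $O(mn)$ preprocessing, the total running time is $O(mn^2/\epsilon^2)$. The only real subtlety, and the step I would be most careful about, is making explicit that we never need to form or update the matrices $L_i$ or $R_i$ themselves, only the scalar sufficient statistics $\tr(L_i),\|L_i\|_F^2,\{a_j^{(i)}\}$; this is exactly the ``economical'' trick inherited from the OR1MP algorithm of \cite{wang2015} that Algorithm~\ref{alg-ecoIso} is designed to exploit.
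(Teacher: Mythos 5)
Your proposal is correct and follows essentially the same route as the paper's proof: both rely on the decomposition $\bfv_j^\top R_{i-1}\bfv_j = \|\bfv_j\|_2^2 - \bfv_j^\top L_{i-1}\bfv_j$ with the incremental update $\bfv_j^\top L_i\bfv_j = \alpha_1^i\,\bfv_j^\top L_{i-1}\bfv_j + \alpha_2^i(\bfv_j^\top\bfv_{j_i})^2$ to make Step 2 cost $O(mn)$ per iteration, and both maintain the scalar Gram quantities $\|L_{i-1}\|_F^2$, $\langle L_{i-1},\bfv_{j_i}\bfv_{j_i}^\top\rangle$, $\tr(L_{i-1})$, $\|\bfv_{j_i}\|_2^2$ incrementally so that the $2\times 2$ system in Step 3 is solved in $O(1)$ time after negligible bookkeeping. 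Your remark that the full matrices $L_i,R_i$ need never be formed is a useful clarification of the implementation (the paper's Step 4 writes the matrix update, but this $O(n^2)$-per-iteration cost is dominated by $O(mn)$ whenever $m\ge n$ and so does not affect the stated bound); otherwise the argument matches the paper's.
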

\begin{proof}
The running time of Algorithm~\ref{alg-ecoIso} is dominated by Steps $2$ and $3$.
We first show that the total cost of Step $2$ is $O(mn^2/\epsilon^2)$.
In order to find the index $j_{i}$, by Step $2$ for all $\bfv\in\{\bfv_1, \bfv_2,
\ldots, \bfv_m\}$, one has to compute
$$
\begin{array}{ll}
\bfv^\top R_{i-1} \bfv&=\bfv^\top(I_n-L_{i-1})\bfv=\|\bfv\|_2^2-\bfv^\top L_{i-1}\bfv
\\&=\|\bfv\|_2^2-\bfv^\top (\alpha_1^{i-1}L_{i-2}+\alpha_2^{i-1} \bfv_{j_{i-1}}\bfv_{j_{i-1}}^\top)\bfv
\\
&=\|\bfv\|_2^2-\alpha_1^{i-1}\bfv^\top L_{i-2}\bfv-\alpha_2^{i-1}(\bfv_{j_{i-1}}^\top \bfv)^2.
\end{array}
$$
In the first iteration, we need $O(mn)$ flops to compute all $\|\bfv\|_2^2$ with $\bfv\in V$. For $i\geq 2$, since we have already computed $\bfv^\top L_{i-2}\bfv$ in the $(i-2)$-th iteration, so in the $i$-th iteration one only has to compute $\bfv_{j_{i-1}}^\top \bfv$ which need $O(mn)$ flops per-iteration.
After $\lceil\frac{n}{\epsilon^2}\rceil$ iterations, the total cost would be
$O(mn^2/\epsilon^2)$.

Next we will show that the
total cost of Step $3$ is $O(n^2/\epsilon^2)$. Indeed, each iteration needs to solve
a $2\times 2$ linear system $B_i\alpha^i=b_i$ with
$$
B_i=\left(
  \begin{array}{cc}
    \langle L_{i-1},L_{i-1}\rangle & \langle L_{i-1},\bfv_{j_i}\bfv_{j_i}^\top\rangle \\
    \langle\bfv_{j_i}\bfv_{j_i}^\top, L_{i-1}\rangle & \|\bfv_{j_i}\|^4_2 \\
  \end{array}
\right)\ \ \mbox{and}\ \ b_i=\left(\begin{array}{cc}
                  \langle I_n,L_{i-1}\rangle \\
                  \langle I_n,\bfv_{j_i}\bfv^\top_{j_i}\rangle
                \end{array}\right),
$$
 which can be efficiently solved if $B_i$ and $b_i$ are given. Note that $\langle L_{i-1},\bfv_{j_i}\bfv_{j_i}^\top\rangle$ and $\|\bfv_{j_i}\|_2$ had been already computed in Step $2$, and $\langle I_n,L_{i-1}\rangle$ can be computed in $O(n)$ flops. Next, we show that $ \langle L_{i-1},L_{i-1}\rangle$ can also be computed efficiently.  Indeed, according to Step $4$ of Algorithm \ref{alg-ecoIso}, we have
$$
\langle L_{i-1},L_{i-1}\rangle=(\alpha_1^{i-1})^2\langle L_{i-2},L_{i-2}\rangle+2\alpha_1^{i-1}\alpha_2^{i-1}\langle L_{i-2},\bfv_{j_{i-1}}\bfv_{j_{i-1}}^\top\rangle+(\alpha_2^{i-1})^2\|\bfv_{j_{i-1}}\|_2^4
$$
As we have already computed $\langle L_{i-2},L_{i-2}\rangle,\langle L_{i-2},\bfv_{j_{i-1}}\bfv_{j_{i-1}}^\top\rangle$ and $\|\bfv_{j_{i-1}}\|_2$, so we can compute $\langle L_{i-1},L_{i-1}\rangle$ in $O(1)$ time.
 Hence the cost in Step $3$ is $O(n^2/\epsilon^2)$ for total iteration number $\lceil\frac{n}{\epsilon^2}\rceil$.
\end{proof}

\begin{remark}

The computational time for Algorithm~\ref{alg-ecoIso} is  $O(mn^2/\epsilon^2)$ which is a good
improvement to the twice Ramanujan sparifiers in
\cite{BSS14} based on ``barrier'' potential function to guide the choice of indices as
its computational time is $O(mn^3/\epsilon^2)$. We remark that
the computational time can be reduced if massive
parallel processors are used. For example,  if a GPU with $m$ processes is used,
the computational time will be $O(n^2)$.
In addition, if $\bfv_i$ are sparse vectors, then the computational time can also be reduced.
For example, if $\|\bfv_i\|_0= O(\log n)$, the computational time will be $O(mn\log n)$.
\end{remark}

\subsection{Nonnegativity of $L_i$ \label{sec3.2}}
We next explain that $L_{i}$ obtained at each step in Algorithm~\ref{alg-ecoIso} has nonnegative coefficients. Indeed, we have the following theorem.
\begin{theorem}
\label{mjlai06122020}
Let $L_i$ be the symmetric matrix obtained from Algorithm~\ref{alg-ecoIso}.
Then
\begin{equation}
\label{Liformula}
L_i =\sum_{\ell=1}^i c_i(j_\ell) \bfv_{j_\ell}\bfv_{j_\ell}^\top
\end{equation}
 with $c_i(j_\ell)\ge 0, \ell=1,\cdots, i$ for any $i\geq 1$.
\end{theorem}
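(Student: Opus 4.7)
My plan is to prove both the representation formula and the nonnegativity simultaneously by induction on $i$, strengthening the hypothesis to also include that $L_{i-1}$ is positive semidefinite. The base case $i=1$ is easy: since $L_0=0$, Step 3 reduces to a scalar problem yielding $\alpha_2^1 = 1/\|\bfv_{j_1}\|_2^2 > 0$, and the update gives $L_1 = \alpha_2^1 \bfv_{j_1}\bfv_{j_1}^\top$. For the inductive step, note that $L_i = \alpha_1^i L_{i-1} + \alpha_2^i \bfv_{j_i}\bfv_{j_i}^\top$ together with the coefficient update in Step 5 immediately yields the claimed form, so the entire task is to show $\alpha_1^i \ge 0$ and $\alpha_2^i \ge 0$. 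Setting $a = \|L_{i-1}\|_F^2$, $d = \tr(L_{i-1})$, $b = \bfv_{j_i}^\top L_{i-1}\bfv_{j_i}$, $e = \|\bfv_{j_i}\|_2^2$, the normal equations governing Step 3 are $a\alpha_1 + b\alpha_2 = d$ and $b\alpha_1 + e^2\alpha_2 = e$.

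The argument hinges on two structural observations. First, the first-order optimality conditions at iteration $i-1$ give $\langle R_{i-1}, L_{i-2}\rangle = 0$ and $\bfv_{j_{i-1}}^\top R_{i-1}\bfv_{j_{i-1}} = 0$; since $L_{i-1}$ is a linear combination of $L_{i-2}$ and $\bfv_{j_{i-1}}\bfv_{j_{i-1}}^\top$, this forces $\langle R_{i-1}, L_{i-1}\rangle = 0$, i.e., $d = a$. In particular, Cauchy--Schwarz $d = \langle I_n, L_{i-1}\rangle \le \sqrt{n}\,\|L_{i-1}\|_F = \sqrt{nd}$ yields $d \le n$. Second, the isotropy $\sum_k \bfv_k\bfv_k^\top = I_n$ implies $\sum_k \bfv_k^\top R_{i-1}\bfv_k = \tr(R_{i-1}) = n - d \ge 0$, so the greedy maximum satisfies $e - b = \bfv_{j_i}^\top R_{i-1}\bfv_{j_i} \ge (n-d)/m \ge 0$. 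Substituting $d = a$ into the normal equations, Cramer's rule gives
\[
\alpha_2^i = \frac{a(e-b)}{ae^2 - b^2} \ge 0, \qquad \alpha_1^i = 1 - \frac{b}{a}\,\alpha_2^i.
\]
The first sign is immediate from $e \ge b$. For the second, we need $b\alpha_2^i \le a$, which simplifies to $b \le ae$; this follows directly from the PSD property of $L_{i-1}$ (by the inductive hypothesis), because $b \le \lambda_{\max}(L_{i-1})\cdot e \le \tr(L_{i-1})\cdot e = ae$.

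The main conceptual obstacle is recognizing that the optimality of the previous iteration delivers for free the identity $\tr(L_{i-1}) = \|L_{i-1}\|_F^2$, which is precisely the bridge that converts the elementary PSD estimate $\lambda_{\max} \le \tr$ into the sharp bound $b \le ae$ needed to control $\alpha_1^i$. Without this identity, the greedy maximization condition $e \ge b$ alone is not strong enough to force $\alpha_2^i \ge 0$ in general, and a direct attack on the sign via the spectrum of $L_{i-1}$ breaks down. A minor technicality is the degenerate case where $L_{i-1}$ is a positive scalar multiple of $\bfv_{j_i}\bfv_{j_i}^\top$, making $ae^2 - b^2 = 0$; in that case, any minimizer with $\alpha_1^i = 0$ and $\alpha_2^i \ge 0$ works, and the inductive claim is trivially preserved.
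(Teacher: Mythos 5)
Your proof is correct and rests on the same backbone as the paper's: induction, Cramer's rule on the $2\times 2$ normal equations, and the crucial identity $\|L_{i-1}\|_F^2 = \langle L_{i-1}, I_n\rangle$ coming from the orthogonality $\langle R_{i-1}, L_{i-1}\rangle = 0$. Where you differ is in the two sign checks. For $\alpha_2^i \ge 0$ the paper observes directly that $\bfv_{j_{i-1}}^\top R_{i-1}\bfv_{j_{i-1}} = 0$ from the previous normal equations, so the greedy maximum at step $i$ is automatically nonnegative; you instead bound the greedy maximum below by the average $\tr(R_{i-1})/m = (n-d)/m$ and invoke $d \le n$ (via Cauchy--Schwarz plus $a=d$). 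Both are valid; the paper's is marginally more direct. For $\alpha_1^i \ge 0$ the paper expands $L_{i-1}$ as a sum of rank-one matrices with nonnegative weights and applies Cauchy--Schwarz termwise to get $b \le e\sum_k c_{i-1}(j_k)\|\bfv_{j_k}\|_2^2 = ed = ae$; your route $b \le \lambda_{\max}(L_{i-1})e \le \tr(L_{i-1})e = ae$ obtains the identical inequality from the PSD property, arguably more cleanly since it avoids the explicit rank-one bookkeeping. Your remark on the degenerate case $ae^2 - b^2 = 0$ matches the paper's appeal to its Lemma on linear independence (there, if $\bfv_{j_i}\bfv_{j_i}^\top$ depends linearly on $L_{i-1}$, then $R_{i-1}=0$ and the algorithm has already converged), so the treatment is comparable in spirit, though both are somewhat informal about what the minimizer returns when the normal system is singular.
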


In the next subsection, we will show that $L_i$ is the desired approximation of the identity matrix $I_n$ satisfying \eqref{napp2} for $i\geq n/\epsilon^2$. In fact we will show that the residual matrix $R_i=I_n-L_i$ in Algorithm~\ref{alg-ecoIso}
satisfying $\|R_i\|_2\leq \epsilon/\beta$ for $i\geq n/\epsilon^2$.
To prove Theorem \ref{mjlai06122020}, let us begin with the following lemma.
\begin{lemma}
\label{orth} $\langle R_{i}, \bfv_{j_i}\bfv_{j_i}^\top\rangle=0$ and $\langle
R_i,L_{i-1}\rangle=0$.  Hence, $\langle R_i, L_i \rangle=0$.
\end{lemma}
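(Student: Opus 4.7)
\medskip

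\noindent\textbf{Proof plan.} The plan is to recognize Step~$3$ of Algorithm~\ref{alg-ecoIso} as an orthogonal projection (in the Frobenius inner product) of $I_n$ onto the two-dimensional subspace $\mathcal{V}_i:=\mathrm{span}\{L_{i-1},\,\bfv_{j_i}\bfv_{j_i}^\top\}\subset\mathbb{R}^{n\times n}$, and then invoke the usual normal equations. Specifically, since
\[
(\alpha_1^i,\alpha_2^i)=\arg\min_{(\alpha_1,\alpha_2)\in\mathbb{R}^2}\bigl\|I_n-\alpha_1L_{i-1}-\alpha_2\bfv_{j_i}\bfv_{j_i}^\top\bigr\|_F^2,
\]
the function is a smooth convex quadratic in $(\alpha_1,\alpha_2)$, so setting each partial derivative to zero at the minimizer and recalling that $L_i=\alpha_1^iL_{i-1}+\alpha_2^i\bfv_{j_i}\bfv_{j_i}^\top$ and $R_i=I_n-L_i$ yields directly
\[
\langle R_i,L_{i-1}\rangle=0,\qquad \langle R_i,\bfv_{j_i}\bfv_{j_i}^\top\rangle=0.
\]

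\noindent From these two identities the third follows by bilinearity of the Frobenius inner product:
\[
\langle R_i,L_i\rangle=\alpha_1^i\langle R_i,L_{i-1}\rangle+\alpha_2^i\langle R_i,\bfv_{j_i}\bfv_{j_i}^\top\rangle=0.
\]

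\noindent I expect no real obstacle here — the lemma is essentially a restatement of the Pythagorean / normal-equation property of least-squares projection. The only small care to take is to verify that differentiating $\|I_n-\alpha_1L_{i-1}-\alpha_2\bfv_{j_i}\bfv_{j_i}^\top\|_F^2$ with respect to $\alpha_1$ gives $-2\langle I_n-\alpha_1L_{i-1}-\alpha_2\bfv_{j_i}\bfv_{j_i}^\top,L_{i-1}\rangle$, and similarly for $\alpha_2$, using that $\|X\|_F^2=\langle X,X\rangle$ is the inner product associated with $\langle\cdot,\cdot\rangle$. (If $L_{i-1}=0$ at the very first step, the $\alpha_1$-equation is vacuous, but then the identity $\langle R_i,L_{i-1}\rangle=0$ is trivially true, so the conclusion is unaffected.)
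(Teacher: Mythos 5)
Your proposal is correct and follows essentially the same route as the paper: the paper also derives the first two identities as the normal-equation (first-order optimality) conditions of the least-squares minimization in Step~3, and obtains the third by linearity since $L_i$ is a combination of $L_{i-1}$ and $\bfv_{j_i}\bfv_{j_i}^\top$. You simply spell out the details more explicitly, including the correct handling of the degenerate $L_{i-1}=0$ case.
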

\begin{proof}
The first two equations are simply the properties of the minimizer $L_i$. Since $L_i$ is a linear
combination of $\bfv_{j_i}$ and $L_{i-1}$, thus we have the last equation.
\end{proof}

\begin{lemma}
\label{independent}
Suppose that $R_{i}\neq 0$ for some $i\geq 1$. Then $\bfv_{j_{i+1}}\bfv_{j_{i+1}}^\top$ is linearly independent of $L_{i}$.
\end{lemma}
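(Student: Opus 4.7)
My plan is to derive a contradiction from linear dependence by exploiting the Frobenius-orthogonality relations in Lemma \ref{orth}, after first showing that the key quantities involved are genuinely nonzero.

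First I would show that under the hypothesis $R_i \neq 0$, the vector chosen in Step 2 at the next iteration satisfies $\bfv_{j_{i+1}}^\top R_i \bfv_{j_{i+1}} > 0$. Using $\langle R_i, L_i\rangle = 0$ from Lemma \ref{orth},
\[
\|R_i\|_F^2 \;=\; \langle R_i, I_n - L_i\rangle \;=\; \langle R_i, I_n\rangle \;=\; \tr(R_i),
\]
so $R_i \neq 0$ forces $\tr(R_i) > 0$. Combined with the isotropic identity $\sum_{k=1}^m \bfv_k\bfv_k^\top = I_n$, this gives
\[
\sum_{k=1}^m \bfv_k^\top R_i \bfv_k \;=\; \langle R_i, I_n\rangle \;=\; \tr(R_i) \;>\; 0,
\]
so at least one term on the left is strictly positive, and the maximizer satisfies $\bfv_{j_{i+1}}^\top R_i \bfv_{j_{i+1}} > 0$. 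In particular $\bfv_{j_{i+1}} \neq 0$, so $\bfv_{j_{i+1}}\bfv_{j_{i+1}}^\top \neq 0$.

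Next I would verify that $L_i \neq 0$ for $i \geq 1$, which is needed for the notion of ``linearly independent'' to be nontrivial. Since Step 3 makes $L_i$ the Frobenius-orthogonal projection of $I_n$ onto $\operatorname{span}(L_{i-1}, \bfv_{j_i}\bfv_{j_i}^\top)$, we have $L_i = 0$ only if $I_n$ is orthogonal to both generators; but $\langle I_n, \bfv_{j_i}\bfv_{j_i}^\top\rangle = \|\bfv_{j_i}\|_2^2 > 0$ (by the argument of the previous paragraph applied one step earlier, using $R_{i-1} \neq 0$, which follows from $R_i \neq 0$ and the recursion), ruling this out.

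With both matrices nonzero, suppose for contradiction that $\bfv_{j_{i+1}}\bfv_{j_{i+1}}^\top$ and $L_i$ are linearly dependent. Then there exists a nonzero scalar $c$ with $\bfv_{j_{i+1}}\bfv_{j_{i+1}}^\top = c\,L_i$. Taking the Frobenius inner product with $R_i$ and applying Lemma \ref{orth} gives
\[
\bfv_{j_{i+1}}^\top R_i \bfv_{j_{i+1}} \;=\; \langle R_i, \bfv_{j_{i+1}}\bfv_{j_{i+1}}^\top\rangle \;=\; c\,\langle R_i, L_i\rangle \;=\; 0,
\]
contradicting the first step. The main obstacle is the mild bookkeeping required to confirm $L_i \neq 0$ (so that ``linear independence'' is the correct notion), which as indicated above reduces to a short induction on $i$ using only that $R_i \neq 0$ implies $R_{i-1} \neq 0$ — otherwise the argument is a one-line consequence of Lemma \ref{orth}.
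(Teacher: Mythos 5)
Your proof is correct and follows essentially the same route as the paper: in both, Lemma \ref{orth} turns the assumed linear dependence into $\langle R_i, \bfv_{j_{i+1}}\bfv_{j_{i+1}}^\top\rangle = 0$, and the isotropic identity $\sum_k \bfv_k\bfv_k^\top = I_n$ together with the $\max$ selection rule then forces $\langle R_i, I_n\rangle = \|R_i\|_F^2 = 0$, a contradiction. You merely package the argument by first proving the strict positivity $\bfv_{j_{i+1}}^\top R_i\bfv_{j_{i+1}} > 0$ (and checking $L_i \neq 0$), whereas the paper derives $R_i = 0$ directly from the dependence assumption; these are contrapositive phrasings of the same computation, and your version is, if anything, a bit more careful about the degenerate cases that ``linearly dependent'' can include.
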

\begin{proof}
Suppose that $\bfv_{j_{i+1}}\bfv_{j_{i+1}}^\top$ is linearly dependent of $L_{i}$. Then there exists a nonzero coefficient $
\theta$, such that $\bfv_{j_{i+1}}\bfv_{j_{i+1}}^\top=\theta L_i$ which impiles
$$\bfv_{j_{i+1}}^\top R_i \bfv_{j_{i+1}} =  \langle R_i, \bfv_{j_{i+1}} \bfv_{j_{i+1}}^\top\rangle =\theta
\langle R_i, L_i\rangle =0,$$
 where the last equality follows from Lemma~\ref{orth}.
We claim that this indicates that $R_{i}=0$. Indeed, by \eqref{positive}, we have $\bfv_{j_{i+1}}^\top R_i \bfv_{j_{i+1}}=0$ which implies
$\bfv^\top R_i \bfv=0$ for all $\bfv\in \{\bfv_1,\cdots, \bfv_n\}$ and hence,  $\langle R_{i}, I_n\rangle=0$
since $\sum_{i\leq m}\bfv_i \bfv_i^\top =I_n$.
Together with $\langle R_i, L_{i}\rangle=0$, we have $\langle R_i, I_n-L_i\rangle=\langle R_i, R_i\rangle=0$, hence $R_{i}=0$. However, this contradicts the assumption that $R_{i}\neq 0$.  This completes the proof.
\end{proof}

Now we are  ready to establish the  nonnegativity of the coefficients of $L_i$, i.e., to prove Theorem \ref{mjlai06122020}.

\begin{proof}[Proof of Theorem \ref{mjlai06122020}] Indeed, it suffices for us to show that $\alpha^1_k,\alpha^2_k\geq 0$ for any $k=1,\ldots,i$. Our discussion is based on induction.
  For
$i=1$, we have $L_{1}= \alpha_2^1 \bfv_{j_1}\bfv_{j_1}^\top$ and $\alpha^1_1=0$.
To see $\alpha_2^1\geq0$, we expand the minimization in Step $3$ of Algorithm~\ref{alg-ecoIso} to have
\begin{eqnarray*}
\|I_n - \alpha \bfv_{j_1}\bfv_{j_1}^\top\|^2_F &=& n - 2\alpha \langle \bfv_{j_1}\bfv_{j_1}^\top, I_n\rangle+ \alpha^2
\langle \bfv_{j_1}\bfv_{j_1}^\top, \bfv_{j_1}\bfv_{j_1}^\top\rangle\cr
&=& n - 2\alpha \|\bfv_{j_1}\|_2^2 +\alpha^2 \|\bfv_{j_1}\|_2^4.
\end{eqnarray*}
Hence $\alpha_2^1$ satisfies $-2\|\bfv_{j_1}\|_2^2 + 2\alpha \|\bfv_{j_1}\|_2^4=0$ or $\alpha_2^1=1/\|\bfv_{j_1}\|_2^2 >0$.

We now assume that $L_{i-1} =\displaystyle \sum_{k=1}^{i-1} c_{i-1}(j_k) \bfv_{j_k}\bfv_{j_k}^\top$
with nonnegative coefficients $c_{i-1}(j_k)$ for $k=1, \cdots, i-1$.
Now let us take a look at the coefficients of $
L_{i}$ from Step $4$ of Algorithm~\ref{alg-ecoIso}.
The coefficients $\alpha_1^i, \alpha_2^i$ satisfy the following system of linear equations:
\begin{eqnarray}
\label{keyeq}
\alpha_1 \|L_{i-1}\|_F^2 + \alpha_2 \langle L_{i-1}, \bfv_{j_i}\bfv_{j_i}^\top \rangle
&=& \sum_{k=1}^{i-1}c_{i-1}(j_k)\|\bfv_{j_k}\|_2^2,\cr
\alpha_1 \langle L_{i-1}, \bfv_{j_i}\bfv_{j_i}^\top \rangle + \alpha_2  \|\bfv_{j_i}\|_2^4 &=& \|\bfv_{j_i}\|_2^2.
\end{eqnarray}
If $\bfv_{j_i}$ is linearly dependent of $L_{i-1}$,  by Lemma~\ref{independent},
we know $R_i=0$ which means that we have already found the desired subset.
Otherwise, it is clear that the coefficient matrix is nonsingular since the determinant is
$$
D=\|L_{i-1}\|_F^2 \|\bfv_{j_i}\|_2^4 - ( \langle L_{i-1}, \bfv_{j_i}\bfv_{j_i}^\top \rangle)^2>0,
$$
 where we have used Cauchy-Schwarz inequality as $\bfv_{j_i}\bfv_{j_i}^\top$ is linearly independent of $L_{i-1}$ by
Lemma~\ref{independent}. Using the Cramer's rule, we see that
\begin{eqnarray}
\label{keyformula}
\alpha^i_1 &=& D^{-1}\bigg( \|\bfv_{j_i}\|_2^4 \sum_{k=1}^{i-1}c_{i-1}(j_k)\|\bfv_{j_k}\|_2^2 -
\langle L_{i-1}, \bfv_{j_i}\bfv_{j_i}^\top \rangle \|\bfv_{j_i}\|_2^2 \bigg),\cr
\alpha^i_2 &=& D^{-1}\bigg( -\langle L_{i-1}, \bfv_{j_i}\bfv_{j_i}^\top \rangle
\sum_{k=1}^{i-1}c_{i-1}(j_k)\|\bfv_{j_k}\|_2^2+ \|L_{i-1}\|_F^2  \|\bfv_{j_i}\|_2^2 \bigg).
\end{eqnarray}
It is easy to see that
$$\alpha_1^i= D^{-1}\|\bfv_{j_i}\|_2^2 \sum_{k=1}^{i-1}c_{i-1}(j_k)
\big(\|\bfv_{j_i}\|_2^2 \|\bfv_{j_k}\|_2^2 -\langle \bfv_{j_k}\bfv_{j_k}, \bfv_{j_i}\bfv_{j_i}^\top\rangle\big)\ge 0.$$
It remains to  show that $\alpha_2^i$ is nonnegative.
Let us take a close look at the right-hand side of $\alpha_2^i$ in \eqref{keyformula}. By Lemma~\ref{orth}, we always have
\begin{equation}
\label{ugakey}
\langle I_n- L_{i-1}, L_{i-1}\rangle=0 \hbox{ or } \|L_{i-1}\|_F^2= \langle L_{i-1}, I_n\rangle.
\end{equation}
Thus, we have
\begin{eqnarray*}
D \alpha_2^i&=& -\langle L_{i-1}, \bfv_{j_i}\bfv_{j_i}^\top \rangle
\sum_{k=1}^{i-1}c_{i-1}(j_k)\|\bfv_{j_k}\|_2^2+ \|L_{i-1}\|_F^2  \|\bfv_{j_i}\|_2^2\cr
&=& -\langle L_{i-1}, \bfv_{j_i}\bfv_{j_i}^\top \rangle\|L_{i-1}\|_F^2 + \|L_{i-1}\|_F^2
\langle I_n, \bfv_{j_i}\bfv_{j_i}^\top\rangle \cr
&=& \bfv_{j_i}^\top R_{i-1} \bfv_{j_i} \|L_{i-1}\|_F^2 ,
\end{eqnarray*}
where we have used the fact (\ref{ugakey}). So $\alpha_2^i$ will be nonnegative if $\bfv_{j_i}^\top R_{i-1} \bfv_{j_i}\ge 0$.

We claim that this is true for all $i\ge 1$.
For $i=1$,  we have $\bfx^\top R_0 \bfx\ge 0$ since $R_0=I_n$.
For $i=2$, it is easy to see that $c_1(j_1)= 1/\|\bfv_{j_1}\|_2^2$. We have $$\bfv_{j_2}^\top R_{1} \bfv_{j_2}
= \bfv_{j_2}^\top \bfv_{j_2} -  \frac{(\bfv_{j_1}^\top \bfv_{j_2})^2}{\|\bfv_{j_1}\|_2^2}\ge 0.$$
We show that $\bfv_{j_{i+1}}^\top R_{i} \bfv_{j_{i+1}}\ge 0$ for all $i\ge 2$. In fact, it is easy to see that
$$
 \bfv_{j_i}^\top R_{i} \bfv_{j_i}= \|\bfv_{j_i}\|_2^2
- \alpha_1^i \bfv_{j_i}^\top L_{i-1}\bfv_{j_i}
-\alpha_2^i \|\bfv_{j_i}\|_2^4 = 0
$$
by using (\ref{keyeq}).
By the definition of $\bfv_{j_{i+1}}$, we have
 $$\bfv_{j_{i+1}}^\top R_{i} \bfv_{j_{i+1}} = \max_{\bfv\in V}
\bfv^\top R_i \bfv \ge \bfv_{j_i}^\top R_{i} \bfv_{j_i}= 0.$$
By (\ref{ugakey}), we have $\alpha_2^i\geq 0$.
We have therefore proved Theorem \ref{mjlai06122020}.
\end{proof}

\subsection{Convergence of Algorithm~\ref{alg-ecoIso}}
To establish the convergence of Algorithm~\ref{alg-ecoIso}, let us first  introduce
some notations and concepts. Set
$\gamma:=\frac{\min_k\|\bfv_k\|_2^2}{\max_{k}\|\bfv_k\|_2^2}$. Since
$$
m\beta \max_k\|\bfv_k\|_2^2 \leq \sum_{k=1}^m \|\bfv_k\|_2^2=n,
$$
we have
\begin{equation}
\label{anotherestimate}
\max_k\|\bfv_k\|_2^2\leq \frac{1}{\gamma}\frac{n}{m}.
\end{equation}
Next define
$$\mathcal{F}=\bigg\{ S: S\subset \{1, \cdots, m\}, \sum_{i\in S} s_i\bfv_i \bfv_i^\top= I_n, s_i>0, i\in S\big\}$$
to be the feasible set of all possible isotropic subsets in $V$.  It is clear that
$\mathcal{F}$ is not  empty since the whole set $\{1, \cdots, m\}\in \mathcal{F}$.  We  set
\begin{equation}
\label{KSsize}
K:=\min_{S\in \mathcal{F}}K_S
\end{equation}
 where $K_S= \sum_{i\in S} s_i. $ Since $\{1, \cdots, m\}\in \mathcal{F}$, $K\le m$.

To establish the convergence of Algorithm~\ref{alg-ecoIso}, we begin with the following two lemmas.
\begin{lemma}
\label{eco2}
$\|R_i\|_F^2\le \|R_{i-1}\|_F^2- \frac{(\bfv_{j_i}^\top R_{i-1}\bfv_{j_i})^2}{\|\bfv_{j_i}\|_2^4}$ for all $i\geq 1$.
\end{lemma}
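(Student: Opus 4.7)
The plan is to exploit the fact that $(\alpha_1^i,\alpha_2^i)$ are chosen in Step~3 as the \emph{minimizers} of $\|I_n - \alpha_1 L_{i-1} - \alpha_2 \bfv_{j_i}\bfv_{j_i}^\top\|_F^2$. Hence the resulting value $\|R_i\|_F^2$ is bounded above by the value produced by \emph{any} fixed admissible pair $(\alpha_1,\alpha_2)$; the strategy is to pick a convenient suboptimal pair that already gives the desired bound.

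Concretely, I would choose $\alpha_1 = 1$, so that the candidate residual equals $R_{i-1} - \alpha_2\bfv_{j_i}\bfv_{j_i}^\top$, and then optimize over the single scalar $\alpha_2$. Expanding,
\begin{equation*}
\|R_{i-1} - \alpha_2 \bfv_{j_i}\bfv_{j_i}^\top\|_F^2
= \|R_{i-1}\|_F^2 - 2\alpha_2 \langle R_{i-1}, \bfv_{j_i}\bfv_{j_i}^\top\rangle + \alpha_2^2 \|\bfv_{j_i}\|_2^4,
\end{equation*}
and since $\langle R_{i-1}, \bfv_{j_i}\bfv_{j_i}^\top\rangle = \bfv_{j_i}^\top R_{i-1} \bfv_{j_i}$, the minimum over $\alpha_2$ is attained at $\alpha_2^\star = (\bfv_{j_i}^\top R_{i-1}\bfv_{j_i})/\|\bfv_{j_i}\|_2^4$, giving the value
\begin{equation*}
\|R_{i-1}\|_F^2 - \frac{(\bfv_{j_i}^\top R_{i-1}\bfv_{j_i})^2}{\|\bfv_{j_i}\|_2^4}.
\end{equation*}

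The final step is to invoke the optimality of $(\alpha_1^i,\alpha_2^i)$ over the full two-parameter family: the pair $(1,\alpha_2^\star)$ is a feasible choice, so
\begin{equation*}
\|R_i\|_F^2 \le \|R_{i-1}\|_F^2 - \frac{(\bfv_{j_i}^\top R_{i-1}\bfv_{j_i})^2}{\|\bfv_{j_i}\|_2^4},
\end{equation*}
which is exactly the claim. There is no real obstacle here; the only subtlety is recognizing that comparing against a suboptimal $(1,\alpha_2^\star)$ suffices, rather than trying to solve the full $2\times 2$ system from Step~3 explicitly and then manipulating the expression into the stated form.
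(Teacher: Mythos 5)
Your proof is correct and is essentially the paper's own argument: both restrict to $\alpha_1=1$ (equivalently, compare the two-parameter minimum against $\min_{\alpha_2}\|R_{i-1}-\alpha_2\bfv_{j_i}\bfv_{j_i}^\top\|_F^2$) and then optimize the single scalar $\alpha_2$ explicitly to obtain the stated bound. Nothing further is needed.
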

\begin{proof}
For all $i\geq 1$, we have
$$
\begin{aligned}
\|R_i\|_F^2&=\min\limits_{\alpha_1, \alpha_2}
\|R_{i-1}-(\alpha_1-1)L_{i-1}-\alpha_2\bfv_{j_i}\bfv_{j_i}^\top \|^2_F\\
&\leq \min_{\alpha_2}\|R_{i-1}-\alpha_2\bfv_{j_i}\bfv_{j_i}^\top\|_F^2=\|R_{i-1}\|_F^2-
\frac{(\bfv_{j_i}^\top R_{i-1}\bfv_{j_i})^2}{\|\bfv_{j_i}\|_2^4}.
\end{aligned}
$$
\end{proof}

Next we need a classic  elementary lemma. For the completeness, we include an elementary induction proof here.
\begin{lemma}[DeVore and Temlyakov, 1996 \cite{1996Some}]
\label{DT96}
Suppose we have two sequences of  nonnegative numbers $\{a_k, k\ge 1\}$ and $\{\beta_k, k\ge 1\}$
satisfying $\beta_k>0$ and $a_1= 1$ and
$$a_{k+1}\le a_k (1- a_k \beta_k), \quad \text{ for any } k\ge 1.$$
Then
$$
a_{i+1}\le \frac{1}{1+\sum_{k=1}^i \beta_k}, \quad \text{ for any } i\ge 0.
$$
\end{lemma}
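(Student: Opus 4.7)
\medskip

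\noindent\textbf{Proof proposal.} The plan is to reduce the recursive upper bound to an additive telescoping inequality by inverting both sides, a standard trick that is equivalent to an induction on $i$. Concretely, I would argue that if $a_{k+1}$ and $a_k$ are both strictly positive, then the recursion $a_{k+1}\le a_k(1-a_k\beta_k)$ forces $a_k\beta_k<1$ (otherwise the right-hand side is $\le 0$, contradicting $a_{k+1}>0$), so that taking reciprocals gives
\begin{equation*}
\frac{1}{a_{k+1}} \;\ge\; \frac{1}{a_k(1-a_k\beta_k)} \;\ge\; \frac{1+a_k\beta_k}{a_k} \;=\; \frac{1}{a_k}+\beta_k,
\end{equation*}
where the middle step uses the elementary identity $(1-x)(1+x)=1-x^2\le 1$ for $0\le x<1$, hence $1/(1-x)\ge 1+x$. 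Iterating this telescoping inequality from $k=1$ up to $k=i$ and using $a_1=1$ yields $1/a_{i+1}\ge 1+\sum_{k=1}^i\beta_k$, which is the desired bound.

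To make this rigorous, I first need to dispose of the degenerate cases so that the reciprocals above are well defined. If $a_{k_0}=0$ for some $k_0\le i+1$, then the recursion together with the nonnegativity hypothesis gives $0\le a_{k_0+1}\le 0$, so $a_j=0$ for every $j\ge k_0$, and in particular $a_{i+1}=0$ satisfies the claimed bound trivially. Thus I may assume $a_k>0$ for $1\le k\le i+1$, and the argument above applies step by step. The base case $i=0$ is just $a_1=1=1/(1+0)$, which confirms that the telescoped bound is consistent with the hypothesis.

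I expect the only real subtlety to be bookkeeping around $a_k\beta_k$: one must rule out $a_k\beta_k\ge 1$ before inverting, and this follows automatically from the combination of the assumed recursion and the nonnegativity of the sequence. Once this observation is in place, the rest of the proof is mechanical. I would phrase it as an induction on $i$: the inductive hypothesis $1/a_{i+1}\ge 1+\sum_{k=1}^{i}\beta_k$ together with the one-step reciprocal inequality $1/a_{i+2}\ge 1/a_{i+1}+\beta_{i+1}$ immediately yields $1/a_{i+2}\ge 1+\sum_{k=1}^{i+1}\beta_k$, closing the induction and giving the bound stated in the lemma.
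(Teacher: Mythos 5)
Your proof is correct and follows essentially the same route as the paper: take reciprocals, use $1/(1-x)\ge 1+x$ to get the one-step inequality $1/a_{k+1}\ge 1/a_k+\beta_k$, and telescope/induct. You are slightly more careful than the paper in explicitly handling the case $a_k=0$ and in justifying $a_k\beta_k<1$ before inverting, but the core argument is identical.
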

\begin{proof}
We prove the conclusion by induction. Suppose that we have
\[
a_{i}\le \frac{1}{1+\sum_{k=1}^{i-1} \beta_k}
\]
for some $i\geq 1$. Then
$$
a_{i+1}^{-1} \geq a_i^{-1} (1- a_i \beta_i)^{-1}\geq a_i^{-1}(1+a_i\beta_i)=a_i^{-1}+\beta_i
\geq 1+\sum_{k=1}^{i-1}\beta_k+\beta_i,
$$
which implies the conclusion.
\end{proof}

We are finally ready to establish the main result in this section.
\begin{theorem}
\label{mainresult41} Let $K$ be the size of subset given in \eqref{KSsize}. Then each
step in Algorithm~\ref{alg-ecoIso} satisfies
\begin{equation}
\label{theorem36-1}
\|R_{i+1}\|^2_F\leq \frac{K^2}{K^2/n +\sum_{k=1}^i 1/\|\bfv_{j_k}\|_2^4}, \ \ \forall i\geq 1.
\end{equation}
\end{theorem}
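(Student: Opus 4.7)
The plan is to combine the per-step decrement of Lemma~\ref{eco2} with a universal lower bound on the greedy quantity $\bfv_{j_i}^\top R_{i-1}\bfv_{j_i}$, and then iterate via the DeVore--Temlyakov Lemma~\ref{DT96}. If at any stage $R_{i-1}=0$ the desired inequality is trivial, so throughout I assume $R_{i-1}\neq 0$.

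\textbf{Step 1: Lower bound on the greedy gain.} By Lemma~\ref{orth} one has $\langle R_{i-1},L_{i-1}\rangle=0$, hence
\[
\|R_{i-1}\|_F^2=\langle R_{i-1},I_n-L_{i-1}\rangle=\langle R_{i-1},I_n\rangle.
\]
For any feasible $S\in\mathcal F$, expand $I_n=\sum_{j\in S}s_j\bfv_j\bfv_j^\top$ with $s_j>0$, giving
\[
\|R_{i-1}\|_F^2=\sum_{j\in S}s_j\,\bfv_j^\top R_{i-1}\bfv_j\leq K_S\cdot\max_{\bfv\in V}\bfv^\top R_{i-1}\bfv=K_S\,\bfv_{j_i}^\top R_{i-1}\bfv_{j_i},
\]
where the last equality uses the greedy selection rule \eqref{positive}. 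Minimizing over $S$ yields
\[
\bfv_{j_i}^\top R_{i-1}\bfv_{j_i}\geq\frac{\|R_{i-1}\|_F^2}{K}.
\]

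\textbf{Step 2: Plug into the one-step decrement.} Substituting this into Lemma~\ref{eco2} gives the key recursion
\[
\|R_i\|_F^2\leq \|R_{i-1}\|_F^2\left(1-\frac{\|R_{i-1}\|_F^2}{K^2\|\bfv_{j_i}\|_2^4}\right),\qquad i\geq 1.
\]

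\textbf{Step 3: Apply Lemma~\ref{DT96}.} Define $a_k:=\|R_{k-1}\|_F^2/n$ for $k\geq 1$ and $\beta_k:=n/(K^2\|\bfv_{j_k}\|_2^4)>0$. Since $R_0=I_n$, we have $a_1=\|I_n\|_F^2/n=1$, and the recursion above reads $a_{k+1}\leq a_k(1-a_k\beta_k)$. Lemma~\ref{DT96} then gives, for every $i\geq 0$,
\[
\frac{\|R_i\|_F^2}{n}=a_{i+1}\leq\frac{1}{1+\sum_{k=1}^{i}\beta_k}=\frac{K^2}{K^2+n\sum_{k=1}^i 1/\|\bfv_{j_k}\|_2^4}.
\]
Multiplying by $n$ and rearranging yields $\|R_i\|_F^2\leq K^2/\bigl(K^2/n+\sum_{k=1}^i 1/\|\bfv_{j_k}\|_2^4\bigr)$, which is in fact slightly stronger than the stated bound (the sum already contains $i$ terms at $\|R_i\|_F^2$). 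Since each additional summand only enlarges the denominator, dropping the $(i+1)$-th term gives the statement of the theorem.

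\textbf{Main obstacle.} The only delicate point is Step~1: producing a lower bound on $\bfv_{j_i}^\top R_{i-1}\bfv_{j_i}$ that does not rely on any spectral control of $R_{i-1}$, only on the feasibility of the target $I_n$. The identity $\|R_{i-1}\|_F^2=\langle R_{i-1},I_n\rangle$ coming from the orthogonality Lemma~\ref{orth} is what turns the greedy maximum into a universal progress guarantee with constant $K$ from \eqref{KSsize}; once this is in hand, the rest is a routine application of the DeVore--Temlyakov recursion.
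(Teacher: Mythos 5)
Your proof is correct and follows essentially the same route as the paper: use the orthogonality $\langle R_{i-1},L_{i-1}\rangle=0$ to write $\|R_{i-1}\|_F^2=\langle R_{i-1},I_n\rangle$, expand $I_n$ via a feasible $S$ to lower-bound the greedy gain by $\|R_{i-1}\|_F^2/K$, substitute into Lemma~\ref{eco2}, and close with the DeVore--Temlyakov recursion. Your handling of the indices in the application of Lemma~\ref{DT96} (setting $a_k=\|R_{k-1}\|_F^2/n$ so that $a_1=1$) is in fact slightly more careful than the paper's, and you correctly observe that it yields a marginally stronger bound from which the stated one follows by dropping a summand.
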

\begin{proof}
By Lemma~\ref{orth}, we have
 $$
 \|R_i\|_F^2 =\langle R_i, R_i\rangle= \langle R_i, I_n\rangle
 =  \sum_{k\in S} s_k  \bfv_k^\top R_i \bfv_k.
 $$
 where $S\in \mathcal{F}$ satisfying $ K=K_S$.

Since  $\bfv_{j_{i+1}}^\top R_i \bfv_{j_{i+1}}\ge 0$,  by Step $2$ of
Algorithm~\ref{alg-ecoIso}, we have
 $$
 \|R_i\|_F^2 \le K \bfv_{j_{i+1}}^\top R_i\bfv_{j_{i+1}} .
 $$
 It follows from Lemma~\ref{eco2} that
$$\begin{array}{ll}
\|R_i\|_F^2 &\leq
\|R_{i-1}\|_F^2- \frac{(\bfv_{j_i}^\top R_{i-1}\bfv_{j_i})^2}{\|\bfv_{j_i}\|_2^4}
\\
&\leq
\|R_{i-1}\|_F^2 -\frac{ \|R_{i-1}\|_F^4}{\|\bfv_{j_i}\|_2^4K^2}
\\
&=
\|R_{i-1}\|_F^2\big(1 - \frac{1}{\|\bfv_{j_i}\|_2^4 K^2} \|R_{i-1}\|_F^2\big).
\end{array}
$$
Taking $a_k=\frac{\|R_k\|_F^2}{n}$ and $\beta_k= \frac{n}{\|\bfv_{j_k}\|_2^4 K^2}$ in
Lemma \ref{DT96}, we arrive at the conclusion.
\end{proof}

\begin{corollary}
\label{case1}
Suppose that $K$ satisfies $K\le m/\sqrt{n}$.  Then when $i\geq \frac{n}{\epsilon^2}$, we have
$$
\|R_{i+1}\|_F\leq \epsilon/\gamma,
$$
where $\gamma$ is defined in (\ref{anotherestimate}).
\end{corollary}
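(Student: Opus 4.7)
The plan is to combine Theorem~\ref{mainresult41} with the elementary bound on $\max_k\|\bfv_k\|_2^2$ stated in \eqref{anotherestimate}, and then use the two hypotheses $i\geq n/\epsilon^2$ and $K\leq m/\sqrt{n}$ to simplify the resulting expression.

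First I would rewrite the bound from Theorem~\ref{mainresult41} as
\[
\|R_{i+1}\|_F^2 \;\leq\; \frac{K^2}{K^2/n+\sum_{k=1}^i 1/\|\bfv_{j_k}\|_2^4}
\;\leq\; \frac{K^2}{\sum_{k=1}^i 1/\|\bfv_{j_k}\|_2^4},
\]
dropping the nonnegative $K^2/n$ term in the denominator. The point of this step is that it reduces the corollary to lower-bounding $\sum_{k=1}^i 1/\|\bfv_{j_k}\|_2^4$ from below and upper-bounding $K$ from above, which is exactly where the two hypotheses enter.

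Next I would use \eqref{anotherestimate}, which gives $\max_k\|\bfv_k\|_2^2 \leq n/(\gamma m)$, and hence $\|\bfv_{j_k}\|_2^4\leq n^2/(\gamma^2 m^2)$ for every $k$. Consequently,
\[
\sum_{k=1}^i \frac{1}{\|\bfv_{j_k}\|_2^4} \;\geq\; \frac{i\,\gamma^2 m^2}{n^2} \;\geq\; \frac{\gamma^2 m^2}{\epsilon^2 n},
\]
where the last inequality uses $i\geq n/\epsilon^2$. Plugging into the previous display gives
\[
\|R_{i+1}\|_F^2 \;\leq\; \frac{K^2 \epsilon^2 n}{\gamma^2 m^2}.
\]

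Finally, the hypothesis $K\leq m/\sqrt{n}$ yields $K^2\leq m^2/n$, so
\[
\|R_{i+1}\|_F^2 \;\leq\; \frac{(m^2/n)\,\epsilon^2 n}{\gamma^2 m^2} \;=\; \frac{\epsilon^2}{\gamma^2},
\]
and taking square roots gives the desired conclusion $\|R_{i+1}\|_F\leq \epsilon/\gamma$. The argument is essentially algebraic once Theorem~\ref{mainresult41} is in hand; the only subtlety is making sure to invoke \eqref{anotherestimate} uniformly over $k$ (which is valid because $\bfv_{j_k}\in V$ for every $k$), so there is no real obstacle to overcome beyond careful bookkeeping of the constants $\gamma$, $K$, $m$, and $n$.
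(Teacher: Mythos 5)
Your proof is correct and follows essentially the same route as the paper: both drop the $K^2/n$ term in the denominator, bound $\sum_k 1/\|\bfv_{j_k}\|_2^4$ from below via \eqref{anotherestimate}, then invoke $i\geq n/\epsilon^2$ and $K\leq m/\sqrt{n}$ in turn. The only superficial difference is the order in which the estimates are chained together; the substance is identical.
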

\begin{proof}
Recall from (\ref{anotherestimate}) we have
$$
\max_k \|\bfv_k\|_2^4 \le \left( \frac{n}{\gamma m}\right)^2.
$$
The right-hand side of (\ref{theorem36-1}) can be estimated as follows:
\begin{eqnarray*}
\frac{K^2}{K^2/n +\sum_{k=1}^i 1/\|\bfv_{j_k}\|_2^4} &\le&
\frac{K^2}{K^2/n+ i/\max_k \|\bfv_k\|_2^4}
\le K^2 \left( \frac{n}{\beta m}\right)^2/i \cr
&\le& K^2 \left( \frac{1}{\beta m}\right)^2 n \epsilon^2 \le \epsilon^2/\gamma^2.
\end{eqnarray*}
It follows that $\|R_{i+1}\|_F \le \epsilon/\gamma$.
\end{proof}

Once  $\|R_{i+1}\|_F\leq \epsilon/\gamma$, we have $\|R_{i+1}\|_2\leq \epsilon/\gamma$
immediately which leads to  the sparse
subset satisfying a desired estimate similar to (\ref{app2}).
Indeed, recalling $R_{i+1}= I_n- L_{i+1}$ with $L_{i+1}$ being
the sparsifier which is given in (\ref{Liformula}), we have
$$
\bfx^\top L_{i+1}\bfx = \bfx^\top \bfx + \bfx^\top (L_{i+1}- I_n)\bfx \le (1+ \epsilon/\gamma)\bfx^\top \bfx.
$$
Similarly, we can have the other hand side of the estimate
$\bfx^\top L_{i+1}\bfx\geq (1- \epsilon/\gamma)\bfx^\top \bfx$.
These estimates are desired estimates when $\gamma=1$. Even $\gamma\not
=1$, these estimates show that $L_i$ is a good sparsifier satisfying the following
\begin{equation}
\label{app6}
(1-\epsilon/\gamma)I_n \preceq  L_{i+1}\preceq (1+ \epsilon/\gamma)I_n,
\end{equation}
where $\epsilon>0$ is small enough such that $1-\epsilon/\gamma>0$.

In general, we do not know when $K\le m/\sqrt{n}$ happens. Instead,  we
have only $K\le m$ as explained above. Assuming $K=m$, using a similar proof of Corollary~\ref{case1} we have
$$
\|R_{i+1}\|_F^2\le n \epsilon^2/\gamma^2 \hbox{ or } \|R_{i+1}\|_F \le \sqrt{n} \epsilon/\gamma
$$
after $i\ge n/\epsilon^2$ iterations.
Since $\|R_{i+1}\|_F \le \sqrt{n} \|R_{i+1}\|_2$, we heuristically expect
$\|R_{i+1}\|_2 \approx  \epsilon/\gamma$
although the this estimate has not been proved when the case $K=m$. Let us present an example $K<m$ to show that
the assumption in Corollary~\ref{case1} can happen.
\begin{example}
Let $\bfe_i\in \mathbb{R}^n$ be the standard unit vector with $1$ on the $i$th entry and zero otherwise, where $i=1, 2,
\cdots, n$. We first choose $ \bfe_i/\sqrt{2}, i=1, \cdots, n$ and then choose $(n-1)$ copies of $\{\bfe_i/\sqrt{2(n-1)}$
to form $V$. Then
$$
I_n=  \sum_{i=1}^n \frac{1}{2}\bfe_i \bfe_i^\top + \sum_{k=1}^{n-1} \sum_{i=1}^n \frac{1}{2(n-1)}\bfe_i \bfe_i^\top.
$$
We can easily see that $m=n^2$ and $K=2n$ which is less than $m/\sqrt{n}$ for $n\ge 4$.
\end{example}

\subsection{Subset selection for general matrix $B$}
We now return to the original problem for finding a sparse positive subset for
general symmetric positive semidefinite matrix $B =\displaystyle \sum_{i=1}^m \bfv_i
\bfv_i^\top$. Since $B$ is symmetric, we can write $B=L D L^\top$ with $D$ being the diagonal matrix.
Indeed, let us first consider the case that $B$ is invertible. Then $D=I_n$ and $L$ can be a low-triangular matrix (Cholesky decomposition), thus
\begin{equation}
\label{newIn}
I_n = \sum_{k=1}^m L^{-1}\bfv_k (L^{-1}\bfv_k)^\top.
\end{equation}
 We remark that the factorization of $B=L L^\top$ needs a computational time
$O(n^3)$ and form the new vectors $L^{-1}\bfv_k, k=1, \cdots, m$ is $O(mn^2)$ which is similar to
the computational complexity  $O(mn^2/\epsilon^2)$ of
Algorithm~\ref{alg-ecoIso} as $m\geq n$. That is,  the order of computational complexity
for a general symmetric positive definite matrix $B$ does not increase.


Next let us consider the case that $B$ is not full rank and assume $\mbox{rank}(B)=r$. We can still find a factorization $B=L D
L^\top$ with $D$ being a diagonal matrix.  Indeed, if $B=A^\top A$ for $A$ of size $m\times n$,
we can use the thin SVD in $O(mr^2)$ time to find $USV^\top$ of rectangular matrix $A$ so that $B= LD L^\top $ with $L=V\in \mathbb{R}^{n\times r}$ and
$D=S^\top S\in \mathbb{R}^{r\times r}$. Note that $L^\top L=V^\top V=I_r$, then we can rewrite $B=\sum\limits_{i=1}^m\bfv_i\bfv_i^\top$ as
\begin{equation}
\label{newIn2}
I_r =  \sum_{k=1}^m S^{-1} V^\top \bfv_k(S^{-1}V^\top \bfv_k)^\top,
\end{equation}
where $I_r$ is the identity matrix of size $r \times r$. The computational cost of $S^{-1} V^\top \bfv_k,k=1,\ldots,m$ is
$O(mnr)$. Applying Algorithm~\ref{alg-ecoIso} to those new vectors $S^{-1} V^\top \bfv_k, k=1, \cdots, m$, the computational
complexity is $O(mr^2/\epsilon^2)$. That is, the total computational cost will not increase as $r\leq n$.

Now we can define $\gamma$ and $K$ similarly as in the subsection above.
Letting ${\bf u}_k=L^{-1}\bfv_k$ or ${\bf u}_k=S^{-1}V^\top\bfv_k$, define
\begin{equation}
\label{newformulae}
\gamma= \frac{\min_k \|{\bf u}_k\|_2}{\max_k \|{\bf u}_k\|_2} \hbox{ and }
K= \min_{S\in\mathcal{F}} \sum_{i\in S} s_i,
\end{equation}
where $\mathcal{F}=\{S:S\subset\{1,\ldots,m\},\sum\limits_{i\in S}s_i{\bf u}_i{\bf u}_i^\top=I,s_i>0,i\in S\}$.
Armed with these $\beta$ and $K$, we are able to prove the following
\begin{theorem}
\label{mainresult4} Suppose that $\bfv_1, \cdots, \bfv_m\in \mathbb{R}^n$ with $B
=\displaystyle \sum_{i=1}^m \bfv_i \bfv_i^\top$. Suppose that $K\le m/\sqrt{\mbox{rank}(B)}$.
Then for any $\epsilon>0$, Algorithm~\ref{alg-ecoIso} can find $s_i\ge 0$ with $|\{i: s_i \not=0\}|\le \hbox{rank}(B)/\epsilon^2$ such that
\begin{equation}
(1- \epsilon/\gamma) B \preceq \sum_{i=1}^m s_i \bfv_i \bfv_i^\top  \preceq (1+\epsilon/\gamma)  B.
\label{app4}
\end{equation}
\end{theorem}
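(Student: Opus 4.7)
The plan is to reduce the general case to the isotropic case already handled by Corollary~\ref{case1} and then pull the resulting spectral inequality back through the factorization of $B$. Following \eqref{newIn2}, I would write the thin SVD $A=U\Sigma V^\top$ of the matrix $A\in\mathbb{R}^{m\times n}$ whose rows are $\bfv_i^\top$, so that $B=A^\top A=V\Sigma^2 V^\top$ with $V\in\mathbb{R}^{n\times r}$ satisfying $V^\top V=I_r$, and $\Sigma\in\mathbb{R}^{r\times r}$ diagonal positive, where $r=\mathrm{rank}(B)$. Defining $\bfu_k:=\Sigma^{-1}V^\top\bfv_k\in\mathbb{R}^r$ for $k=1,\ldots,m$, a direct calculation gives $\sum_{k=1}^m\bfu_k\bfu_k^\top=I_r$, so the vectors $\{\bfu_k\}$ form an isotropic set on which Algorithm~\ref{alg-ecoIso} can be executed.

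Next I would run Algorithm~\ref{alg-ecoIso} on $\{\bfu_1,\ldots,\bfu_m\}$. The quantities $\gamma$ and $K$ in \eqref{newformulae} are defined precisely with respect to these $\bfu_k$'s, so the hypothesis $K\le m/\sqrt{r}$ places us in the setting of Corollary~\ref{case1} with $n$ replaced by $r$. After at most $\lceil r/\epsilon^2\rceil$ iterations, Theorem~\ref{mjlai06122020} together with Corollary~\ref{case1} yield nonnegative coefficients $\{s_i\}_{i=1}^m$ with $|\{i:s_i\neq 0\}|\le\lceil r/\epsilon^2\rceil$ and a sparsifier satisfying $\|I_r-\sum_i s_i\bfu_i\bfu_i^\top\|_F\le\epsilon/\gamma$; since $\|\cdot\|_2\le\|\cdot\|_F$, this rewrites as
$$(1-\epsilon/\gamma)I_r \preceq \sum_{i=1}^m s_i\bfu_i\bfu_i^\top \preceq (1+\epsilon/\gamma)I_r.$$

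To transport this inequality from $I_r$ back to $B$, for any $\bfx\in\mathbb{R}^n$ I would decompose $\bfx=V\bfz+\bfw$ with $V^\top\bfw=0$. Since every $\bfv_i$ lies in $\mathrm{range}(V)$ (the row space of $A$), both $M:=\sum_i s_i\bfv_i\bfv_i^\top$ and $B$ annihilate $\bfw$, so $\bfx^\top M\bfx=\bfz^\top V^\top MV\bfz$ and $\bfx^\top B\bfx=\bfz^\top\Sigma^2\bfz$. Setting $\bfy=\Sigma\bfz\in\mathbb{R}^r$ and using $V^\top\bfv_i=\Sigma\bfu_i$, one computes $\bfx^\top M\bfx=\bfy^\top\bigl(\sum_i s_i\bfu_i\bfu_i^\top\bigr)\bfy$ and $\bfx^\top B\bfx=\|\bfy\|_2^2$, so the isotropic bound transfers verbatim to give \eqref{app4}.

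The only delicate point is precisely this last translation: because $B$ is rank-deficient, the isotropic estimate on its own only constrains behavior on $\mathrm{range}(V)$, and one has to invoke the observation $\bfv_i\in\mathrm{range}(V)$ to conclude that both $M$ and $B$ vanish on $\mathrm{range}(V)^\perp$, so the sandwich inequality extends from $\mathrm{range}(V)$ to all of $\mathbb{R}^n$. Once this is settled, the sparsity bound $|\{i:s_i\neq 0\}|\le\mathrm{rank}(B)/\epsilon^2$ and the spectral bound together establish the theorem.
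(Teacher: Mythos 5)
Your proposal is correct and follows essentially the same route as the paper: reduce to the isotropic case via the thin SVD $B=V\Sigma^2V^\top$, run Algorithm~\ref{alg-ecoIso} on $\bfu_k=\Sigma^{-1}V^\top\bfv_k$ to obtain a nonnegative sparsifier of $I_r$ via Corollary~\ref{case1}, and transport the bound back using the fact that every $\bfv_i$ lies in $\mathrm{range}(V)$. The paper phrases the transport as the substitution $\bfx=(VS)^\top\bfy$ followed by the observation $VV^\top\bfv_{j_\ell}=\bfv_{j_\ell}$, while you instead decompose $\bfx=V\bfz+\bfw$ and note that both $M$ and $B$ annihilate $\bfw$; these are the same argument, and your version makes the rank-deficient bookkeeping slightly more explicit.
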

\begin{proof}
 As explained above, we are able to have
\eqref{newIn} or \eqref{newIn2}. Then
we apply Algorithm~\ref{alg-ecoIso} to have positive coefficients $c_i(j_\ell)$ and sub-indices $j_\ell$ such that
$$\|I_n-  \sum_{\ell=1}^i c_i(j_\ell) {\bf u}_{j_\ell} {\bf u}_{j_\ell} ^\top\|_F\le \epsilon/\gamma$$
by Corollary~\ref{case1} and hence, for all $\bfx\in \mathbb{R}^{\mbox{rank}(B)}$,
$$
(1- \epsilon/\gamma) \bfx^\top \bfx \le \bfx^\top  \sum_{\ell=1}^i c_i(j_\ell) {\bf u}_{j_\ell}
{\bf u}_{j_\ell} ^\top \bfx \le (1+\epsilon/\gamma)  \bfx^\top \bfx
$$
as explained in the previous subsection.  If $B$ is invertible, letting $\bfx=L^\top \bfy$, we have
$$
(1- \epsilon/\gamma) \bfy^\top B\bfy \le \bfy^\top L \sum_{\ell=1}^i c_i(j_\ell) (L^{-1}\bfv_{j_\ell}) (L^{-1}\bfv_{j_\ell})^\top
L^\top\bfy \le (1+\epsilon/\gamma)  \bfy^\top B\bfy,
$$
which is (\ref{app4}). If $B$ is not full rank, letting $\bfx=(VS)^\top \bfy$, we have
\begin{equation}
\label{2021-1-28-1}
(1- \epsilon/\gamma) \bfy^\top B\bfy \le \bfy^\top \sum_{\ell=1}^i c_i(j_\ell) (VV^\top\bfv_{j_\ell}) (VV^\top\bfv_{j_\ell})^\top
\bfy \le (1+\epsilon/\gamma)  \bfy^\top B\bfy.
\end{equation}
Note that $VV^\top$ is the projection onto the range of $[\bfv_1,\ldots,\bfv_m]$, hence $VV^\top\bfv_{j_\ell}=\bfv_{j_\ell}$
which implies that \eqref{2021-1-28-1} is indeed \eqref{app4}. These complete the proof.
\end{proof}


\section{The UGA algorithm for sparsifiers}
\label{sect:eco-omp}
In this section, we consider the graph sparsification problem.
Following the ideas in \cite{BSS14}, we can use the result of
Theorem~\ref{mainresult4} to compute the sparsifier $H_i$ of a given graph $G$
in $O(mn^2/\epsilon^2)$ time. Instead,
we propose another algorithm to compute a graph sparsifier which can be done
in $O(m+n^2/\epsilon^2)$ which is much faster.
We shall first present our computational algorithm in this section.
Then we  explain its computational complexity.
Next we present the convergence analysis of the algorithm.
Finally, we shall show that the subgraph obtained from the UGA algorithm is indeed a graph sparsifier.

\subsection{An UGA for graph sparsification}
Let $G=(V,E,{\bf w})$ be a weighted undirected graph.
To state conveniently, we define
$$\phi_{(u,v)}:=(\bfe_u-\bfe_v)(\bfe_u-\bfe_v)^\top,$$
where $\bfe_u\in \mathbb{R}^n$ is a standard basis vector which is zero everywhere except for the
$u$-th component which is $1$.
Then the Laplacian $L_G$ can be simple described as
$$L_G=\sum\limits_{(u,v)\in E}w_{(u,v)} \phi_{(u,v)}.$$
Note that if $w_{(u,v)}=1$, then $L_G$ is the standard graph Laplacian for an undirected graph $G=(V,E)$, and if $w_{(u,v)}>0$, $L_G$ is a weighted graph Laplacian.

The UGA for spectral sparsification is given in Algorithm \ref{alg-eco}.

\begin{algorithm}[htpb]
\caption{Universal greedy  approach (UGA) for graph sparsification \label{alg-eco}}
\begin{algorithmic}
\Require
Laplacian matrix $L_G\in\mathbb{R}^{n\times n}$, $\epsilon\in(0,1)$.
\begin{enumerate}
\item[1:] $R_0:=L_G$, $\Lambda_0:=\emptyset$, $L_{H_0}:=0$, $C_0:=0\in\mathbb{R}^{|E|}$ and $i:=1$.
\item[2:] Find an edge $(u_i,v_i)$ such that
$$
\big|\langle \phi_{(u_i,v_i)},R_{i-1}\rangle\big|=\max_{(u,v)\in E}\big| \langle \phi_{(u,v)},R_{i-1}\rangle\big|,$$
and update $\Lambda_i=\Lambda_{i-1}\cup\{(u_i,v_i)\}$.
\item[3:] Compute the optimal weights
\begin{equation}
\label{Aeco1}
(\alpha_1^i,\alpha_2^i)=\arg\min\limits_{(\alpha_1,\alpha_2)\in\mathbb{R}^2}\big\|L_G-\alpha_1L_{H_{i-1}}-\alpha_2\phi_{(u_i,v_i)}\big\|^2_F.
\end{equation}
\item[4:] Update
\begin{equation}
\label{LH}
L_{H_{i}}=\alpha_1^iL_{H_{i-1}}+\alpha^i_2\phi_{(u_i,v_i)}
\end{equation}
and
\begin{equation}
\label{residual:eco}
R_i=L_G-L_{H_{i}}.
\end{equation}
\item[5:] Update the coefficient $C_i=\alpha_1^iC_{i-1}$ and set $C_i(u_i,v_i)=\alpha_2^i+\alpha_1^iC_{i-1}(u_i,v_i)$.
\item[6:] If $i>\lceil\frac{n}{\epsilon^2}\rceil$, stop and go to output. Otherwise, set $i=i+1$ and return to Step $2$.
\end{enumerate}

\Ensure
The sparsifier $L_{\lceil\frac{n}{\epsilon^2}\rceil}$, the selected edges $\tilde{E}=\Lambda_i$  and the weight function $\tilde{\bf w}=C_i$.
\end{algorithmic}
\end{algorithm}

\subsection{Computational complexity}
The running time of Algorithm~\ref{alg-eco} is dominated by Steps $2$ and $3$,
whose total cost is $O(m +n^2/\epsilon^2)$. In terms of the number $m$ of edges, the total cost is $O(m)$
if $m = O(n^2)$.
\begin{theorem}
\label{complexity2}
The computational complexity of Algorithm~\ref{alg-eco} for graph $G$ with $n$ vertices and $m$ edges
is $O(m+n^2/\epsilon^2)$.
\end{theorem}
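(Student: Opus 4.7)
The plan is to parallel the incremental-maintenance analysis in the proof of Theorem~\ref{complexity3}, showing that after an $O(m)$ preprocessing pass the two dominant Steps 2 and 3 can be executed so that each of the $\lceil n/\epsilon^2\rceil$ iterations takes only $O(n)$ work, summing to the claimed $O(m+n^2/\epsilon^2)$. I would begin the preprocessing by storing $L_G$ as a sparse matrix, setting $\|L_{H_0}\|_F^2=0$ and $\langle L_G,L_{H_0}\rangle=0$, and, for every edge $(u,v)\in E$, precomputing $\gamma_{(u,v)}:=\langle \phi_{(u,v)},L_G\rangle=L_G(u,u)+L_G(v,v)-2L_G(u,v)$ in $O(m)$ total time via this four-entry expansion.

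For Step~2 I would maintain the values $\beta_{(u,v)}^{(i)}:=\langle \phi_{(u,v)},R_i\rangle$ through
\[
\beta_{(u,v)}^{(i)}=\gamma_{(u,v)}-L_{H_i}(u,u)-L_{H_i}(v,v)+2L_{H_i}(u,v),
\]
storing $L_{H_i}$ in lazily-scaled sparse form: writing $L_{H_i}=S_i\widehat{L}_{H_i}$ with $S_i:=\prod_{k\le i}\alpha_1^k$, the update (\ref{LH}) touches only the four entries of $\widehat{L}_{H_i}$ at rows/columns $u_i,v_i$. Recording per edge the scale-invariant quantity $F_{(u,v)}^{(i)}:=\widehat{L}_{H_i}(u,u)+\widehat{L}_{H_i}(v,v)-2\widehat{L}_{H_i}(u,v)$, I see that only the $O(n)$ edges in $E$ incident to $u_i$ or $v_i$ carry a genuine change in $F^{(i)}$; every other edge's $F$ is untouched.

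For Step~3 I would imitate the bookkeeping of Theorem~\ref{complexity3}. The $2\times 2$ normal-equation system requires $\|L_{H_{i-1}}\|_F^2$, $\langle L_{H_{i-1}},\phi_{(u_i,v_i)}\rangle$, $\|\phi_{(u_i,v_i)}\|_F^2=4$, $\langle L_G,L_{H_{i-1}}\rangle$, and $\langle L_G,\phi_{(u_i,v_i)}\rangle=\gamma_{(u_i,v_i)}$. The cross term $\langle L_{H_{i-1}},\phi_{(u_i,v_i)}\rangle=\gamma_{(u_i,v_i)}-\beta_{(u_i,v_i)}^{(i-1)}$ is already maintained, while expanding (\ref{LH}) gives the $O(1)$-updatable recurrences
\[
\|L_{H_i}\|_F^2=(\alpha_1^i)^2\|L_{H_{i-1}}\|_F^2+2\alpha_1^i\alpha_2^i\langle L_{H_{i-1}},\phi_{(u_i,v_i)}\rangle+4(\alpha_2^i)^2,
\]
\[
\langle L_G,L_{H_i}\rangle=\alpha_1^i\langle L_G,L_{H_{i-1}}\rangle+\alpha_2^i\gamma_{(u_i,v_i)},
\]
so Step~3 is $O(1)$ per iteration once Step~2 has produced $(u_i,v_i)$.

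The main obstacle will be rigorously executing Step~2's max-search in $O(n)$ rather than $O(m)$ per iteration: although $F^{(i)}$ changes at only $O(n)$ edges, the global scalar $S_i$ re-shifts $|\gamma_{(u,v)}-S_iF_{(u,v)}^{(i)}|$ for every edge, so a naive priority queue keyed on $|\beta^{(i)}|$ would require $\Theta(m)$ updates per iteration. My plan is to key the data structure on $F^{(i)}$ (which is stable for most edges), maintain a small number of global extremes in $\gamma$ and $F$, and certify that the true maximizer of $|\gamma_{(u,v)}-S_iF_{(u,v)}^{(i)}|$ at iteration $i$ lies either among the $O(n)$ freshly-updated entries or among those cached extremes; localizing the search in this way is the most delicate technical point, and completing it turns the per-iteration cost into $O(n)$, which combined with the $O(m)$ preprocessing yields the bound.
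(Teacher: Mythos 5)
Your plan reproduces the structure of the paper's proof faithfully on the parts that work: the $O(m)$ preprocessing of $\gamma_{(u,v)}=\langle\phi_{(u,v)},L_G\rangle$ using the four-nonzero-entry structure of $\phi_{(u,v)}$, the observation that $\langle\phi_{(u,v)},\phi_{(u_i,v_i)}\rangle$ is nonzero only for the $O(n)$ edges sharing a vertex with $(u_i,v_i)$, and the $O(1)$ recurrences in Step 3 for $\|L_{H_i}\|_F^2$, $\langle L_G,L_{H_i}\rangle$ and $\langle L_{H_{i-1}},\phi_{(u_i,v_i)}\rangle$. Your lazy scaling $L_{H_i}=S_i\widehat{L}_{H_i}$ is in fact more careful than the paper, which writes the recurrence $\langle\phi_{(u,v)},L_{H_i}\rangle=\alpha_1^i\langle\phi_{(u,v)},L_{H_{i-1}}\rangle+\alpha_2^i\langle\phi_{(u,v)},\phi_{(u_i,v_i)}\rangle$ and then asserts ``one only need to compute the last term,'' silently ignoring that the multiplication by $\alpha_1^i$ touches all $m$ stored values unless one does exactly your scale-invariant bookkeeping.

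You have also put your finger on a genuine gap, and it is worth noting that this gap is present in the paper's own proof, not just yours. The paper bounds only the cost of \emph{maintaining} the quantities $\langle\phi_{(u,v)},L_{H_i}\rangle$; it never explains how Step 2's $\arg\max_{(u,v)\in E}|\langle\phi_{(u,v)},R_{i-1}\rangle|$ is computed without scanning all $m$ edges each iteration, which would give $O(mn/\epsilon^2)$ total and exceed the claimed $O(m+n^2/\epsilon^2)$. So on this point you are not behind the paper; you are flagging a hole that the paper leaves open.

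However, the repair you sketch does not close the hole. You propose to cache a few extremes of $\gamma$ and of $F^{(i)}$ and claim the maximizer of $|\gamma_{(u,v)}-S_iF_{(u,v)}^{(i)}|$ lies among them or among the freshly-touched $O(n)$ edges. This is not true in general: $\gamma_{(u,v)}-S_iF_{(u,v)}^{(i)}$ is a linear functional of the planar point $(\gamma_{(u,v)},F_{(u,v)}^{(i)})$ whose direction depends on $S_i$, so as $S_i$ varies from iteration to iteration the maximizer can be \emph{any} vertex of the convex hull of the $m$-point cloud $\{(\gamma_e,F_e^{(i)})\}_{e\in E}$, and that hull can have $\Theta(m)$ vertices. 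A handful of coordinate extremes does not certify the true maximum; one would need something like a maintained convex hull or a kinetic heap, and even then the bookkeeping under per-iteration insertions of $O(n)$ updated points plus a changing slope $S_i$ is delicate. To make the theorem's bound rigorous you would either need to exhibit such a data structure and bound its amortized cost, or argue that $\alpha_1^i\equiv 1$ (so $S_i$ is constant and a priority queue keyed on $F^{(i)}$ works), or settle for a weaker running-time statement.
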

\begin{proof}
 We first show that the total cost of Step $2$ is
$O(m+n^2/\epsilon^2)$. By Step $2$, we have
$$
\big\langle \phi_{(u,v)},R_{i}\big\rangle =\big\langle \phi_{(u,v)},L_G-L_{H_{i}}\big\rangle
= \big\langle \phi_{(u,v)},L_G\big\rangle-\big\langle \phi_{(u,v)},L_{H_{i}}\big\rangle.
$$
Since each $\phi_{(u,v)}$ has only four nonzeor entries, so in the first iteration, we need $O(m)$ flops to compute all $\big\langle \phi_{(u,v)},L_G\big\rangle$ with $(u,v)\in E$. Note that
$$
\big\langle \phi_{(u,v)},L_{H_{i}}\big\rangle=\alpha_1^i\big\langle \phi_{(u,v)},L_{H_{i-1}}\big\rangle+\alpha^i_2\big\langle
\phi_{(u,v)}, \phi_{(u_i,v_i)}\big\rangle,
$$
hence one only need to compute the last term
$\big\langle \phi_{(u,v)}, \phi_{(u_i,v_i)}\big\rangle$ with using an incremental method.
Noting that
$$
\langle \phi_{(u,v)}, \phi_{(u_i, v_i)}\big\rangle=
\left\{
\begin{array}{ll}
0, \ \ \ \ \ \mbox{if} \ \ u\neq u_i \ \ \mbox{and} \ \ v\neq v_i,
\\
4, \ \ \ \ \ \mbox{if} \ \ u = u_i \ \ \mbox{and} \ \ v= v_i,
\\
1, \ \ \ \ \ \mbox{if} \ \ u = u_i \ \ \mbox{and} \ \ v\neq v_i,
\\
1, \ \ \ \ \ \mbox{if} \ \ u\neq u_i \ \ \mbox{and} \ \ v = v_i,
\end{array}
\right.
$$
hence the $i$-th subsequent iteration need $O(n)$ flops to calculate all $\langle
\phi_{(u,v)}, \phi_{(u_i, v_i)}\big\rangle$.
Thus the  total computational  cost of Step $2$ is $O(m+n^2/\epsilon^2)$ as the total number of iterations is
$\lceil\frac{n}{\epsilon^2}\rceil$.

Now let us discuss the computational cost of Step $3$. Set
$$
B_{i}=\left(
        \begin{array}{cc}
        \big\langle L_{H_{i-1}},L_{H_{i-1}}\big\rangle & \big\langle L_{H_{i-1}}, \phi_{(u_i,v_i)}\big\rangle \\
          \big\langle L_{H_{i-1}},\phi_{(u_i,v_i)}\big\rangle & \big\langle \phi_{(u_i,v_i)}, \phi_{(u_i,v_i)}\big\rangle \\
        \end{array}
      \right)
$$
and $b_{i}=(\big\langle L_{G},L_{H_{i-1}}\big\rangle,\big\langle L_{G},
\phi_{(u_i,v_i)}\big\rangle)^\top$. Then in Step $3$, $\alpha^i$ can be compute by solving the following $2\times 2$ system of
linear equations
$$
B_{i}\alpha^i=b_{i}.
$$
Similar to the arguments of Theorem \ref{complexity3},  it can be efficiently solved in $O(n)$ time.
After running $\lceil\frac{n}{\epsilon^2}\rceil$ iterations, the time complexity for Step $3$ is
$O(n^2/\epsilon^2)$.
\end{proof}

\subsection{Convergence analysis}

In this subsection, we show that Algorithm~\ref{alg-eco} is convergent and the output matrix is a well  approximation of $L_G$.
For convenience,
let $S\subseteq E$ and $C\in \mathbb{R}_{+}^{|S|}$, define
$$
L_{S,C}:=\sum\limits_{(u,v)\in S} C(u,v) \phi_{(u,v)}.
$$
For any fixed $\epsilon$ with $\epsilon\in(0,1)$ and $\lceil\frac{n}{\epsilon^2}\rceil\le |E|$,
let $L_{n,\epsilon}$ be set of the best approximation of $L_G$ having at most $\lceil\frac{n}{\epsilon^2}\rceil$ edges.
That is,
\begin{equation}\label{Lne}
L_{n,\epsilon}:=\bigg\{\sum\limits_{(u,v)\in S_{\epsilon}} C_{\epsilon}(u,v) \phi_{(u,v)}:(S_{\epsilon},C_{\epsilon})\in \Theta \bigg\}
\end{equation}
with
$$
\begin{array}{ll}
\Theta:=\hbox{arg}\min\limits_{S,C}\big\{ &\big\|L_G- L_{S,C} \big\|_F: \ (1-\epsilon)^2L_G\preceq L_{S,C}\preceq(1+\epsilon)^2L_G, \\
& |S|\leq  \lceil\frac{n}{\epsilon^2}\rceil \hbox{ and } C\geq 0
\big\}.
\end{array}
$$
Based on the result on graph sparsification in \cite{BSS14}, there is an $L_\epsilon\in L_{n,\epsilon}$, it
achieves the $\frac{(1+\epsilon)^2}{(1-\epsilon)^2}$-approximation of $G$.
The following result states that $L_{H_{i}}$ can approximate $L_{\epsilon}$ well.
\begin{theorem}
\label{mainresult}
Let $R_i$ be the residual matrix defined in \eqref{residual:eco} of Algorithm~\ref{alg-eco}. Then for any $L_\epsilon\in L_{n,\epsilon}$, either
$\|R_{i}\|_F\le \|L_G- L_\epsilon\|_F$ or
\begin{equation}
\label{eq:mainresult}
\|L_{H_i}-L_{\epsilon}\|_F \le 2\|L_G- L_\epsilon\|_F + \delta^i \|L_G\|_F, \quad \forall \ i\ge 0,
\end{equation}
where $\delta=\min\bigg\{\sqrt{1- \frac{1}{2|E|}},\sqrt{1-\frac{\epsilon^2}{4n}}\bigg\}$ is a constant belongs to $(0, 1)$.
\end{theorem}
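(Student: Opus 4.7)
The plan is to reduce the theorem via the triangle inequality to a geometric contraction of the Frobenius residuals $R_i = L_G - L_{H_i}$, and then prove that contraction by combining an OMP-style descent lemma with a suitable lower bound on the greedy inner product. Concretely, since $L_{H_i} - L_\epsilon = (L_G - L_\epsilon) - R_i$, the triangle inequality gives $\|L_{H_i} - L_\epsilon\|_F \le \|R_i\|_F + \|L_G - L_\epsilon\|_F$, so the second alternative will follow from
$$\|R_i\|_F^2 \le \|L_G - L_\epsilon\|_F^2 + \delta^{2i}\|L_G\|_F^2,$$
by extracting square roots via $\sqrt{a+b}\le\sqrt a+\sqrt b$; the first alternative is a specialization of the same estimate when the exponential term is negligible.

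In complete analogy with Lemmas~\ref{orth} and \ref{eco2}, the least-squares update in Step~3 forces $\langle R_i, L_{H_{i-1}}\rangle = \langle R_i, \phi_{(u_i,v_i)}\rangle = 0$, and together with $\|\phi_e\|_F^2 = 4$ this yields the descent
$$\|R_i\|_F^2 \le \|R_{i-1}\|_F^2 - \tfrac{1}{4}M_{i-1}^2, \qquad M_{i-1} := \max_{e\in E}\bigl|\langle R_{i-1}, \phi_e\rangle\bigr|.$$
In particular $\|R_i\|_F$ is monotone non-increasing, and the whole theorem reduces to the one-step contraction
$$\|R_i\|_F^2 - \|L_G - L_\epsilon\|_F^2 \le \delta^2\bigl(\|R_{i-1}\|_F^2 - \|L_G - L_\epsilon\|_F^2\bigr),$$
which iterated from $R_0 = L_G$ produces the bound in the previous paragraph.

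The heart of the proof is then a lower bound $M_{i-1}^2 \ge 4(1-\delta^2)\bigl(\|R_{i-1}\|_F^2 - \|L_G - L_\epsilon\|_F^2\bigr)$, which I would prove by two different arguments and combine by taking the worst constant; this is exactly the $\min$ in the definition of $\delta$. (a) An \emph{intrinsic} bound yielding $1-\delta^2 = 1/(2|E|)$: start from $\|R_{i-1}\|_F^2 = \langle R_{i-1}, L_G\rangle = \sum_{e\in E} w_e \langle R_{i-1}, \phi_e\rangle$ (using $\langle R_{i-1}, L_{H_{i-1}}\rangle = 0$) and pigeonhole across the $|E|$ edges. (b) A \emph{sparsifier-based} bound yielding $1-\delta^2 = \epsilon^2/(4n)$: from the same orthogonality one has $\langle R_{i-1}, L_\epsilon\rangle = \|R_{i-1}\|_F^2 - \langle R_{i-1}, L_G - L_\epsilon\rangle$, so Cauchy--Schwarz/AM--GM yield $\langle R_{i-1}, L_\epsilon\rangle \ge \tfrac{1}{2}\bigl(\|R_{i-1}\|_F^2 - \|L_G - L_\epsilon\|_F^2\bigr)$; expanding the left-hand side over the at most $n/\epsilon^2$ edges of $S_\epsilon$ via a weighted Cauchy--Schwarz that uses the spectral envelope $L_\epsilon \preceq (1+\epsilon)^2 L_G$ to tame $\|C_\epsilon\|_2$ then gives the claim.

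The main obstacle is precisely this \emph{linear} dependence of the lower bound on the error gap $\|R_{i-1}\|_F^2 - \|L_G - L_\epsilon\|_F^2$: a naive $\ell_1$--$\ell_\infty$ pairing $|\langle R_{i-1}, L_\epsilon\rangle| \le M_{i-1}\sum_e C_\epsilon(e)$ produces only a \emph{quadratic} bound on $M_{i-1}$ and hence the weaker DeVore--Temlyakov sublinear rate $O(1/i)$ rather than the geometric rate stated. Threading the PSD envelope and the sparsity $|S_\epsilon|\le n/\epsilon^2$ into Cauchy--Schwarz in the right way to produce the clean $\epsilon^2/n$ and $1/|E|$ factors in $\delta$ is the nontrivial technical step; once it is in hand, the remainder of the argument is the routine induction sketched above.
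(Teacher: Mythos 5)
Your high-level skeleton matches the paper: reduce via the triangle inequality to a geometric contraction of the centered quantity $\|R_i\|_F^2 - \tau$ (with $\tau = \|L_G - L_\epsilon\|_F^2$), drive it by the descent lemma $\|R_i\|_F^2 \le \|R_{i-1}\|_F^2 - \tfrac14 M_{i-1}^2$ (this is the paper's Lemma~\ref{lemma-eco3}), and observe that the whole point is to obtain a \emph{linear} lower bound $M_{i-1}^2 \gtrsim \|R_{i-1}\|_F^2 - \tau$ rather than a quadratic one. You also correctly diagnose that a naive $\ell_1$--$\ell_\infty$ pairing against a fixed competitor gives only the sublinear DeVore--Temlyakov rate.

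But you identify the obstacle without overcoming it, and neither of your sketched routes produces the needed linear bound. Route (a) --- pigeonhole from $\|R_{i-1}\|_F^2 = \sum_e w_e \langle R_{i-1},\phi_e\rangle$ --- gives $M_{i-1}\ge 2\|R_{i-1}\|_F^2/\mathrm{Tr}(L_G)$, which is precisely what the paper uses for the sublinear Theorem~\ref{the-eco2}; it has no dependence on the gap $\|R_{i-1}\|_F^2 - \tau$. Route (b) correctly derives $\langle R_{i-1},L_\epsilon\rangle \ge \tfrac12(\|R_{i-1}\|_F^2 - \tau)$, but then bounding $\langle R_{i-1},L_\epsilon\rangle \le \sqrt{|S_\epsilon|}\,\|C_\epsilon\|_2\, M_{i-1}$ leaves the constant $\|C_\epsilon\|_2$ uncancelled; solving for $M_{i-1}^2$ still gives quadratic dependence on the gap, and the PSD envelope does not help because it only controls $\|C_\epsilon\|_2$ by a constant, not by the gap.

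The missing idea (the content of the paper's Lemma~\ref{mjlai12262019}) is to pair $R_{i-1}$ against the \emph{difference} $L_\epsilon - L_{H_{i-1}}$, which satisfies $\langle R_{i-1}, L_\epsilon - L_{H_{i-1}}\rangle = \langle R_{i-1}, L_\epsilon\rangle$ by the orthogonality $\langle R_{i-1}, L_{H_{i-1}}\rangle = 0$. Expanding $\|L_G - L_\epsilon\|_F^2 = \|R_{i-1} + (L_{H_{i-1}}-L_\epsilon)\|_F^2$ and applying AM--GM yields the Rayleigh-type inequality
\[
\|R_{i-1}\|_F^2 - \tau \ \le\ \frac{|\langle R_{i-1}, L_\epsilon\rangle|^2}{\|L_\epsilon - L_{H_{i-1}}\|_F^2}.
\]
The numerator is bounded by $|S_\epsilon\cup E_{i-1}|\,\|C_\epsilon - C_{i-1}\|_2^2\, M_{i-1}^2$ via Cauchy--Schwarz over the support $S_\epsilon\cup E_{i-1}$, while the denominator is bounded below by $2\|C_\epsilon - C_{i-1}\|_2^2$ via the Grammian eigenvalue estimate $\lambda_{\min}(\langle\Phi_E,\Phi_E\rangle)\ge 2$ of Lemma~\ref{laixie12212019}, which you never invoke. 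The coefficient norm $\|C_\epsilon - C_{i-1}\|_2^2$ then \emph{cancels}, which is what produces the linear dependence on the gap. Two further points: the $\min$ in $\delta$ does not come from combining two separate proofs --- it is a single contraction factor $1 - \tfrac{1}{2|S_\epsilon\cup E_{i-1}|}$ with $|S_\epsilon\cup E_{i-1}|$ bounded simultaneously by $|E|$ (trivially) and by $2n/\epsilon^2$ (since both $S_\epsilon$ and the current iterate have at most $n/\epsilon^2$ edges). And the disjunction in the theorem is not a degenerate specialization of the contraction: it is a genuine precondition, since the Rayleigh-type step needs $\|R_{i-1}\|_F^2 \ge \tau$, and monotonicity of $\|R_j\|_F$ propagates the failure of $\|R_i\|_F^2 > \tau$ to the first alternative.
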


%
%
Before we pass to the proof of this theorem, let us state several useful properties of Algorithm \ref{alg-eco}.
\begin{lemma}
\label{lemma-eco1}
$\langle R_i,\phi_{(u_i,v_i)}\rangle=0$ and $\langle R_i,L_{H_{i-1}}\rangle=0$.
\end{lemma}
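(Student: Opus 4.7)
The plan is to observe that Lemma \ref{lemma-eco1} is the direct analog of Lemma \ref{orth}, and that both orthogonality statements fall out from the first-order optimality conditions of the least-squares minimization in Step 3 of Algorithm~\ref{alg-eco}. The proof will be a one-line differentiation argument in each of the two variables $\alpha_1, \alpha_2$, so no serious obstacle is anticipated.

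Concretely, I would first define $F(\alpha_1,\alpha_2) := \|L_G - \alpha_1 L_{H_{i-1}} - \alpha_2 \phi_{(u_i,v_i)}\|_F^2$ and note that, by Step 3, $(\alpha_1^i,\alpha_2^i)$ is an unconstrained minimizer of $F$. Setting the partial derivatives to zero, one obtains
\begin{align*}
0 &= \tfrac{\partial F}{\partial \alpha_1}(\alpha_1^i,\alpha_2^i) = -2\langle L_{H_{i-1}},\, L_G - \alpha_1^i L_{H_{i-1}} - \alpha_2^i \phi_{(u_i,v_i)}\rangle, \\
0 &= \tfrac{\partial F}{\partial \alpha_2}(\alpha_1^i,\alpha_2^i) = -2\langle \phi_{(u_i,v_i)},\, L_G - \alpha_1^i L_{H_{i-1}} - \alpha_2^i \phi_{(u_i,v_i)}\rangle.
\end{align*}

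Next, by the update rules \eqref{LH} and \eqref{residual:eco}, the quantity inside the inner products is exactly $R_i = L_G - L_{H_i}$. Substituting yields the two desired identities $\langle L_{H_{i-1}}, R_i\rangle = 0$ and $\langle \phi_{(u_i,v_i)}, R_i\rangle = 0$, completing the proof.

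The only subtlety worth flagging is the degenerate case where $L_{H_{i-1}}$ and $\phi_{(u_i,v_i)}$ are linearly dependent, in which case the minimizer is not unique; but the optimality conditions above still hold for any choice of minimizer, so both inner products still vanish. Apart from this remark, the lemma is immediate and serves purely as the orthogonality/projection identity needed for the subsequent convergence analysis of $\|R_i\|_F$ in Theorem~\ref{mainresult}.
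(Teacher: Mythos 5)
Your proof is correct and follows essentially the same route as the paper: both extract the two orthogonality identities from the first-order optimality conditions of the least-squares minimization in Step 3, then identify the residual $L_G - \alpha_1^i L_{H_{i-1}} - \alpha_2^i \phi_{(u_i,v_i)}$ with $R_i$. Your remark on the degenerate (linearly dependent) case is a small but sensible addition not in the paper.
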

\begin{proof}
Recall that $\alpha^i$ is the optimal solution of problem \eqref{Aeco1}.
By the first-order optimality condition according to $\phi_{(u_i,v_i)}$ and $L_{H_{i-1}}$, we have
$$
\langle L_G-\alpha_1^iL_{H_{i-1}}-\alpha^i_2\phi_{(u_i,v_i)},L_{H_{i-1}}\rangle=0
$$
and
$$
\langle L_G-\alpha_1^iL_{H_{i-1}}-\alpha^i_2\phi_{(u_i,v_i)},\phi_{(u_i,v_i)}\rangle=0
$$
which together with $R_{i}=L_G-\alpha_1^iL_{H_{i-1}}-\alpha^i_2\phi_{(u_i,v_i)}$ implies that $\langle R_i, \phi_{(u_i,v_i)}\rangle=0$ and $\langle R_i,L_{H_{i-1}}\rangle=0$.
\end{proof}

\begin{lemma}
\label{lemma-eco2}
$\|R_i\|_F^2=\|L_G\|_F^2-\|L_{H_{i}}\|_F^2$ for all $i\geq 0$.
\end{lemma}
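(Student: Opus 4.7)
The plan is to use the orthogonality property from Lemma \ref{lemma-eco1} together with the Pythagorean identity. First I would verify the base case $i=0$: since $L_{H_0}=0$ and $R_0=L_G$, the identity $\|R_0\|_F^2=\|L_G\|_F^2-\|L_{H_0}\|_F^2$ holds trivially.

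For $i\geq 1$, the key observation is that $L_{H_i}=\alpha_1^iL_{H_{i-1}}+\alpha_2^i\phi_{(u_i,v_i)}$ is a linear combination of $L_{H_{i-1}}$ and $\phi_{(u_i,v_i)}$. By Lemma \ref{lemma-eco1}, $R_i$ is orthogonal (in the Frobenius inner product) to both of these matrices, so
\[
\langle R_i, L_{H_i}\rangle \;=\; \alpha_1^i\langle R_i,L_{H_{i-1}}\rangle+\alpha_2^i\langle R_i,\phi_{(u_i,v_i)}\rangle \;=\; 0.
\]
This is the crucial orthogonality between the residual and the current approximation.

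Given this, I would compute $\|L_G\|_F^2$ by the decomposition $L_G = L_{H_i}+R_i$:
\[
\|L_G\|_F^2 \;=\; \langle L_{H_i}+R_i,\, L_{H_i}+R_i\rangle \;=\; \|L_{H_i}\|_F^2 + 2\langle L_{H_i},R_i\rangle + \|R_i\|_F^2 \;=\; \|L_{H_i}\|_F^2 + \|R_i\|_F^2,
\]
where the cross term vanishes by the displayed identity above. Rearranging gives the desired equality.

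There is essentially no obstacle here; the entire content of the lemma is the Pythagorean theorem applied in the Frobenius inner product space, made available by the first-order optimality of $(\alpha_1^i,\alpha_2^i)$ already captured in Lemma \ref{lemma-eco1}. The only thing to be careful about is that orthogonality of $R_i$ to the two generators $L_{H_{i-1}}$ and $\phi_{(u_i,v_i)}$ is exactly what is needed, since $L_{H_i}$ lies in their linear span.
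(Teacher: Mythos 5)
Your proof is correct and takes essentially the same approach as the paper: expand $\|L_G\|_F^2=\|R_i+L_{H_i}\|_F^2$ and kill the cross term via $\langle R_i,L_{H_i}\rangle=\alpha_1^i\langle R_i,L_{H_{i-1}}\rangle+\alpha_2^i\langle R_i,\phi_{(u_i,v_i)}\rangle=0$ from Lemma~\ref{lemma-eco1}. The only cosmetic difference is that you treat $i=0$ as a separate base case, whereas the paper applies the same Pythagorean expansion uniformly for all $i\geq 0$.
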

\begin{proof}
For $i\geq0$,
$$
\begin{array}{ll}
\|L_G\|_F^2&=\|R_i+L_{H_{i}}\|^2_F =\|R_i\|^2_F+\|L_{H_{i}}\|^2_F+2\langle R_i,L_{H_{i}}\rangle
=\|R_i\|^2_F+\|L_{H_{i}}\|^2_F,
\end{array}
$$
the last equality follows from Lemma \ref{lemma-eco1} that $$\langle R_i,L_{H_{i}}\rangle=\alpha_1^i\langle R_i,L_{H_{i-1}}\rangle+\alpha_2^i\langle R_i,\phi_{(u_i,v_i)}\rangle=0.$$ This completes the proof of this lemma.
\end{proof}

\begin{lemma}
\label{lemma-eco4}
If $L_{H_{i-1}}=\beta \phi_{(u_i,v_i)}$ with nonzero $\beta$, then $\|R_{i}\|_F=\|R_{i-1}\|_F$.
\end{lemma}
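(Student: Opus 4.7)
The plan is to show that the assumption $L_{H_{i-1}}=\beta\phi_{(u_i,v_i)}$ forces the minimization in Step~3 of iteration $i$ to be degenerate in the sense that it reduces to a one-dimensional projection onto $\mathrm{span}\{\phi_{(u_i,v_i)}\}$, and that $L_{H_{i-1}}$ is already that projection. Hence $L_{H_i}=L_{H_{i-1}}$ and so $R_i=R_{i-1}$.

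First I would observe that since $\beta\neq 0$, the two matrices $L_{H_{i-1}}$ and $\phi_{(u_i,v_i)}$ span the same one-dimensional subspace $\mathcal{V}:=\mathrm{span}\{\phi_{(u_i,v_i)}\}$. Therefore the feasible set $\{\alpha_1 L_{H_{i-1}}+\alpha_2\phi_{(u_i,v_i)}:(\alpha_1,\alpha_2)\in\mathbb{R}^2\}$ is exactly $\mathcal{V}$, and the unique minimizer of $\|L_G-M\|_F^2$ over $M\in\mathcal{V}$ is the orthogonal projection
$$
P_{\mathcal V}(L_G)=\frac{\langle L_G,\phi_{(u_i,v_i)}\rangle}{\|\phi_{(u_i,v_i)}\|_F^2}\,\phi_{(u_i,v_i)}.
$$

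Next I would verify that $L_{H_{i-1}}$ already equals this projection. By Lemma~\ref{lemma-eco1} applied at iteration $i-1$ we have $\langle R_{i-1},L_{H_{i-2}}\rangle=0$ and $\langle R_{i-1},\phi_{(u_{i-1},v_{i-1})}\rangle=0$, and therefore $\langle R_{i-1},L_{H_{i-1}}\rangle=0$ (exactly as in the proof of Lemma~\ref{lemma-eco2}). Substituting $L_{H_{i-1}}=\beta\phi_{(u_i,v_i)}$ into this identity yields
$$
0=\langle L_G-\beta\phi_{(u_i,v_i)},\beta\phi_{(u_i,v_i)}\rangle=\beta\langle L_G,\phi_{(u_i,v_i)}\rangle-\beta^2\|\phi_{(u_i,v_i)}\|_F^2,
$$
and since $\beta\neq 0$ we obtain $\beta=\langle L_G,\phi_{(u_i,v_i)}\rangle/\|\phi_{(u_i,v_i)}\|_F^2$, i.e. $L_{H_{i-1}}=P_{\mathcal V}(L_G)$.

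Combining the two steps, the minimizer of Step~3 at iteration $i$ must also be $P_{\mathcal V}(L_G)=L_{H_{i-1}}$, so $L_{H_i}=L_{H_{i-1}}$ and hence $R_i=L_G-L_{H_i}=L_G-L_{H_{i-1}}=R_{i-1}$. Taking Frobenius norms gives $\|R_i\|_F=\|R_{i-1}\|_F$. The only conceptual obstacle is recognizing that the degeneracy of the search space together with the orthogonality relation from Lemma~\ref{lemma-eco1} at the previous step is precisely what forces the Step~3 minimizer to coincide with $L_{H_{i-1}}$; everything else is a direct computation.
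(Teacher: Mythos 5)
Your proof is correct, and it takes a genuinely different route from the paper's. The paper's proof sandwiches $\|R_i\|_F^2$ between $\|R_{i-1}\|_F^2$ from both sides by comparing nested minimization problems: one chain shows the Step-3 minimization at iteration $i$ collapses to a one-parameter search along $\{\gamma L_{H_{i-1}}\}$ and compares it to the two-parameter search from iteration $i-1$, and a second ``in general'' chain records the general monotonicity $\|R_{i-1}\|_F\geq\|R_i\|_F$. (Incidentally, the paper's first chain appears to carry a sign typo: the inequality relating the one-parameter minimum over $\gamma$ to the two-parameter minimum over $(\gamma_1,\gamma_2)$ should read $\geq$, since the one-parameter family is a subset of the two-parameter family; with that correction the paper's sandwich argument closes.) You instead identify the Step-3 minimizer explicitly as the orthogonal projection $P_{\mathcal V}(L_G)$ onto $\mathrm{span}\{\phi_{(u_i,v_i)}\}$ and then invoke the orthogonality relation $\langle R_{i-1},L_{H_{i-1}}\rangle=0$ (which follows from Lemma~\ref{lemma-eco1} applied at iteration $i-1$; this is legitimate here because the hypothesis $L_{H_{i-1}}=\beta\phi_{(u_i,v_i)}$ with $\beta\neq0$ forces $i\geq 2$, as $L_{H_0}=0$) to conclude $\beta=\langle L_G,\phi_{(u_i,v_i)}\rangle/\|\phi_{(u_i,v_i)}\|_F^2$, i.e.\ $L_{H_{i-1}}=P_{\mathcal V}(L_G)$. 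Your approach buys a strictly stronger conclusion: it shows $L_{H_i}=L_{H_{i-1}}$ and hence $R_i=R_{i-1}$ as matrices, not merely equality of Frobenius norms, and it sidesteps the direction-of-inequality bookkeeping entirely.
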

\begin{proof}
If $L_{H_{i-1}}=\beta \phi_{(u_i,v_i)}$ for some $\beta\neq 0$, we get
\begin{equation}
\label{lemma-eco4-p1}
\begin{array}{ll}
\|R_{i}\|_F^2&=\min\limits_{\alpha\in\mathbb{R}^2}\big\|L_G-\alpha_1L_{H_{i-1}}-
\alpha_2\phi_{(u_i,v_i)}\big\|^2_F
\\
&=\min\limits_{\alpha\in\mathbb{R}^2}\big\|L_G-(\alpha_1+\alpha_2/ \beta)L_{H_{i-1}}\big\|^2_F
\\
&=\min\limits_{\gamma\in\mathbb{R}}\big\|L_G-\gamma L_{H_{i-1}}\big\|^2_F
\\
&=\min\limits_{\gamma\in\mathbb{R}}\big\|L_G-\gamma\alpha_1^{i-1}L_{H_{i-2}}-
\gamma\alpha_2^{i-1}\phi_{(u_{i-1},v_{i-1})}\big\|^2_F
\\
&\leq \min\limits_{ (\gamma_1,\gamma_2)\in\mathbb{R}^2}\big\|L_G-\gamma_1L_{H_{i-2}}
-\gamma_2\phi_{(u_{i-1},v_{i-1})}\big\|^2_F
\\
&=\|L_G-L_{H_{i-1}}\|^2_F=\|R_{i-1}\|_F^2
\end{array}
\end{equation}
and hence the conclusion  $\|R_{i}\|_F\leq\|R_{i-1}\|_F$ holds in this case. In general,
$$
\begin{array}{ll}
\|R_{i-1}\|_F^2&=\big\|L_G-\alpha^{i-1}_1L_{H_{i-2}}-\alpha_2^{i-1}
\phi_{(u_{i-1},v_{i-1})}\big\|^2_F
\\
&\geq\min\limits_{\alpha\in\mathbb{R}^2}
\big\|L_G-\alpha_1(\alpha^{i-1}_1L_{H_{i-2}}+\alpha_2^{i-1}
\phi_{(u_{i-1},v_{i-1})})-\alpha_2
\phi_{(u_i,v_i)}\big\|^2_F
\\
&=\min\limits_{\alpha\in\mathbb{R}^2} \big\|L_G-\alpha_1L_{H_{i-1}}-\alpha_2
\phi_{(u_i,v_i)}\big\|^2_F =\|R_{i}\|_F^2.
\end{array}
$$
This completes the proof.
\end{proof}

\begin{lemma}
\label{lemma3.6}
Suppose that $R_{i-1}\neq 0$ for some $i\geq1$. Then, $L_{H_{i-1}}\neq\beta \phi_{(u_i,v_i)}$ for all $\beta \neq 0$.
\end{lemma}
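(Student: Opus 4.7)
The plan is to argue by contradiction: assume $L_{H_{i-1}} = \beta\,\phi_{(u_i,v_i)}$ for some $\beta \neq 0$ and derive that the residual $R_{i-1}$ must then be orthogonal (in the Frobenius inner product) to every edge matrix, and ultimately to itself. The case $i=1$ is immediate, since $L_{H_0}=0$ while $\beta\phi_{(u_1,v_1)} \neq 0$ for $\beta\neq 0$, so I will focus on $i\geq 2$.

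For $i\geq 2$, the first step is to record the orthogonality $\langle R_{i-1}, L_{H_{i-1}}\rangle = 0$. This follows by combining Lemma~\ref{lemma-eco1} (applied with index $i-1$ in place of $i$), which gives $\langle R_{i-1}, L_{H_{i-2}}\rangle = 0$ and $\langle R_{i-1}, \phi_{(u_{i-1},v_{i-1})}\rangle = 0$, with the update rule \eqref{LH} that writes $L_{H_{i-1}}$ as a linear combination of these two matrices. Under the standing assumption $L_{H_{i-1}} = \beta\,\phi_{(u_i,v_i)}$ with $\beta\neq 0$, dividing by $\beta$ then yields $\langle R_{i-1}, \phi_{(u_i,v_i)}\rangle = 0$.

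Next I invoke the greedy selection rule in Step~2: the edge $(u_i,v_i)$ was chosen to realize the maximum of $|\langle \phi_{(u,v)}, R_{i-1}\rangle|$ over $(u,v)\in E$. Since this maximum is now zero, we get $\langle \phi_{(u,v)}, R_{i-1}\rangle = 0$ for every edge $(u,v)\in E$. Taking the weighted sum against ${\bf w}$ and using $L_G=\sum_{(u,v)\in E} w_{(u,v)} \phi_{(u,v)}$ then produces $\langle L_G, R_{i-1}\rangle = 0$.

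Finally, combining $\langle L_G, R_{i-1}\rangle = 0$ with $\langle L_{H_{i-1}}, R_{i-1}\rangle = 0$ and the identity $R_{i-1} = L_G - L_{H_{i-1}}$ gives
\[
\|R_{i-1}\|_F^2 = \langle R_{i-1}, L_G - L_{H_{i-1}}\rangle = 0,
\]
so $R_{i-1} = 0$, contradicting the hypothesis. This yields the conclusion. There is no real obstacle here; the only subtlety is making sure that the orthogonality $\langle R_{i-1}, L_{H_{i-1}}\rangle=0$ is legitimately obtained from Lemma~\ref{lemma-eco1} (whose statement is about the step-$i$ residual and the previous iterate, so one must shift the index by one) and that the boundary case $i=1$ is handled separately.
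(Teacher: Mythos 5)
Your proof is correct, but it takes a cleaner route than the paper's. The paper's proof of this lemma is a detour: it invokes Lemma~\ref{lemma-eco4} (which itself needs the hypothesis $L_{H_{i-1}}=\beta\phi_{(u_i,v_i)}$) to show $\|R_i\|_F=\|R_{i-1}\|_F$, from which it argues that $\gamma^*=1$ is the unique minimizer of the one-dimensional problem $\min_\gamma\|L_G-\gamma L_{H_{i-1}}\|_F^2$, and only then extracts $\langle R_{i-1},L_{H_{i-1}}\rangle=0$ via first-order optimality of that reduced problem. You observe, more directly, that $\langle R_{i-1},L_{H_{i-1}}\rangle=0$ holds \emph{unconditionally} for $i\geq 2$: shift Lemma~\ref{lemma-eco1} to index $i-1$ to get $\langle R_{i-1},L_{H_{i-2}}\rangle=0$ and $\langle R_{i-1},\phi_{(u_{i-1},v_{i-1})}\rangle=0$, and then the update rule \eqref{LH} writes $L_{H_{i-1}}$ as a linear combination of these two, so the orthogonality transfers. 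This bypasses Lemma~\ref{lemma-eco4} and the $\gamma^*=1$ argument entirely. Both proofs then hit the same contradiction through the greedy selection in Step~2, though your version also makes explicit the step the paper leaves implicit at the end: why $\max_{(u,v)\in E}|\langle\phi_{(u,v)},R_{i-1}\rangle|\neq 0$, namely that if it vanished then $R_{i-1}$ would be orthogonal to $L_G$ and $L_{H_{i-1}}$ simultaneously, hence to itself, forcing $R_{i-1}=0$. You also handle the boundary case $i=1$ explicitly (where $L_{H_0}=0$ makes the statement vacuous), which the paper does not spell out. Overall your argument is shorter, uses fewer auxiliary facts, and is worth preferring.
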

\begin{proof}
If $L_{H_{i-1}}=\beta \phi_{(u_i,v_i)}$ with $\beta \neq 0$,
similar to \eqref{lemma-eco4-p1} we have
$$
\begin{array}{ll}
\|R_{i}\|_F^2&=\min\limits_{\gamma\in\mathbb{R}}\big\|L_G-\gamma L_{H_{i-1}}\big\|^2_F\\
&=\big\|L_G-\gamma_{i-1}^* L_{H_{i-1}}\big\|^2_F =\|R_{i-1}\|_F^2 =\|L_G-L_{H_{i-1}}\|^2_F,
\end{array}
$$
where $\gamma_{i-1}^*$ denotes the optimal solution of the minimization in terms of $\gamma$ and
the third equality follows from Lemma \ref{lemma-eco4}. As $R_{i-1}\neq 0$ , we have
$L_{H_{i-1}}\neq L_G$. Then from the above equality, we conclude that $\gamma_{i-1}^*=1$ is the
unique optimal solution. While by its first-order optimality condition, we have
$$
\big\langle L_{H_{i-1}}-L_G,L_{H_{i-1}}\big\rangle=0, \hbox{ i.e.}, \
\big\langle R_{i-1},L_{H_{i-1}}\big\rangle=0.
$$
However, this contradicts
$$
\big|\big\langle R_{i-1},L_{H_{i-1}}\big\rangle\big|=\big|\beta\big\langle R_{i-1},
\phi_{(u_i,v_i)}\big\rangle\big|=|\beta|\max_{(u,v)\in E}\big| \langle \phi_{(u,v)},R_{i-1}\rangle\big|\neq 0.
$$
This completes the proof.
\end{proof}

Similar to Lemma \ref{eco2},
we can build the following relationship for the residuals $\|R_i\|_F$ and $\|R_{i-1}\|_F$.
\begin{lemma}
\label{lemma-eco3}
$
\|R_{i}\|_F^2\leq
\|R_{i-1}\|^2_F-\frac{\langle R_{i-1},
\phi_{(u_i,v_i)}\rangle^2}{4}
$
for all $i\geq 1$.
\end{lemma}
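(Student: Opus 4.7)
The plan is to mimic the proof of Lemma~\ref{eco2} from the isotropic subset setting: exploit the fact that $R_i$ arises from a two-parameter least-squares minimization by restricting to a convenient one-parameter sub-family and then solving that simpler problem in closed form.

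Concretely, first I would use the defining identity $L_G = R_{i-1} + L_{H_{i-1}}$ to rewrite the minimization in Step~3 of Algorithm~\ref{alg-eco} as
\begin{equation*}
\|R_i\|_F^2 = \min_{(\alpha_1,\alpha_2)\in\mathbb{R}^2}\bigl\|R_{i-1} - (\alpha_1-1)L_{H_{i-1}} - \alpha_2 \phi_{(u_i,v_i)}\bigr\|_F^2.
\end{equation*}
Restricting to the slice $\alpha_1 = 1$ (which is clearly suboptimal or equal), one gets the upper bound
\begin{equation*}
\|R_i\|_F^2 \le \min_{\alpha_2 \in \mathbb{R}} \bigl\|R_{i-1} - \alpha_2 \phi_{(u_i,v_i)}\bigr\|_F^2,
\end{equation*}
which is a one-dimensional orthogonal-projection problem.

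Second, I would solve that one-dimensional problem directly: expanding the Frobenius norm in $\alpha_2$ and setting the derivative to zero gives the minimizer $\alpha_2^* = \langle R_{i-1},\phi_{(u_i,v_i)}\rangle / \|\phi_{(u_i,v_i)}\|_F^2$ and a minimum value equal to $\|R_{i-1}\|_F^2 - \langle R_{i-1},\phi_{(u_i,v_i)}\rangle^2 / \|\phi_{(u_i,v_i)}\|_F^2$. The only remaining arithmetic step is the easy observation that $\|\phi_{(u_i,v_i)}\|_F^2 = \|(\bfe_{u_i}-\bfe_{v_i})(\bfe_{u_i}-\bfe_{v_i})^\top\|_F^2 = \|\bfe_{u_i}-\bfe_{v_i}\|_2^4 = 4$, after which substituting yields exactly the claimed inequality.

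I do not anticipate a real obstacle here: the argument is a direct structural mimicry of Lemma~\ref{eco2}, with the only graph-specific ingredient being the constant $\|\phi_{(u,v)}\|_F^2 = 4$ that accounts for the denominator $4$ in the statement (in contrast to the $\|\bfv_{j_i}\|_2^4$ appearing in Lemma~\ref{eco2}). No inductive setup, no appeal to earlier algorithmic invariants such as those in Lemma~\ref{lemma-eco1}, and no case analysis on whether $R_{i-1}$ vanishes is required for this particular inequality.
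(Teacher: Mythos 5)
Your argument is correct and is exactly what the paper does: the paper's proof of this lemma simply invokes ``the similar argument of Lemma~\ref{eco2}'' together with the fact that $\langle \phi_{(u_i,v_i)},\phi_{(u_i,v_i)}\rangle = 4$, which is precisely the restriction to the slice $\alpha_1=1$ followed by the one-dimensional projection computation that you spell out.
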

\begin{proof} Using the similar argument of Lemma \ref{eco2} and together with the fact that $\langle \phi_{(u_i,v_i)},\phi_{(u_i,v_i)}\rangle =4$, one can easily have this lemma.
\end{proof}

To prove Theorem \ref{mainresult}, we still need several technique lemmas. To state conveniently, let
$$\Phi_E=\{\phi_{(u,v)}: (u,v)\in E\}$$
be the collection of all matrices, we write
$$\langle \Phi_E, \Phi_E\rangle = [ \langle \phi_{(u,v)}, \phi_{(\hat{u},
\hat{v})}\rangle]_{(u,v), (\hat{u}, \hat{v})\in E}$$
to be the Grammian matrix of $\Phi_E$, where
the inner product of two matrices is the standard trace of the product of two matrices. It is easy to see
that the collection of matrices $\phi_{(u,v)}, (u,v)\in E$ are linearly independent and hence,
the Grammian matrix is of full rank and so, the smallest eigenvalue $\lambda_{\min}(\langle \Phi_E, \Phi_E\rangle )>0$. The
following lemma shows that $\lambda_{\min}(\langle \Phi_E, \Phi_E\rangle )\geq 2$ which is essential in our argument, and we believe that it is of independent interest.

\begin{lemma}
\label{laixie12212019}
Let $\langle \Phi_E, \Phi_E\rangle $ be the Grammian matrix of $\Phi_E$. Then the smallest eigenvalue of $\langle \Phi_E, \Phi_E\rangle$ is at least $2$.
\end{lemma}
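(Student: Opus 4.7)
The plan is to use the Rayleigh quotient characterization: it suffices to show that for every real vector $c=(c_e)_{e\in E}$ one has $c^\top \langle \Phi_E,\Phi_E\rangle\, c \ge 2\|c\|_2^2$. By bilinearity of the trace inner product,
$$
c^\top \langle \Phi_E,\Phi_E\rangle\, c \;=\; \Big\|\sum_{e\in E} c_e\, \phi_e\Big\|_F^2,
$$
so the task reduces to bounding the Frobenius norm of the single matrix $M_c:=\sum_{(u,v)\in E} c_{(u,v)}(\bfe_u-\bfe_v)(\bfe_u-\bfe_v)^\top$.

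The main step is a direct entrywise inspection of $M_c$. Reading off the contributions of each $\phi_{(u,v)}$, the diagonal entry $(M_c)_{ii}$ equals $d_i:=\sum_{e\ni i} c_e$ (the signed weighted degree of vertex $i$), while for $i\ne j$ the off-diagonal entry $(M_c)_{ij}$ equals $-c_{(i,j)}$ if $(i,j)\in E$ and $0$ otherwise. Splitting $\|M_c\|_F^2$ into diagonal and off-diagonal parts then gives
$$
\|M_c\|_F^2 \;=\; \sum_{i\in V} d_i^{\,2} \;+\; 2\sum_{(u,v)\in E} c_{(u,v)}^{\,2},
$$
where the factor $2$ in the off-diagonal sum comes from the symmetry $(M_c)_{ij}=(M_c)_{ji}$. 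Dropping the nonnegative first term yields $\|M_c\|_F^2\ge 2\|c\|_2^2$, which is exactly the desired bound.

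I expect no serious obstacle here: the whole argument is really a careful expansion of $\sum_e c_e\phi_e$ into its diagonal (degree-like) and off-diagonal (edge indicator) pieces, and then a Frobenius computation. The only place that requires a bit of care is the bookkeeping of the factor $2$ on the off-diagonal, since each edge $(u,v)$ produces two symmetric entries $-c_{(u,v)}$ in $M_c$ contributing $2c_{(u,v)}^2$ to $\|M_c\|_F^2$. Once this accounting is done, the Rayleigh inequality $\lambda_{\min}(\langle\Phi_E,\Phi_E\rangle)\ge 2$ follows immediately, and as a byproduct one sees that equality is approached precisely when the degree vector $(d_i)_{i\in V}$ vanishes, i.e.\ when $c$ lies in the cycle space of $G$ viewed as a signed incidence combination.
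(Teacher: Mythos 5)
Your proof is correct and is essentially the same argument as the paper's: the paper vectorizes each $\phi_{(u,v)}$ into a column of a big matrix $M$, computes $\|M\bfx\|_2^2$ entry by entry, and arrives at exactly the decomposition $2\sum_{e}x_e^2 + \sum_j d_j^2$ before dropping the degree term, which is the diagonal/off-diagonal split you carry out directly on $M_c=\sum_e c_e\phi_e$. The only difference is notational (explicit vectorization vs. direct Frobenius-norm accounting); the conceptual content is identical.

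One small caution about your closing aside: the equality locus is the kernel of the \emph{unsigned} vertex-edge incidence map ($d_i=\sum_{e\ni i}c_e=0$ for all $i$), which is not the cycle space of $G$ in the usual sense — that one is the kernel of the \emph{signed} incidence matrix. This does not affect the validity of the bound, but the characterization as stated is off.
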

\begin{proof}
We reshape the matrices $\phi_{(u,v)}$ in to vectors $m_{(u,v)}$, let
$$
M=[\cdots, m_{(u,v)},
\cdots]\big|_{(u,v)\in E}
$$
be an $|V|^2\times |E|$ matrix formed by all reshaped basis vectors.
For convenience, we enumerate vertices by indices $u=1, \ldots, n$ and $(u,v)\in E$
is the same as the $u^{th}$ vertex and $v^{th}$ vertex has an edge in $E$. By the
definition of $\langle \Phi_E, \Phi_E\rangle $, we know that
$$
\langle \Phi_E, \Phi_E\rangle =M^\top M.
$$
Let $M(i,j)$ denote the $(i,j)$-th component of $M$, then
$$
M(i,j)=\left\{
\begin{array}{ll}
-1,  \ \ \ \ \ \mbox{if} \ \ i=(u-1)n+v \ \ \mbox{and} \ \ j=\sum\limits_{i=1}^{v-1}(n-i)+u-v,
\\
-1,  \ \ \ \ \ \mbox{if} \ \ i=(v-1)n+u \ \ \mbox{and} \ \ j=\sum\limits_{i=1}^{v-1}(n-i)+u-v,
\\
1,  \ \ \ \ \ \ \ \mbox{if} \ \ i=(u-1)n+u \ \ \mbox{and} \ \ j=\sum\limits_{i=1}^{v-1}(n-i)+u-v,
\\
1, \ \ \ \ \ \ \ \mbox{if} \ \ i=(v-1)n+v \ \ \mbox{and} \ \ j=\sum\limits_{i=1}^{v-1}(n-i)+u-v,
\\
0, \ \ \ \ \ \ \ \mbox{otherwise}.
\end{array}
\right.
$$
So for any ${\bf x}= \{x_{(u,v)}, (u,v)\in E\}$, we have
$$
(M {\bf x})_{i}=\left\{
\begin{array}{ll}
-x_{(u,v)}, \ \ \ \ \ \ \ \ \ \ \ \ \ \ \ \ \ \  \mbox{if} \ \ i=(u-1)n+v \ \ \mbox{or} \ \ i=(v-1)n+u,
\\
\sum\limits_{ u,v} (x_{(u,j)} +x_{(j,v)}), \ \ \ \ \ \mbox{if} \ \ i=(j-1)n+j,
\end{array}
\right.
$$
where $(M{\bf x})_{i}$ denotes the $i$-th component of $M\bf x$. Therefore,
$$
{\bf x}^\top \langle \Phi_E, \Phi_E\rangle {\bf x}=\|M{\bf x}\|^2_2
= 2\sum_{(u,v)\in E} x_{(u,v)}^2 + \sum_{j=1}^n\bigg(
\sum_{ u,v}
\big(x_{(u,j)}+x_{(j,v)}\big) \bigg)^2
\ge 2 \|{\bf x}\|_2^2.
$$
Hence, the smallest eigenvalue $\lambda_{\min}(\langle \Phi_E, \Phi_E\rangle )\geq 2$.
\end{proof}

We also need the following lemma.
For convenience, let $\tau= \|L_G- L_\epsilon\|^2_F$.
\begin{lemma}
\label{mjlai12262019}
If $\|R_{i}\|^2_F>\tau$, then
\begin{equation}
\label{monotone}
\|R_{i}\|_F^2 -\tau \le \bigg(1- \frac{1}{2|S_{\epsilon}\cup E_{i-1}|}\bigg)(\|R_{i-1}\|_F^2- \tau).
\end{equation}
\end{lemma}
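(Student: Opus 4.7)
\textbf{Proof proposal for Lemma~\ref{mjlai12262019}.}
The plan is to connect the maximised inner product $M := |\langle R_{i-1}, \phi_{(u_i,v_i)}\rangle|$ selected in Step~$2$ of Algorithm~\ref{alg-eco} to the excess residual $\|R_{i-1}\|_F^2 - \tau$, using the auxiliary matrix $\Delta_{i-1} := L_\epsilon - L_{H_{i-1}}$ as the bridge, and then to finish by combining a lower bound on $M$ with the per-step descent in Lemma~\ref{lemma-eco3}. First I would set up the algebraic identity
$$\|R_{i-1}\|_F^2 - \tau = 2\langle R_{i-1}, \Delta_{i-1}\rangle - \|\Delta_{i-1}\|_F^2,$$
obtained by writing $R_{i-1} = (L_G - L_\epsilon) + \Delta_{i-1}$ and expanding the square. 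Note that under the hypothesis $\|R_{i-1}\|_F^2 > \tau$ this forces $\langle R_{i-1}, \Delta_{i-1}\rangle > 0$, so the right-hand side is genuinely positive and worth lower-bounding via $M$.

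Next I would exploit the fact that both $L_\epsilon$ (by the definition of $L_{n,\epsilon}$ in \eqref{Lne}) and $L_{H_{i-1}}$ (by construction in Algorithm~\ref{alg-eco}) are linear combinations of $\{\phi_{(u,v)} : (u,v) \in S\}$ with $S := S_\epsilon \cup E_{i-1} \subseteq E$. Writing $\Delta_{i-1} = \sum_{(u,v)\in S} \bfc_{(u,v)} \phi_{(u,v)}$, Cauchy--Schwarz together with the maximum property of $(u_i, v_i)$ applied over $E \supseteq S$ gives $|\langle R_{i-1}, \Delta_{i-1}\rangle| \le \sqrt{|S|}\, \|\bfc\|_2\, M$. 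To convert $\|\bfc\|_2$ into $\|\Delta_{i-1}\|_F$, I would invoke Lemma~\ref{laixie12212019}: extending $\bfc$ by zeros from $S$ to $E$ preserves both its $\ell_2$ norm and $\|\Delta_{i-1}\|_F^2$, while the eigenvalue bound $\lambda_{\min}(\langle \Phi_E, \Phi_E\rangle) \ge 2$ yields $\|\bfc\|_2 \le \|\Delta_{i-1}\|_F/\sqrt{2}$.

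Plugging these bounds back into the identity produces the one-variable inequality
$$\|R_{i-1}\|_F^2 - \tau \le M\sqrt{2|S|}\,\|\Delta_{i-1}\|_F - \|\Delta_{i-1}\|_F^2,$$
whose maximum in $\|\Delta_{i-1}\|_F \ge 0$ equals $M^2|S|/2$. Hence $M^2 \ge 2(\|R_{i-1}\|_F^2 - \tau)/|S|$, and substituting into Lemma~\ref{lemma-eco3} and subtracting $\tau$ from both sides recovers exactly the claimed contraction factor $1 - 1/(2|S_\epsilon \cup E_{i-1}|)$. The main obstacle I anticipate is the eigenvalue step: Lemma~\ref{laixie12212019} is stated over all of $E$, so one must justify that the bound transfers to the sub-collection indexed by $S$; this is in fact immediate since a principal submatrix of a symmetric positive definite matrix has smallest eigenvalue no smaller than the parent's, or equivalently because the zero-padded coefficient vector realises the full Grammian quadratic form. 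Everything else is algebra coupled with the already-proved one-step decrease.
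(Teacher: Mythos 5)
Your proposal is correct and follows essentially the same route as the paper: bound $|\langle R_{i-1},\Delta_{i-1}\rangle|$ via Cauchy--Schwarz plus the max-property of the selected edge, convert coefficient norms to $\|\Delta_{i-1}\|_F$ via Lemma~\ref{laixie12212019}, and combine with Lemma~\ref{lemma-eco3}. The only cosmetic differences are that you complete a square in $\|\Delta_{i-1}\|_F$ where the paper uses the AM--GM inequality $ab\le\tfrac12(a^2+b^2)$ to reach the same intermediate bound $\|R_{i-1}\|_F^2-\tau\le |\langle \Delta_{i-1},R_{i-1}\rangle|^2/\|\Delta_{i-1}\|_F^2$, and that by starting directly from the identity for $\|R_{i-1}\|_F^2-\tau$ you bypass the paper's invocation of the orthogonality relation (Lemma~\ref{lemma-eco1}) used to replace $\langle L_\epsilon,R_{i-1}\rangle$ with $\langle\Delta_{i-1},R_{i-1}\rangle$.
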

\begin{proof}
To state conveniently, define $C_{\epsilon}(u,v)=0$ for $(u,v)\in E\setminus S_{\epsilon}$, then $L_{\epsilon}=\Phi_{E} C_{\epsilon}$. Similarly, $L_{H_{i-1}}=  \Phi_E C_{i-1}$ with vector $C_{i-1}\in \mathbb{R}^{|E|}$ whose zero entries is in $E\setminus E_{i-1}$. It follows from Lemma~\ref{lemma-eco1} that $L_{H_{i-1}}$ is orthogonal to $R_{i-1}$, so we have
\begin{equation}
\label{key3n}
\begin{array}{ll}
|\langle L_\epsilon, R_{i-1}\rangle| &= |\langle L_\epsilon- L_{H_{i-1}},R_{i-1}\rangle| \\[0.2cm]
&= |\sum_{(u,v)\in E} (C_{i-1}(u,v) - C_\epsilon(u,v))
\langle \phi_{(u,v)}, R_{i-1}\rangle| \\[0.3cm]
&\le \|C_\epsilon -C_{i-1}\|_1 \cdot\max_{(u,v)\in E}
|\langle \phi_{(u,v)}, R_{i-1} \rangle| \\[0.3cm]
&\le \sqrt{|S_{\epsilon} \cup E_{i-1}|}
\|C_\epsilon -C_{i-1}\|_2 \cdot|\langle R_{i-1}, \phi_{(u_{i},v_{i})}\rangle|,
\end{array}
\end{equation}
and by Lemma~\ref{laixie12212019}, we have
\begin{equation}
\label{key4n}
\|L_\epsilon- L_{H_{i-1}}\|_F^2 = \|\Phi_E (C_\epsilon -C_{i-1})\|_F^2\ge
\|C_\epsilon -C_{i-1}\|_2^2 \lambda_{\min}(\langle \Phi_E, \Phi_E\rangle)
\ge 2 \|C_\epsilon -C_{i-1}\|_2^2.
\end{equation}
 We claim that
\begin{equation}
\label{key5n}
\|R_{i-1}\|_F^2 - \tau =
\|R_{i-1}\|_F^2- \|L_G - L_\epsilon\|_F^2 \le \frac{
|\langle L_\epsilon,R_{i-1}\rangle|^2}{\|L_\epsilon- L_{H_{i-1}}\|_F^2}.
\end{equation}
Indeed, using the inequality $ab\leq \frac{1}{2}(a^2+b^2)$ and Lemma~\ref{lemma-eco1} we have
\begin{align*}
\|L_{H_{i-1}} - L_\epsilon\|_F &\sqrt{\|R_{i-1}\|_F^2- \tau^2}
\le \frac{1}{2}(\|L_{H_{i-1}} - L_\epsilon\|_F^2
+ \|R_{i-1}\|_F^2- \|L_G - L_\epsilon\|_F^2)\cr
&= \frac{1}{2}(\|L_{H_{i-1}} - L_\epsilon\|_F^2 + \|R_{i-1}\|_F^2- \|L_{H_{i-1}}- L_\epsilon
+R_{i-1}\|_F^2)\cr
&\le |\langle L_{H_{i-1}} -L_\epsilon, R_{i-1}\rangle|=
|\langle L_\epsilon, R_{i-1}\rangle|.
\end{align*}
 Combining (\ref{key3n}) and
(\ref{key4n}), (\ref{key5n}) can be estimated as
$$
\|R_{i-1}\|_F^2 - \tau   \le \frac{
|\langle L_\epsilon,R_{i-1}\rangle|^2}{\|L_\epsilon- L_{H_{i-1}}\|_F^2}
\le \frac{|S_{\epsilon} \cup E_{i-1}| |\langle R_{i-1}, \phi_{(u_{i},v_{i})}\rangle|^2}
{2}.
$$

On the other hand, by Lemma~\ref{lemma-eco3} we have
$$
\|R_{i}\|_F^2 \le \|R_{i-1}\|_F^2 -|\langle R_{i-1}, \phi_{(u_{i},v_{i})}\rangle|^2/4
\label{key1n}
$$
or
\begin{align*}
\|R_{i}\|_F^2-\tau &\le \|R_{i-1}\|_F^2 -\tau
-|\langle R_{i-1}, \phi_{(u_{i},v_{i})}\rangle|^2/4\cr
&\le \|R_{i-1}\|_F^2 -\tau  - \frac{1}{2|S_{\epsilon} \cup E_{i-1}|} (\|R_{i-1}\|_F^2 - \tau)\cr
&= \bigg(1- \frac{1}{2|S_{\epsilon} \cup E_{i-1}|}\bigg)(\|R_{i-1}\|_F^2 - \tau)
\end{align*}
where we have used (\ref{key5n}).
We have thus obtained the desired result.
\end{proof}

\begin{proof}[Proof of Theorem~\ref{mainresult}]
If $\|R_{i}\|^2_{F}>\tau$, let $\delta=\min\bigg\{\sqrt{1- \frac{1}{2|E|}},\sqrt{1-\frac{\epsilon^2}{4n}}\bigg\}$, then by  Lemma~\ref{mjlai12262019} we have
$$
\|R_{i}\|_F^2-\tau\leq \delta^{2i}(\|L_G\|_F^2-\tau)\leq \delta^{2i}\|L_G\|_F^2.
$$
which togethers with $\|R_{i}\|_{F}-\sqrt{\tau}\leq\sqrt{\|R_{i}\|^2_{F}-\tau}$ implies
$$
\|R_{i}\|_{F}\leq \sqrt{\tau}+\delta^{i}\|L_G\|_F=\|L_G-L_{\epsilon}\|_F+\delta^{i}\|L_G\|_F.
$$
Hence
$$
\begin{array}{ll}
\|L_{H_i}-L_{\epsilon}\|_F&\leq \|L_G-L_{H_i}\|_F+\|L_G-L_{\epsilon}\|_F
\\
&=\|R_{i}\|_F+\|L_G-L_{\epsilon}\|_F
\\
&\leq 2\|L_G-L_{\epsilon}\|_F+\delta^{i}\|L_G\|_F.
\end{array}
$$
This completes the proof.
\end{proof}

\subsection{Spectral Sparsification}
\label{sect.srg}
This subsection aims to discuss that the sparsifier output by  Algorithm~\ref{alg-eco} is a good spectral sparsification. It is well known that the Laplacian matrix $L_G$ has an eigenvalue $0$ and the eigenvalue $0$
will be a multiple linearly independent eigenvalue vectors $\bfu_1, \cdots, \bfu_k$ if $G$ has $k$ clusters. Let us write
$\lambda_1\le \cdots \le \lambda_n$ to be the eigenvalues of $L_G$ and $\lambda_{k+1}>0$ be the first nonzero eigenvalue. Then we have
 \begin{theorem}
\label{The2021-2-5}
Suppose that the graph $G$ has $k$ clusters so that the eigenvalue $\lambda_{k+1}>0$ is the first nonzero eigenvalue of $L_G$. Let $R_i$ be the residual matrix defined in \eqref{residual:eco} of Algorithm~\ref{alg-eco} for  spectral sparsification. Then
\begin{equation}
\label{LGLH}
\bigg(1- \frac{\|R_{i}\|_2}{\lambda_{k+1}}\bigg) L_G \preceq  L_{H_{i}}  \preceq \bigg(1+\frac{\|R_{i}\|_2}{\lambda_{k+1}}\bigg)  L_G
\end{equation}
if $\frac{\|R_{i}\|_2}{\lambda_{k+1}}<1$.
\end{theorem}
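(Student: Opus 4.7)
The plan is to reduce the desired semidefinite inequalities to a single scalar inequality
\[
|\bfx^\top R_i \bfx| \;\le\; \frac{\|R_i\|_2}{\lambda_{k+1}}\,\bfx^\top L_G\bfx\qquad\text{for all }\bfx\in\mathbb{R}^n,
\]
which is equivalent to \eqref{LGLH} after adding and subtracting $L_G$. The key subtlety is that both sides vanish on the $k$-dimensional null space of $L_G$, so the usual bound $|\bfx^\top R_i\bfx|\le \|R_i\|_2\|\bfx\|_2^2$ is too crude; one has to first confine the argument to $\mathrm{null}(L_G)^\perp$, where $L_G$ is bounded below by $\lambda_{k+1}$.

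First I would verify the null-space invariance: $R_i\bfu = 0$ for every $\bfu\in\mathrm{null}(L_G)$. This uses the fact that $L_{H_i}$ is constructed only from edges of $E$, since in Step~2 of Algorithm~\ref{alg-eco} each selected $(u_i,v_i)$ lies in $E$. If $\bfu\in\mathrm{null}(L_G)$, then $\bfu(u)=\bfu(v)$ for every $(u,v)\in E$, and hence $\phi_{(u_\ell,v_\ell)}\bfu=0$ for each term in $L_{H_i}=\sum_\ell C_i(u_\ell,v_\ell)\phi_{(u_\ell,v_\ell)}$; thus $L_{H_i}\bfu = 0$, and so $R_i\bfu = L_G\bfu - L_{H_i}\bfu = 0$.

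Next I would decompose $\bfx = \bfx_0 + \bfx_\perp$ orthogonally with $\bfx_0\in\mathrm{null}(L_G)$ and $\bfx_\perp\in\mathrm{null}(L_G)^\perp$. Since $L_G$, $L_{H_i}$, $R_i$ are symmetric and annihilate $\bfx_0$, all cross terms vanish, giving
\[
\bfx^\top R_i\bfx = \bfx_\perp^\top R_i \bfx_\perp,\qquad
\bfx^\top L_G\bfx = \bfx_\perp^\top L_G\bfx_\perp.
\]
On $\mathrm{null}(L_G)^\perp$ we have $\bfx_\perp^\top L_G\bfx_\perp \ge \lambda_{k+1}\|\bfx_\perp\|_2^2$ by the variational characterization of $\lambda_{k+1}$, and the operator-norm bound gives $|\bfx_\perp^\top R_i\bfx_\perp|\le \|R_i\|_2\|\bfx_\perp\|_2^2$. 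Combining these yields
\[
|\bfx^\top R_i\bfx| \;\le\; \|R_i\|_2\,\|\bfx_\perp\|_2^2 \;\le\; \frac{\|R_i\|_2}{\lambda_{k+1}}\,\bfx_\perp^\top L_G\bfx_\perp \;=\; \frac{\|R_i\|_2}{\lambda_{k+1}}\,\bfx^\top L_G\bfx,
\]
and since $\bfx^\top L_G\bfx - \bfx^\top L_{H_i}\bfx = \bfx^\top R_i\bfx$, rearranging produces \eqref{LGLH}. The hypothesis $\|R_i\|_2/\lambda_{k+1}<1$ is only used to guarantee that the lower bound $(1-\|R_i\|_2/\lambda_{k+1})L_G$ is still positive semidefinite, so that the two-sided estimate is nonvacuous. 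The main conceptual obstacle throughout is precisely the null-space invariance step, without which the bound would fail on the indicator vectors of the clusters; once that is in place, the rest is a clean split into $\mathrm{null}(L_G)$ and $\mathrm{null}(L_G)^\perp$.
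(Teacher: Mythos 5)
Your proposal is correct and follows essentially the same route as the paper: both rely on the observation (the paper's Lemma~\ref{xu5182020}) that $\mathrm{null}(L_G)\subseteq\mathrm{null}(L_{H_i})$, hence $R_i$ also annihilates $\mathrm{null}(L_G)$, and then bound $|\bfx^\top R_i\bfx|$ by $\|R_i\|_2/\lambda_{k+1}$ times $\bfx^\top L_G\bfx$. The one place you are actually \emph{more} careful than the paper is in the variational step: the paper writes $\lambda_{k+1}=\min_{\bfx\in\mathbb{R}^n\setminus\mathcal{N}(L_G)}\bfx^\top L_G\bfx/\bfx^\top\bfx$, which taken literally would have infimum $0$; your explicit orthogonal decomposition $\bfx=\bfx_0+\bfx_\perp$ and observation that both $R_i$ and $L_G$ kill $\bfx_0$ makes the reduction to the Rayleigh quotient on $\mathcal{N}(L_G)^\perp$ airtight.
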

 To prove this result, we need a preparation result below.

\begin{lemma}
\label{xu5182020}
Assume that $H$ is a subgraph of $G$. Then the null space of $L_H$ contains the null space of $L_G$.
\end{lemma}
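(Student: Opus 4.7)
The plan is to exploit the well-known fact that the null space of a weighted graph Laplacian is characterized by the quadratic form $\bfx^{\top} L \bfx$, which decomposes into a sum of non-negative edge contributions. Since both $L_G$ and $L_H$ are positive semidefinite, $\bfx \in \mathrm{null}(L)$ is equivalent to $\bfx^{\top} L \bfx = 0$, so it suffices to compare the two quadratic forms.

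First, I would take any $\bfx \in \mathrm{null}(L_G)$, so that
$$
0 \,=\, \bfx^{\top} L_G \bfx \,=\, \sum_{(u,v) \in E_G} w^{G}_{(u,v)}\bigl(x_u - x_v\bigr)^2.
$$
Since every term in this sum is non-negative and the weights $w^{G}_{(u,v)}$ are strictly positive, each individual term vanishes, i.e., $x_u = x_v$ for every edge $(u,v) \in E_G$.

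Next I would use the assumption that $H$ is a subgraph of $G$, interpreted in the usual way: the vertex set of $H$ coincides with that of $G$ and its edge set satisfies $E_H \subseteq E_G$ (with positive weights $w^{H}_{(u,v)}$ on the retained edges, e.g.\ the coefficients $C_i(u,v) \ge 0$ produced by Algorithm~\ref{alg-eco}). Then, restricting the identity $x_u = x_v$ to $(u,v) \in E_H \subseteq E_G$ gives
$$
\bfx^{\top} L_H \bfx \,=\, \sum_{(u,v) \in E_H} w^{H}_{(u,v)}\bigl(x_u - x_v\bigr)^2 \,=\, 0.
$$
Finally, since $L_H \succeq 0$, the vanishing of the quadratic form forces $L_H \bfx = 0$, so $\bfx \in \mathrm{null}(L_H)$, completing the inclusion $\mathrm{null}(L_G) \subseteq \mathrm{null}(L_H)$.

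The proof is essentially routine; the only real point to watch is the convention for ``subgraph of a weighted graph,'' namely that $E_H \subseteq E_G$ with positive weights. Under this convention there is no obstacle, and the lemma reduces to the elementary PSD decomposition above. (Equivalently, one could note that vectors in $\mathrm{null}(L_G)$ are precisely those constant on each connected component of $G$, and since every connected component of $H$ is contained in a connected component of $G$, such vectors are a fortiori constant on components of $H$, hence in $\mathrm{null}(L_H)$.)
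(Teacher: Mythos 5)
Your proof is correct and follows essentially the same approach as the paper's: both start from $\bfx^\top L_G\bfx=0$, deduce that each edge contribution $(x_u-x_v)^2$ vanishes on $E_G$, and then restrict to $E_H\subseteq E_G$. The only cosmetic difference is that you pass through $\bfx^\top L_H\bfx=0$ and invoke positive semidefiniteness, whereas the paper directly writes $L_H\bfx=\sum_{(u,v)\in E'}c_{(u,v)}(\bfe_u-\bfe_v)(\bfe_u-\bfe_v)^\top\bfx=0$ using $(\bfe_u-\bfe_v)^\top\bfx=0$ term by term.
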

\begin{proof}
It is enough to show that $L_H \bfx = 0$ if  $L_G\bfx = 0$. Suppose  that $L_G\bfx = 0$ for nonzero vector $\bfx$.
Then $\bfx^\top L_G \bfx = 0$ which implies that
$$
\bfx^\top (\bfe_u - \bfe_v )(\bfe_u - \bfe_v )^\top \bfx =0
$$
for all $(u,v)\in E$. It follows that $(\bfe_u - \bfe_v)^\top \bfx =0$.  Hence, for $E'\subset E$, we have
$$
L_H \bfx= \sum_{(u,v) \in E'} c_{(u,v)} (\bfe_u - \bfe_v) (\bfe_u- \bfe_v)^\top \bfx =0.
$$
This completes the proof. \end{proof}

 Now we are ready to prove Theorem \ref{The2021-2-5}.
 \begin{proof}[Proof of Theorem \ref{The2021-2-5}]
Let $\mathcal{N}(L_G)$ be the null space of $L_G$.  For any $\bfx\in \mathcal{N}(L_G)$, we have
$$
(1- \|R_{i}\|_2/\lambda_{k+1})\bfx^\top L_G \bfx  =\bfx^\top L_{H_i}\bfx = (1+\|R_{i}\|_2/\lambda_{k+1}) \bfx^\top L_G\bfx =0
$$
by using Lemma~\ref{xu5182020}. Next  we know
$$
\lambda_{k+1}= \min_{\bfx\in \mathbb{R}^n \backslash \mathcal{N}(L_G)} \frac{\bfx^\top L_G \bfx}{\bfx^\top\bfx} \le
\frac{\bfx^\top L_G \bfx}{\bfx^\top\bfx}
$$
for any $\bfx\not\in \mathcal{N}(L_G)$. In other words,
for such vector $\bfx$, we have $\bfx^\top\bfx \le \bfx^\top L_G \bfx/\lambda_{k+1}$.  Hence, for any $\bfx\not\in \mathcal{N}(L_G)$,
$$
\begin{array}{ll}
\bfx^\top L_{H_{i}}\bfx &= \bfx^\top L_G\bfx + \bfx^\top (L_{H_{i}}-L_G) \bfx
\\
&=\bfx^\top L_G\bfx - \bfx^\top R_{i} \bfx
\\
&\geq \bfx^\top L_G\bfx
-\|R_{i}\|_2 \bfx^\top \bfx
\\
&\geq  \bfx^\top L_G\bfx - \|R_{i}\|_2\bfx^\top L_G\bfx/\lambda_{k+1}
\\
&\geq (1-\|R_{i}\|_2/\lambda_{k+1})\bfx^\top L_G\bfx.
\end{array}
$$
if $\frac{\|R_{i}\|_2}{\lambda_{k+1}}<1$.
That is, we have the left-hand side of (\ref{LGLH}). Similarly, we have the right-hand side of (\ref{LGLH}). These complete the proof.
\end{proof}

By Theorem \ref{The2021-2-5}, we know that if
\begin{equation}\label{desirediq}
\frac{\|R_{i}\|_2}{\lambda_{k+1}}\leq O(\epsilon),
\end{equation}
then we find the desired $\frac{1-O(\epsilon)}{1+O(\epsilon)}$--spectral sparsifier. Next, we will establish \eqref{desirediq}.
Let us first establish another convergence result.
\begin{theorem}
\label{the-eco2}
Suppose that $R_i$ is the residual matrix defined in \eqref{residual:eco} of Algorithm~\ref{alg-eco}. Then
$$\|R_{i+1}\|_F^2 \leq \frac{\|L_G\|_F^2}{1+i\|L_G\|_F^2/(\mbox{Tr}(L_G))^2}, \ \ \forall i\geq1,$$
where $\mbox{Tr}(L_G)$ stands for the trace of matrix $L_G$.
\end{theorem}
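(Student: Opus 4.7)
\medskip

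The plan is to mirror the strategy used to prove Theorem~\ref{mainresult41} in the isotropic setting: bound $\|R_i\|_F^2$ from above by the weighted sum of inner products $\langle R_i,\phi_{(u,v)}\rangle$, use the greedy choice in Step~2 to extract a large single inner product, combine with the per-step decrease provided by Lemma~\ref{lemma-eco3}, and then close the recursion with the DeVore--Temlyakov lemma (Lemma~\ref{DT96}).

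First, I would use the two orthogonality relations from Lemma~\ref{lemma-eco1}, which give $\langle R_i, L_{H_i}\rangle = 0$, to rewrite
$$
\|R_i\|_F^2 \;=\; \langle R_i, R_i\rangle \;=\; \langle R_i, L_G - L_{H_i}\rangle \;=\; \langle R_i, L_G\rangle.
$$
Next, expand $L_G = \sum_{(u,v)\in E} w_{(u,v)}\phi_{(u,v)}$, note that the weights are nonnegative, and bound
$$
\|R_i\|_F^2 \;=\; \sum_{(u,v)\in E} w_{(u,v)}\langle R_i,\phi_{(u,v)}\rangle
\;\le\; \Bigl(\sum_{(u,v)\in E} w_{(u,v)}\Bigr)\cdot\max_{(u,v)\in E}\bigl|\langle R_i,\phi_{(u,v)}\rangle\bigr|.
$$
Since every $\phi_{(u,v)}$ has trace $2$, one has $\sum_{(u,v)\in E} w_{(u,v)} = \tfrac12\,\mathrm{Tr}(L_G)$, so by the greedy rule in Step~2 the edge $(u_{i+1},v_{i+1})$ satisfies
$$
\bigl|\langle R_i,\phi_{(u_{i+1},v_{i+1})}\rangle\bigr| \;\ge\; \frac{2\,\|R_i\|_F^2}{\mathrm{Tr}(L_G)}.
$$

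Combining this with Lemma~\ref{lemma-eco3} yields the key per-step recursion
$$
\|R_{i+1}\|_F^2 \;\le\; \|R_i\|_F^2 - \frac{\|R_i\|_F^4}{(\mathrm{Tr}(L_G))^2}.
$$
Setting $a_k := \|R_{k-1}\|_F^2/\|L_G\|_F^2$ for $k\ge 1$ (so that $a_1 = 1$ because $R_0 = L_G$) and $\beta_k := \|L_G\|_F^2/(\mathrm{Tr}(L_G))^2$, this recursion is exactly of the form $a_{k+1}\le a_k(1 - a_k\beta_k)$ required by Lemma~\ref{DT96}. Applying the lemma produces $a_{i+1}\le 1/(1+i\,\|L_G\|_F^2/(\mathrm{Tr}(L_G))^2)$, which, multiplied by $\|L_G\|_F^2$ and monotonicity in $i$, gives the claimed bound on $\|R_{i+1}\|_F^2$.

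The only mildly delicate point is the identity $\sum_{(u,v)\in E} w_{(u,v)} = \tfrac12\,\mathrm{Tr}(L_G)$ and the use of nonnegativity of the edge weights to drop absolute values in the first estimate; everything else is a direct translation of the argument for Theorem~\ref{mainresult41}, with the ``covering constant'' $K$ from the isotropic case replaced here by $\tfrac12\,\mathrm{Tr}(L_G)$ and the vector-norm factor $\|\bfv_{j_k}\|_2^4$ replaced by the constant $4 = \|\phi_{(u_k,v_k)}\|_F^2$ coming from Lemma~\ref{lemma-eco3}. No further technical obstacle is anticipated.
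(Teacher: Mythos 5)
Your proof is correct and follows the paper's argument essentially step by step: the orthogonality $\langle R_i,L_{H_i}\rangle=0$ gives $\|R_i\|_F^2=\langle R_i,L_G\rangle$, the $\ell_1$-norm of the weight vector equals $\tfrac12\mathrm{Tr}(L_G)$, Lemma~\ref{lemma-eco3} yields the per-step decrease $\|R_{i+1}\|_F^2\le\|R_i\|_F^2-\|R_i\|_F^4/(\mathrm{Tr}(L_G))^2$, and Lemma~\ref{DT96} closes the recursion. Your indexing choice $a_k=\|R_{k-1}\|_F^2/\|L_G\|_F^2$ actually matches the hypothesis $a_1=1$ of Lemma~\ref{DT96} more literally than the paper's own choice $a_k=\|R_k\|_F^2/\|L_G\|_F^2$, but both lead to the stated bound via monotonicity of $\|R_i\|_F$.
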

\begin{proof}
First by Lemma~\ref{lemma-eco1},
\begin{equation}
\begin{array}{ll}
 \|R_{i-1}\|_F^2&=\langle R_{i-1}, R_{i-1}\rangle= \langle R_{i-1}, L_G\rangle \\
&=\sum\limits_{(u,v)\in E} w_{(u,v)}\langle R_{i-1}, \phi_{(u,v)}\rangle \\
&\leq \|{\bf w}\|_1  |\langle R_{i-1}, \phi_{(u_{i},v_{i})}\rangle|
=\frac{\mbox{Tr}(L_G)}{2}  |\langle R_{i-1}, \phi_{(u_{i},v_{i})}\rangle|,
\end{array}
\end{equation}
 where the inequality follows from Step $2$ of Algorithm~\ref{alg-eco}. It now follows from Lemma~\ref{lemma-eco3} that
$$
\|R_{i}\|_F^2 \le \|R_{i-1}\|_F^2 - \frac{\|R_{i-1}\|_F^4}{(\mbox{Tr}(L_G))^2}.
$$
We next use Lemma~\ref{DT96} with $a_k=\|R_k\|_F^2/\|L_G\|_F^2$ and $\beta_k=\|L_G\|_F^2/(\mbox{Tr}(L_G))^2$
to conclude the desired result.
\end{proof}

 \begin{theorem}
\label{mainresult6}
Suppose that the graph $G$ has $k$ clusters so that the eigenvalue $\lambda_{k+1}>0$ is the first nonzero eigenvalue of
$L_G$.  Let $L_{H_i}$ be the iterative matrix defined in \eqref{LH} of Algorithm~\ref{alg-eco}. If the iteration $i
\geq \frac{n}{\epsilon^2}$ and  $\|R_{i+1}\|_2^2\leq O(\frac{1}{n-k})\|R_{i+1}\|_F^2$, then
\begin{equation}
\label{LGLH21}
(1- O(\epsilon)) L_G \preceq  L_{H_{i+1}}  \preceq (1+O(\epsilon))  L_G
\end{equation}
provided that $\frac{\mbox{Tr}(L_G)}{\lambda_{k+1}}\leq O(\sqrt{n(n-k)})$.
\end{theorem}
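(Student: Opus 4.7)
The plan is to combine Theorem \ref{The2021-2-5} (which reduces the spectral sparsifier bound to controlling $\|R_{i+1}\|_2/\lambda_{k+1}$) with Theorem \ref{the-eco2} (which controls $\|R_{i+1}\|_F$), and then pass from the Frobenius norm to the operator norm via the given hypothesis $\|R_{i+1}\|_2^2 \leq O(1/(n-k))\|R_{i+1}\|_F^2$. The final ingredient is the trace-to-spectral-gap assumption $\mbox{Tr}(L_G)/\lambda_{k+1} \leq O(\sqrt{n(n-k)})$, which is precisely what will convert the resulting bound into $O(\epsilon)$.

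First, I would invoke Theorem \ref{the-eco2} at iteration $i \geq n/\epsilon^2$. Dropping the $1$ in the denominator of that bound yields
\[
\|R_{i+1}\|_F^2 \;\le\; \frac{(\mbox{Tr}(L_G))^2}{i} \;\le\; \frac{(\mbox{Tr}(L_G))^2\,\epsilon^2}{n}.
\]
Next, applying the hypothesis $\|R_{i+1}\|_2^2 \leq O(1/(n-k))\|R_{i+1}\|_F^2$ gives
\[
\|R_{i+1}\|_2^2 \;\le\; O\!\left(\frac{1}{n-k}\right)\cdot \frac{(\mbox{Tr}(L_G))^2\,\epsilon^2}{n}
\;=\; O\!\left(\frac{(\mbox{Tr}(L_G))^2\,\epsilon^2}{n(n-k)}\right),
\]
so that
\[
\frac{\|R_{i+1}\|_2}{\lambda_{k+1}} \;\le\; O\!\left(\frac{\mbox{Tr}(L_G)}{\lambda_{k+1}\sqrt{n(n-k)}}\right)\,\epsilon \;\le\; O(\epsilon),
\]
where the last inequality uses the standing hypothesis on the trace. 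In particular, this quantity is smaller than $1$ for small enough $\epsilon$, so Theorem \ref{The2021-2-5} applies and delivers
\[
(1-O(\epsilon))L_G \;\preceq\; L_{H_{i+1}} \;\preceq\; (1+O(\epsilon))L_G,
\]
which is exactly \eqref{LGLH21}.

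The main conceptual point (rather than a technical obstacle) is justifying why the spectral-versus-Frobenius assumption is natural. By Lemma \ref{xu5182020}, any $L_{H_j}$ produced by Algorithm \ref{alg-eco} annihilates the null space of $L_G$, hence so does $R_{i+1}=L_G-L_{H_{i+1}}$; consequently $R_{i+1}$ has rank at most $n-k$, and its $n-k$ nonzero eigenvalues satisfy $\|R_{i+1}\|_F^2=\sum_j\lambda_j(R_{i+1})^2$. The assumed bound $\|R_{i+1}\|_2^2\le O(1/(n-k))\|R_{i+1}\|_F^2$ therefore corresponds to a roughly flat distribution of these $n-k$ eigenvalues. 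The trace assumption $\mbox{Tr}(L_G)/\lambda_{k+1}\le O(\sqrt{n(n-k)})$ is of the same flavour, since $\mbox{Tr}(L_G)=\sum_{j>k}\lambda_j\le (n-k)\lambda_{\max}$ and one expects $\lambda_{\max}/\lambda_{k+1}$ to be moderate for well-clustered graphs. Aside from these interpretive remarks, the actual argument is just the clean chain Theorem \ref{the-eco2} $\Rightarrow$ hypothesis $\Rightarrow$ Theorem \ref{The2021-2-5} outlined above.
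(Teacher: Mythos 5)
Your argument is correct and follows essentially the same route as the paper's proof: invoke Theorem~\ref{the-eco2} with $i\geq n/\epsilon^2$ to bound $\|R_{i+1}\|_F$, use the given $\|R_{i+1}\|_2^2 \le O(1/(n-k))\|R_{i+1}\|_F^2$ to pass to the operator norm, fold in the trace assumption, and then apply Theorem~\ref{The2021-2-5}. The only cosmetic difference is ordering: the paper absorbs the trace hypothesis into the Frobenius bound first, obtaining $\|R_{i+1}\|_F^2\le O((n-k)\lambda_{k+1}^2\epsilon^2)$, whereas you keep the two ingredients separate until the final line; your closing observation via Lemma~\ref{xu5182020} that $R_{i+1}$ has rank at most $n-k$ is a correct and useful extra justification for the hypothesis, but is not part of the formal argument.
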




\begin{proof}
Recall from Theorem \ref{the-eco2}, we have
\begin{equation}
\label{proof-2021125}
\|R_{i+1}\|_F^2 \leq \frac{\|L_G\|_F^2}{1+i\|L_G\|_F^2/(\mbox{Tr}(L_G))^2}\leq \frac{(\mbox{Tr}(L_G))^2}{i}\leq O((n-k)\lambda^2_{k+1}\epsilon^2)
\end{equation}
for $i\ge n/\epsilon^2$.
Hence, $$\frac{\|R_{i+1}\|_2}{\lambda_{k+1}}\leq O(\frac{1}{\sqrt{n-k}})\frac{\|R_{i+1}\|_F}{\lambda_{k+1}}\leq O(\epsilon).$$
This together with Theorem \ref{The2021-2-5} implies this theorem.
\end{proof}

\begin{remark}
The assumption $\frac{\mbox{Tr}(L_G)}{\lambda_{k+1}}\leq O(\sqrt{n(n-k)})$ in
Theorem~\ref{mainresult6} is satisfied for a large
class of graphs. Indeed, since $\frac{\mbox{Tr}(L_G)}{\lambda_{k+1}}= \frac{\lambda_n+\cdots +\lambda_{k+1}}{\lambda_{k+1}}\leq
(n-k)\frac{\lambda_{n}}{\lambda_{k+1}}$,  the assumption requires $\frac{\lambda_n}{\lambda_{k+1}}\leq
O\big(\sqrt{\frac{n}{n-k}}\big)=O(1)$. In general, there are several models of random graph whose eigenvalues of $L_G$ satisfy the
condition $\frac{\lambda_n}{\lambda_{k+1}}=O(1)$ with high probability.
One may refer to, for example \cite{A17,CR11,FK16, LM19}, for more details on graph theory.

For the assumption $\|R_{i+1}\|_2^2\leq O(\frac{1}{n-k})\|R_{i+1}\|_F^2$, we know that if the nonzero eigenvalues of $R_{i+1}$ are
very close to each other, then one can conclude this assumption.
For the random graph, we know that $\frac{\lambda_n}{\lambda_{k+1}}=O(1)$ which implies that the nonzero eigenvalues of $L_G$
are very close.  By (\ref{LGLH}),  nonzero eigenvalues of the output matrix $L_{H_{i+1}}$ are also very close to each other,
then we can expect this assumption. However, one will not be able to prove it for a general matrix $R_{i+1}$.
Without using the assumption, we are not able to establish the estimate in (\ref{LGLH21}). On the other hand side,
as our numerical experiment has shown
that we do have this estimate so far for various random graphs, we leave it to be an open problem to the community and
to our future study.
\end{remark}

\section{Numerical experiments}
The purpose of this section is to demonstrate that both UGA algorithms are  powerful  for graph sparsification
and sparse positive subset selection. It also shows that the sparse approximation produced by the algorithms
have the potential to improve the performance of graph clustering and least squares regression. All
 experiments were performed on a PC with the processor Intel(R) Core(TM) I5-1035G4 CPU @ 1.50GHz and 8GB memory.
 All programs were written in {\sc Matlab} R2019b.

\subsection{Performance of the UGA algorithms}
The main empirical question is whether the UGA algorithms can find the desired sparifiers correctly.
The following examples demonstrate that UGA algorithm for graph sparsifiers has good performance.
Note that the algorithms stop if the iteration $i>\lceil\frac{n}{\epsilon^2}\rceil.$
This ensures that the output sparse approximation has at most $\lceil\frac{n}{\epsilon^2}\rceil$
edges or vectors. So if the
output matrix $L$ by Algorithm \ref{alg-ecoIso} or  $L_{H}$ by Algorithm \ref{alg-eco} satisfies
\begin{equation}
\label{successrule2}
(1-\epsilon)^2I_n\preceq L\preceq(1+\epsilon)^2I_n
\end{equation}
or
\begin{equation}
\label{successrule}
(1-\epsilon)^2L_{G}\preceq L_{H}\preceq(1+\epsilon)^2L_G,
\end{equation}
then we  conclude that UGA finds the desired sparse approximation successfully.

\begin{example}
We first report the performance of  Algorithm \ref{alg-eco}. The experiment is described as follows.
We first generate random graph $G=(V,E,{\bf w})$ with weights  generated from $i.i.d.$ Poisson distribution or
exponential distribution with parameter (or expectation) $\lambda$. We experiment with $\lambda=1$ and $10$.
Parameter $\epsilon$ is varied from $0.2$ to $0.95$ with the step size $0.05$ as well as $\epsilon=0.99$.
For each $\epsilon$, we repeat $100$ times and calculate the averaged successful rates.
A trial is successful when the output Laplacian matrix $L_{H}$ satisfies \eqref{successrule}.
The results for $\lambda=1$ and $\lambda=10$ are summarized in Figure \ref{example1}, where one can see that
Algorithm \ref{alg-eco} can achieve a successful rate of $100\%$ when $\epsilon\leq 0.55$.
\begin{figure}[hptb]
 \centering
 \begin{tabular}{cc}
 \subfigure[Poisson distribution with $\lambda=1$]{\includegraphics[width=0.46\linewidth]{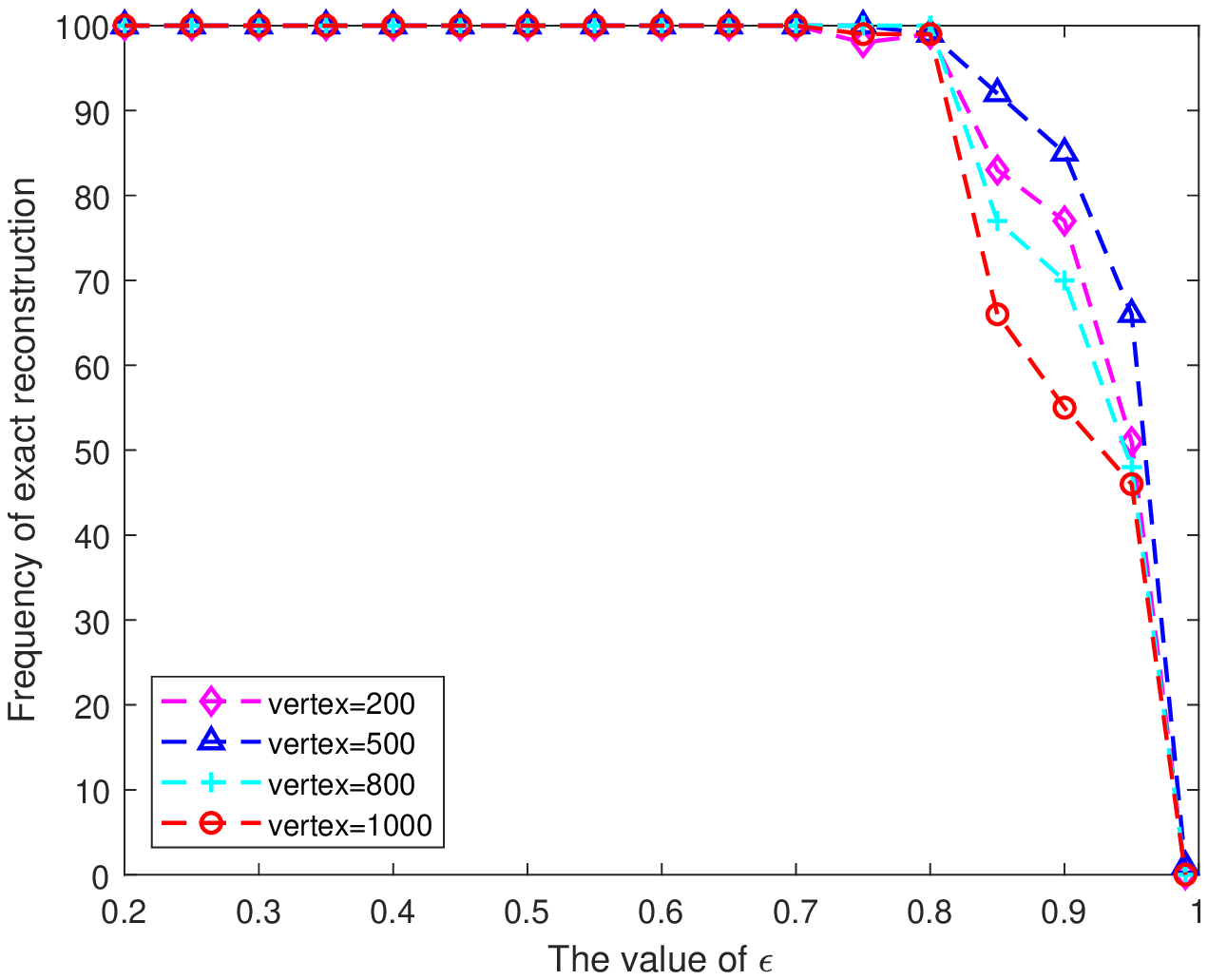}}
 \subfigure[Poisson distribution with $\lambda=10$]{\includegraphics[width=0.46\linewidth]{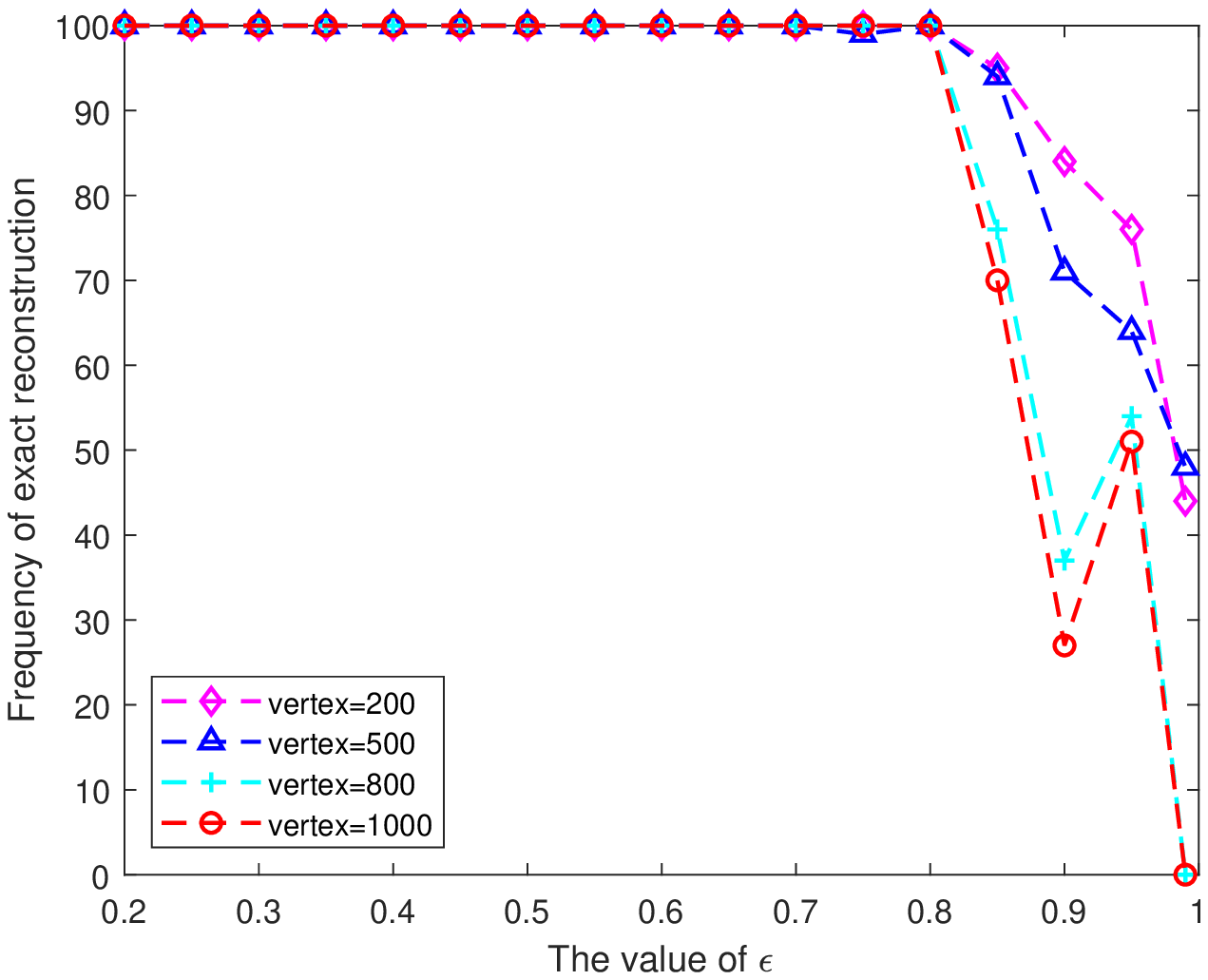}}
 \\
  \subfigure[Exponential distribution with $\lambda=1$]{\includegraphics[width=0.46\linewidth]{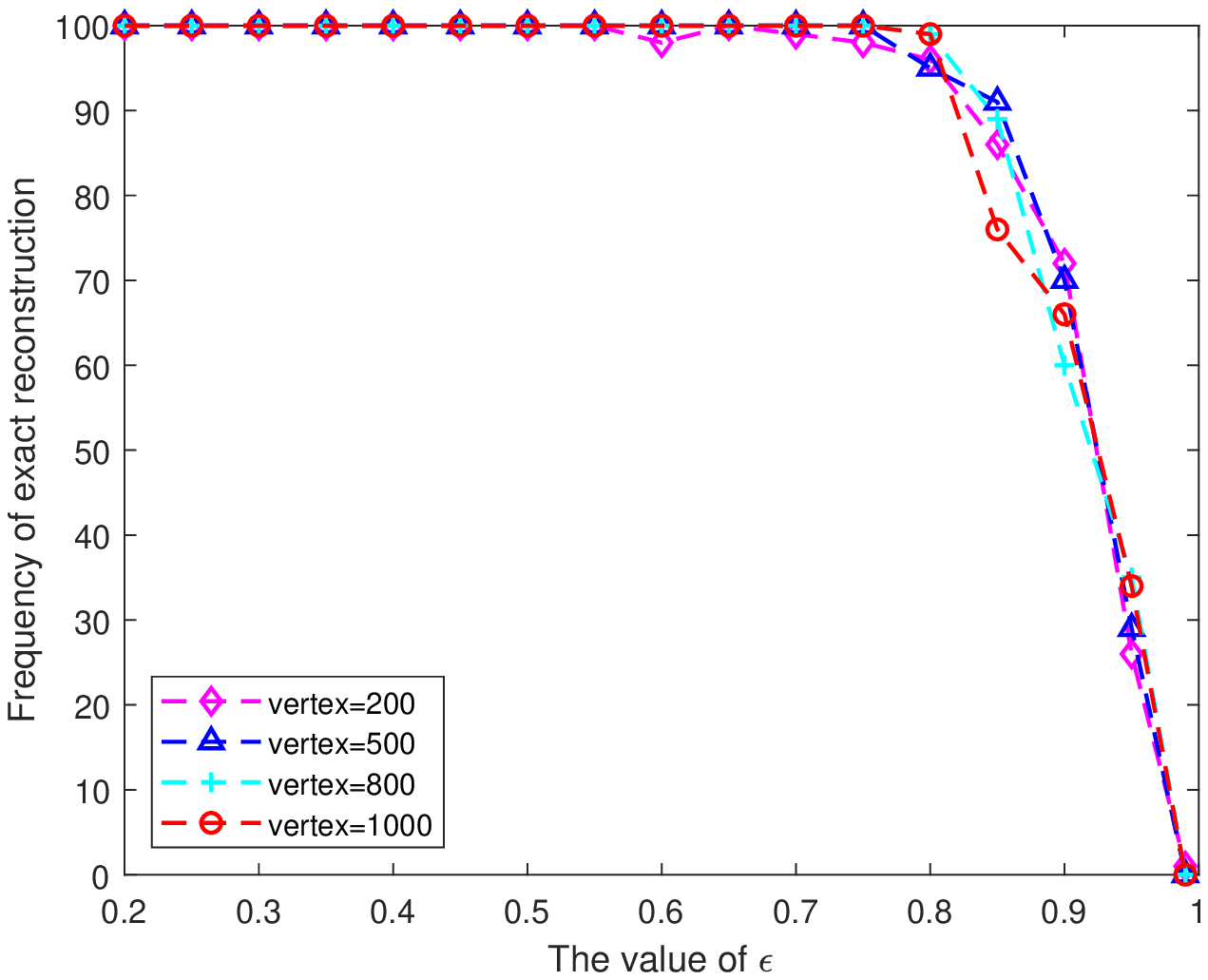}}
 \subfigure[Exponential distribution with $\lambda=10$]{\includegraphics[width=0.46\linewidth]{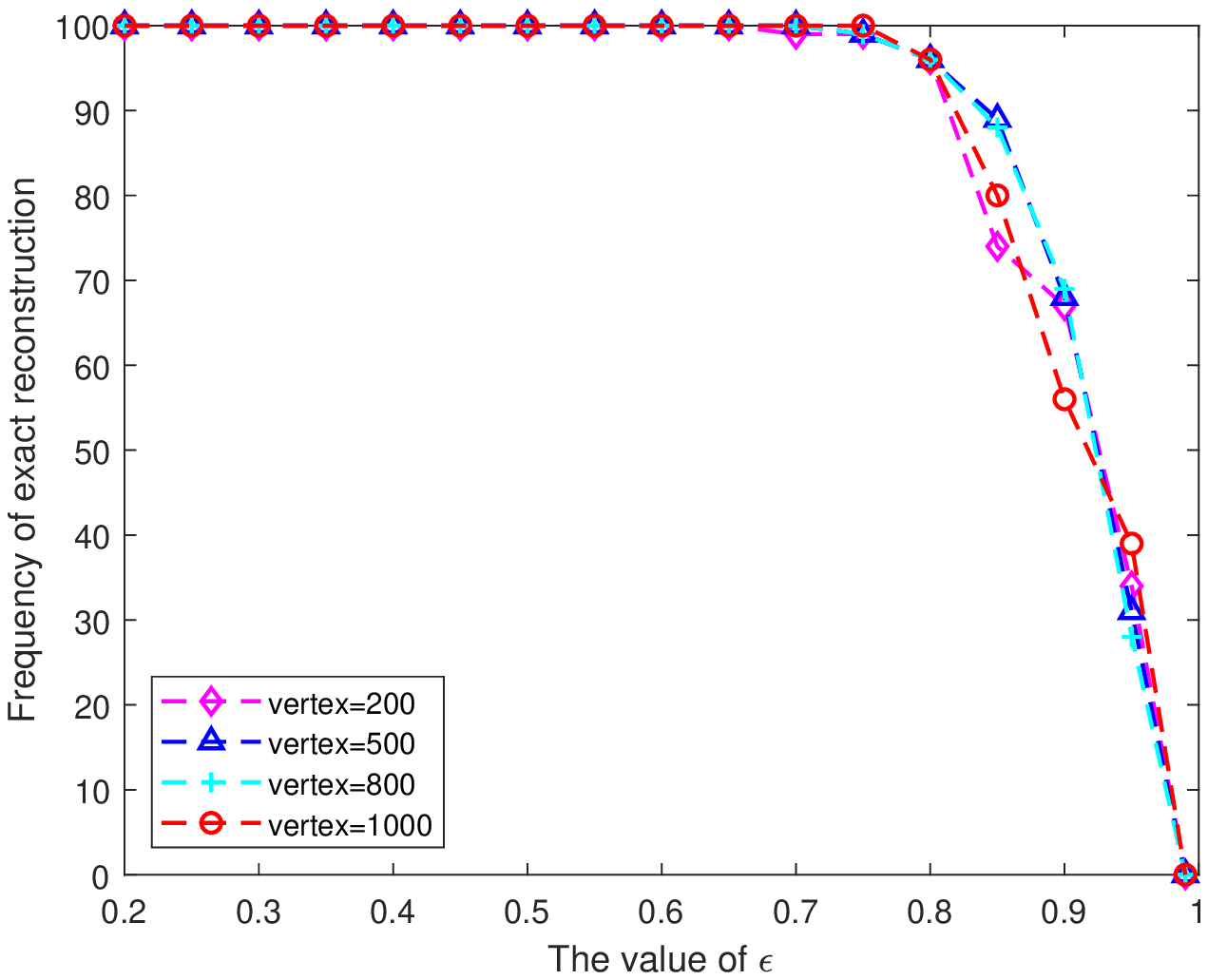}}
 \end{tabular}
 \caption{Success rate experiments. (1) Poisson distribution; (2) Exponential distribution.}
 \label{example1}
 \end{figure}
\end{example}

\begin{example}
This example consists of experiments on graphs sparsification for graphs subject to stochastic block model (SBM) \cite{A17, LM19}.
That is, all the vertices of a graph $G$
are divided into a few clusters, say $k$ clusters, $C_1, \ldots, C_k$ and there is  a
probability matrix $P=[p_{ij}]_{1\le i, j\le k}$ associated with $G$ with $p_{ij}\in
[0,1]$ such that possible edges among vertices in $C_i$ are subject to $p_{ii}$,
$i=1, \ldots, k$ and possible edge between a vertex in $C_i$ and a vertex in $C_j$ is
subject to the probability $p_{ij}$. Since there are more edges within a cluster than
among clusters, we have $p_{ij}<\min\{p_{ii}, p_{jj}\}$ for all $i\not=j$.
We run the experiments with $p_{ii}=p=0.1,i=1, \ldots, k$ and
$p_{i,j}=q=0.01,i\not=j$ for numbers of vertices $=500, 1000, 1500$ with $k$ clusters for $k=2,4,6$.
The results are summarized in Figure \ref{SBMgraph1}, where a trial is successful when the
output graph Laplacian  $L_{H}$ satisfies \eqref{successrule}.
 \begin{figure}[hptb]
 \centering
 \begin{tabular}{cc}
 \includegraphics[width=0.48\linewidth]{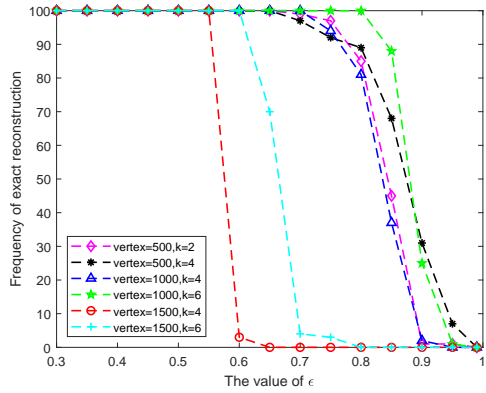}
 \end{tabular}
 \caption{Success rate experiments for SBM. }
 \label{SBMgraph1}
 \end{figure}

The figure shows that for $\epsilon\leq 0.55$, Algorithm \ref{alg-eco} can exactly find the desired sparifiers.
It is also interesting to see that for an fixed vertex number, the larger the cluster $k$, the higher the successful rate of Algorithm \ref{alg-eco}.
\end{example}

\begin{example}
We now show the performance of Algorithm \ref{alg-ecoIso}. We randomly generate the isotropic sets
$V=\{\bfv_1,\ldots,\bfv_m\}\subset\mathbb{R}^n$, where $V=Q^T\in\mathbb{R}^{n\times m}$ with
$Q$ being obtained by QR factorization on a $m\times n$ Gaussian matrix. So we have
$VV^\top=\sum\limits_{i=1}^n\bfv_i\bfv_i^\top =I_n$. The first test was done for fixed $n=100$ with different
 $m=n^2, 2n^2, 4n^2, 8n^2, 16n^2$. In the second test, we set $m=n^2$ for $n=100, 200, \cdots, 500$.
For each pair $(n,m)$, we repeat the experiments for 100 trials and calculate the averaged successful rate based
on \eqref{successrule2}.
Figure \ref{example2} shows the numerical results. The figure shows Algorithm \ref{alg-ecoIso}  performs very well.
In addition, it is interesting to note that the probability of success is descending first and then ascending with respect to the
parameter $\epsilon$. For example, it can be see that the probability of success with $\epsilon=0.99$ is higher than that
with $\epsilon=0.95$ which can be explained as follows. From the stopping criterion \eqref{successrule2}, we know that
when the parameter $\epsilon$ is  closing to $1$, there is a dramatic increment of the ratio of $(1+\epsilon)^2/(1-\epsilon)^2$,
but only a slight decrement of the size of the selected set by Algorithm \ref{alg-ecoIso}.
 \begin{figure}[hptb]
 \centering
 \begin{tabular}{cc}
 \subfigure[$n=100$]{\includegraphics[width=0.46\linewidth]{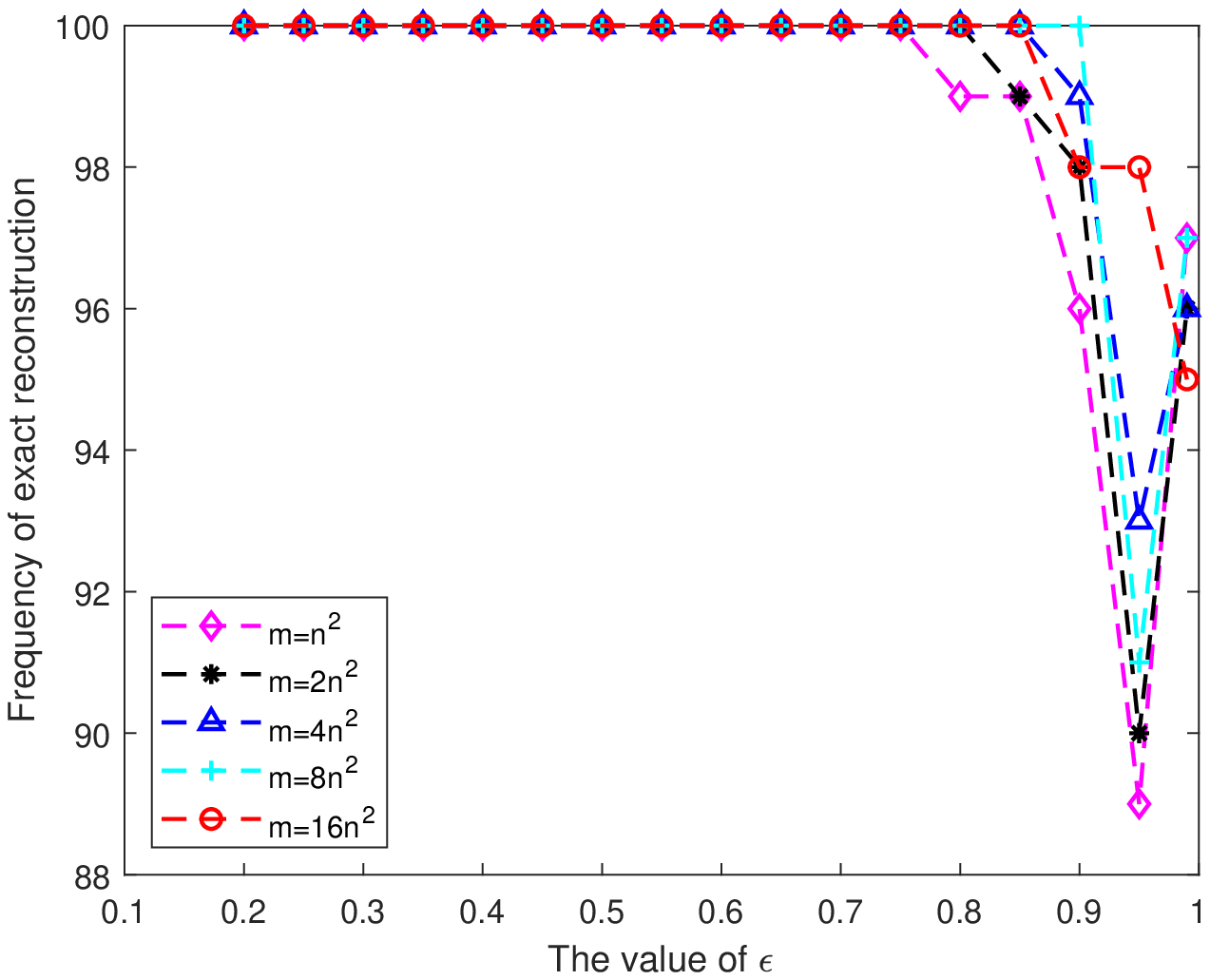}}
 \subfigure[$m=n^2$]{\includegraphics[width=0.46\linewidth]{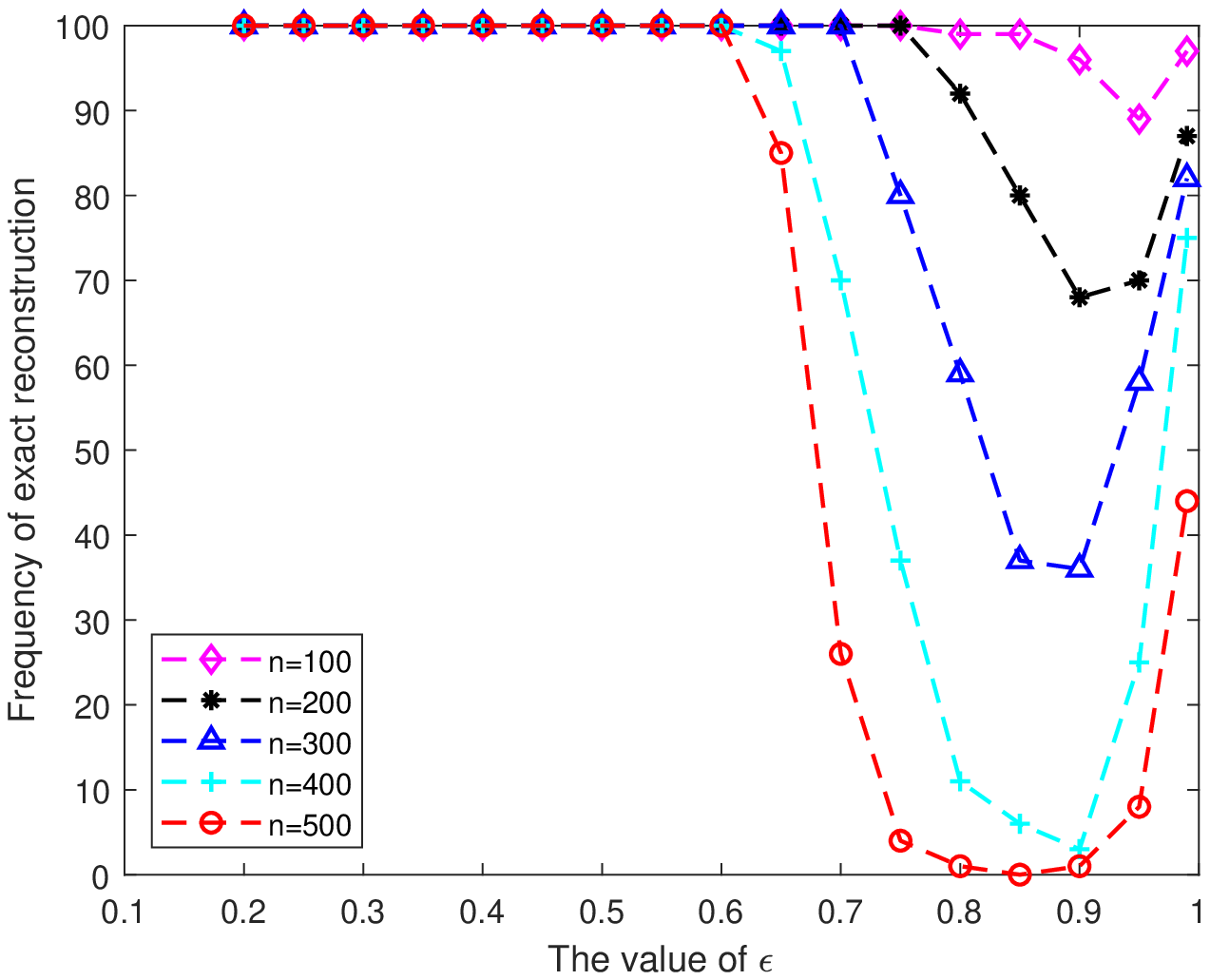}}
 \end{tabular}
 \caption{Successful rate experiments. (a)  We take fixed $n=100$ and change $m=n^2,2n^2,4n^2,8n^2,16n^2$;
 (b) We set $m=n^2$ for $n=100,200,300,400,500$. }
 \label{example2}
 \end{figure}
\end{example}

\subsection{Graph sparsification for graph clustering}
This subsection aims to show that the sparsifiers produced by Algorithm \ref{alg-eco}
improve the performance of graph clustering.
Graph clustering is one of the major research activities in graph data analysis. It has
many applications in communities detection, social network analysis, and etc. Mainly, given
a weighted or unweighted adjacency matrix $A$ associated with a graph $G=(V,E)$, one needs to
find clusters, i.e. $V=C_1 \cup \cdots \cup C_k$ so that the adjacency matrix $A$ permuted according
to the order of the indices in $C_1$, then $C_2, \cdots, C_k$ is an almost blockly diagonal
matrix.   In following examples, we demonstrate that the computation of graph clustering
can be simplified in time of computation and the accuracy of the clustering can be improved.
We shall use a standard graph model called stochastic block model (SMB)(cf. \cite{A17,LM19})  to generate graphs.

\begin{example}
\label{firstex}

 \begin{figure}[hptb]
 \centering
 \begin{tabular}{cc}
\includegraphics[width=0.6\linewidth]{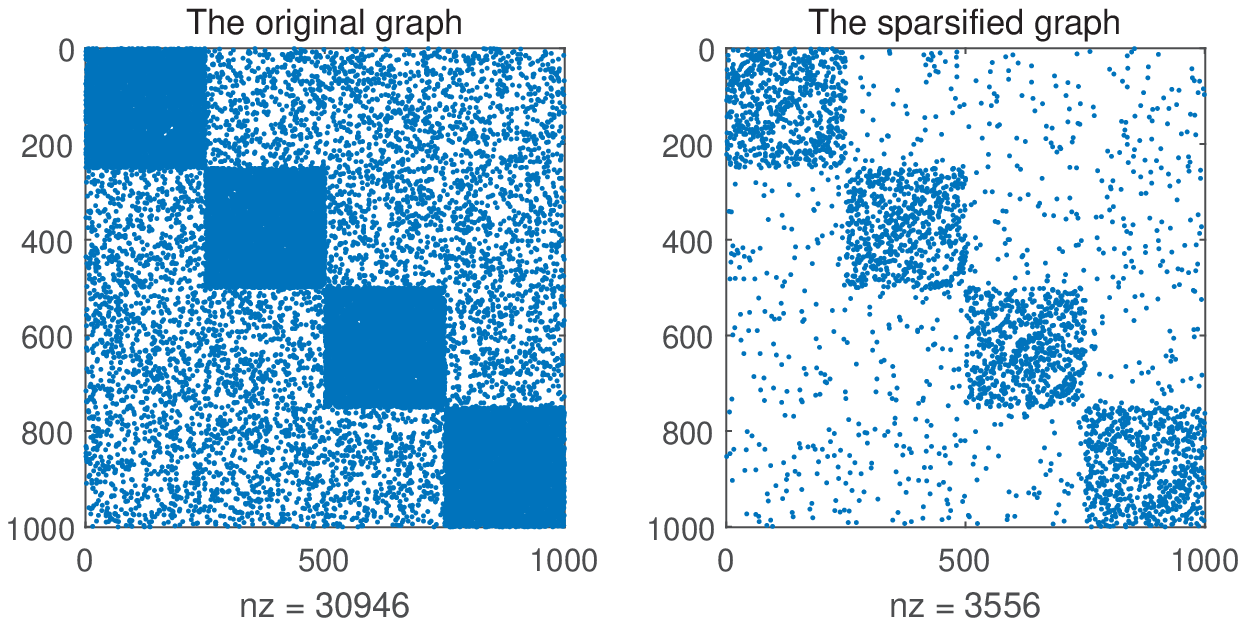} \\
\includegraphics[width=0.6\linewidth]{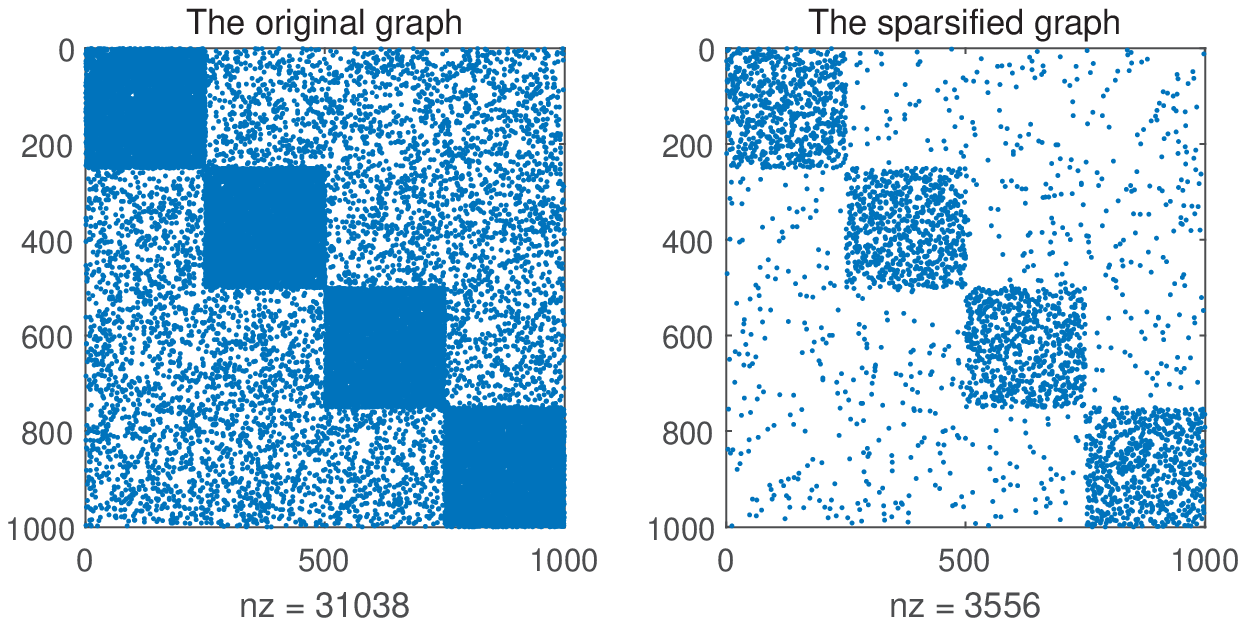}
 \end{tabular}
 \caption{Adjacency matrices of an original and sparsifier graph: two experiments. }
 \label{example4}
 \end{figure}

We have run the experiments with $p=0.1$ and $q=0.008$ for various number $n>1$ of vertices,
e.g. $n=1000$ with $k$ clusters for various $k>1$, e.g. $k=4$.
In Figure~\ref{example4}, we present two adjacency matrices together with their sparsifiers obtained from UGA
for graph sparsifiers with $\epsilon=0.75$. Two experiments are shown with $k=4$ and $n=1000$
that the structure of clusters is well preserved.
\end{example}

\begin{example}
\label{ex1b}
Next we consider random adjacency matrices and  then use our graph sparsification algorithm. Again
we use the stochastic block model to generate graphs as in Example~\ref{firstex}.
After permuting the columns and rows of the adjacency
matrix $A$ randomly,  we apply a standard spectral clustering method \cite{NJW01} to find
clusters with known number of clusters. Also, we first apply the UGA algorithm for graph
sparsifiers and then apply the spectral clustering method to find clusters. We demonstrate that
the accuracy of clusters obtained from sparsified graph can be better than that from the original graph.
Here the accuracy is computed based on the following formula
$$
c_i = \frac{|C_i \cap C_i^\#|}{n_i},
$$
where $n_i$ is the known size of cluster $C_i$ and $C_i^\#$ is the cluster found by the
spectral clustering method. Note in our case, $C_1, \cdots, C_k$ are pre-planted index
sets and thus are known.  We use the average of the accuracies $c_1, \cdots, c_k$ to report
the performance of our computation.

In Figure~\ref{example5}, we use $n=800$ and $p=0.08$ while $q=0.008$, $k=4$ to generate a graph $G$. Then
the spectral clustering method  to find all clusters of $G$ using the original adjacency matrix and sparsified
adjacency matrix by Algorithm~\ref{alg-eco}. We repeat the experiment 100 times and
report the accuracies of clusters for both graphs for each time. The mean of successful rates of finding members
of each cluster correctly from the original graphs is $88.73\%$ while the mean of successful rates from the sparsified graphs
is $93.13\%$. That is, the graph sparsification can improve the accuracy.

 \begin{figure}[hptb]
 \centering
\includegraphics[width=0.6\linewidth]{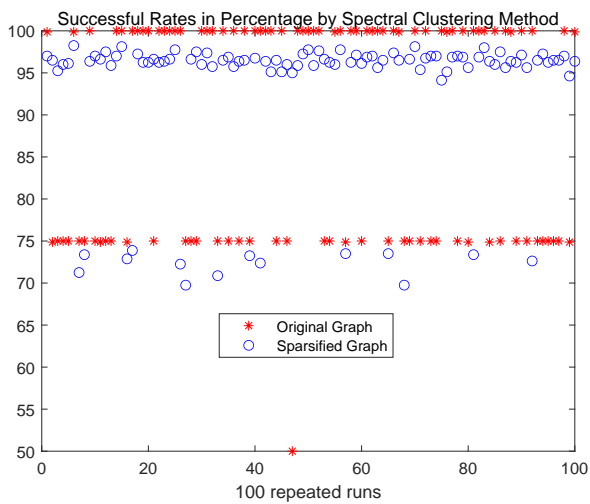}
 \caption{Accuracies in percentage from both graphs over $100$ runs.
 \label{example5}}
 \end{figure}

\end{example}

\subsection{A linear sketching for least squares regression}
 In many applications in statistical data-analysis and  inverse problems,  we need to solve
a  least squares problem.  That is,
we are given an $m\times n$ matrix $A$ and an $m\times 1$ vector $\bfb$ with $m\gg n$. The least
squares regression problem is to  find an  $\bfx_{opt}\in\mathbb{R}^{n}$ such that
\begin{equation}
\label{LS-solution}
\bfx_{opt}\in \arg\min_{\bfx}\|A\bfx- \bfb\|^2_2.
\end{equation}
This subsection aims to show that the sparse approximations produced by Algorithm \ref{alg-ecoIso}
can be useful for  least squares regression.
Firstly, we consider the case that the coefficient matrix $A$ is fixed and $b$ are variable measurements. It has many applications in signal processing and image processing, where $A$ is a fixed signal sensor with many observations $b$.
Indeed, for  a given
$$
 B=  A^\top A = \sum_{i=1}^m  \bfa_i^\top \bfa_i,
$$
where $A=[\bfa_1^\top, \cdots, \bfa_m^\top]^\top\in \mathbb{R}^{m\times n}$ with $m\gg n$ and for any $\epsilon>0$,
we use Algorithm~\ref{alg-ecoIso} to find a sparse solution $d_i\ge 0, i=1, \cdots, s$ with $s=O(n/\epsilon^2)$ such that
\begin{equation}
\label{requirement}
\big(1- \epsilon\big) B \le \sum_{k=1}^s d_k \bfa_{j_k} \bfa_{j_k}^\top \le \big(1+\epsilon\big)  B.
\end{equation}
Then we solve $\tilde{A}\tilde{\bfx} = \tilde{\bfb}$ instead of the original least squares problem, where
$$
\tilde{A}=\left[\begin{matrix} \sqrt{d_1} \bfa_{j_1}\cr \vdots \cr \sqrt{d_s}\bfa_{j_s}\end{matrix}\right]
\hbox{ and }  \tilde{\bfb} = \left[ \begin{matrix} b_{j_1}\cr \vdots\cr b_{j_s}\end{matrix}\right].
$$
Then we claim that $\tilde{\bfx} \approx \bfx$  if
$\|A^\top \bfb - \tilde{A}^\top\tilde{\bfb}\|_2 \approx 0$.
Indeed, we have the following
\begin{theorem}
\label{mjlai12102020}
Let $\bfx$ be the least squares solution satisfying $B\bfx =A^\top \bfb$ and $\tilde{\bfx}$ be the solution $\tilde{A}\tilde{\bfx} =
\tilde{\bfb}$ from the above.  Suppose that $1- \kappa(B) \frac{\|B- \tilde{A}^\top \tilde{A}\|_2}{\|B\|_2}>0$,
where $\kappa(B)$ stands for the condition number of $B$. Then
\begin{equation}
\label{lssapp}
\frac{\|\bfx - \tilde{\bfx}\|_2}{\|\bfx \|_2} \le  \frac{\kappa(B)}{1- \kappa(B)
\frac{\|B- \tilde{A}^\top \tilde{A}\|_2}{\|B\|_2}}
\left( \frac{\|B- \tilde{A}^\top \tilde{A}\|_2}{\|B\|_2}
+ \frac{\|A^\top \bfb - \tilde{A}^\top\tilde{\bfb}\|_2}{\|A^\top \bfb\|_2}\right).
\end{equation}
\end{theorem}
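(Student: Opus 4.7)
The statement is a classical relative-error perturbation bound for square linear systems, so the plan is to recognize both solutions as solutions of normal equations and apply standard sensitivity analysis.

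The first step is to rewrite the two problems as square linear systems. Since $\bfx$ is a least squares solution we have the normal equation $B\bfx = A^\top\bfb$, and likewise $\tilde{\bfx}$ satisfies $\tilde{B}\tilde{\bfx} = \tilde{A}^\top\tilde{\bfb}$ where $\tilde{B} := \tilde{A}^\top\tilde{A}$. Introduce the perturbations
\begin{equation*}
\Delta B := \tilde{B} - B = \tilde{A}^\top\tilde{A} - B, \qquad \Delta\bfc := \tilde{A}^\top\tilde{\bfb} - A^\top\bfb,
\end{equation*}
so the two systems are $B\bfx=\bfc$ and $(B+\Delta B)\tilde{\bfx} = \bfc + \Delta\bfc$, where $\bfc=A^\top\bfb$.

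Next I would subtract the two equations to obtain $B(\tilde{\bfx}-\bfx) = \Delta\bfc - \Delta B\,\tilde{\bfx}$, multiply by $B^{-1}$, and write $\tilde{\bfx} = \bfx + (\tilde{\bfx}-\bfx)$ inside the perturbation term. Taking operator and vector norms gives
\begin{equation*}
\|\tilde{\bfx}-\bfx\|_2 \le \|B^{-1}\|_2\|\Delta\bfc\|_2 + \|B^{-1}\|_2\|\Delta B\|_2\bigl(\|\bfx\|_2 + \|\tilde{\bfx}-\bfx\|_2\bigr).
\end{equation*}
Collecting the $\|\tilde{\bfx}-\bfx\|_2$ terms yields a factor $1 - \|B^{-1}\|_2\|\Delta B\|_2 = 1 - \kappa(B)\frac{\|\Delta B\|_2}{\|B\|_2}$ on the left, which is positive by hypothesis, so we may divide.

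Finally, to convert absolute errors into relative errors I would divide through by $\|\bfx\|_2$ and use the inequality $\|\bfx\|_2 \ge \|B\bfx\|_2/\|B\|_2 = \|\bfc\|_2/\|B\|_2$, which converts $\|B^{-1}\|_2\|\Delta\bfc\|_2/\|\bfx\|_2$ into $\kappa(B)\|\Delta\bfc\|_2/\|\bfc\|_2$ and $\|B^{-1}\|_2\|\Delta B\|_2$ into $\kappa(B)\|\Delta B\|_2/\|B\|_2$. The conclusion \eqref{lssapp} then follows by substituting $\Delta B = \tilde{A}^\top\tilde{A} - B$ and $\Delta\bfc = \tilde{A}^\top\tilde{\bfb} - A^\top\bfb$, using that $\|\Delta B\|_2$ is unchanged under sign flip. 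There is no real obstacle here; the only mild care is ensuring the denominator in the bound is the same positivity quantity assumed in the hypothesis, which is exactly what the absorption of $\|\tilde{\bfx}-\bfx\|_2$ on the left-hand side produces.
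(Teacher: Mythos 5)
Your proof is correct and follows exactly the route the paper intends: the paper's own proof is a one-line citation to the ``stability of the linear system $B\bfx = A^\top A\bfx = A^\top\bfb$,'' and you have simply written out that standard perturbation estimate in full, treating both normal-equation systems $B\bfx = A^\top\bfb$ and $\tilde{A}^\top\tilde{A}\tilde{\bfx} = \tilde{A}^\top\tilde{\bfb}$ as a nominal and a perturbed square system, subtracting, absorbing the $\|\tilde{\bfx}-\bfx\|_2$ term, and using $\|\bfx\|_2 \ge \|A^\top\bfb\|_2/\|B\|_2$ to pass to relative errors. No gap; this is the same argument, just made explicit.
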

\begin{proof}
We mainly use the stability of linear system $B\bfx =A^\top A \bfx=A^\top \bfb$ to establish (\ref{lssapp}).
\end{proof}

By (\ref{requirement}), we have
$\|B- \tilde{A}^\top \tilde{A}\|_2= \max\limits_{\|\bfx\|_2=1} \bfx^\top (B-\tilde{A}^\top \tilde{A})\bfx
\le O(\epsilon \|B\|_2).$  If
$\|A^\top \bfb - \tilde{A}^\top\tilde{\bfb}\|_2  \to 0$, then the right-hand side of
(\ref{lssapp}) is very small, and hence, the approximating solution $\tilde{\bfx}$ is close to the true solution $\bfx$.

Such a computational method is similar to
the linear sketching approach discussed in \cite{boutsidis2013l2, yosef2018, woodruff2014} by
compressing the data $A$ and $\bfb$ as small as possible before doing linear regression.
So our Algorithm~\ref{alg-ecoIso} provides another approach for linear sketching.
Instead of using the linear sketching method to find ${\bf y}= \tilde{A}\backslash \tilde{\bfb}$ we propose
to solve ${\bf z}=(\tilde{A})^\top \tilde{A}\backslash A^\top \bfb$ directly. The reason for this  new solution
can be seen from (\ref{lssapp}) that the solution will make the second term on the right-hand side to be zero.
In the following example, we compare two solutions to demonstrate numerically that our new method is more accurate.

\begin{example}
\label{lsqeg1}
We use Algorithm~\ref{alg-ecoIso} to find a linear sketching $\tilde{A}$ and $\tilde{\bfb}$
with various $\epsilon\in (0,1)$.
Once we find the linear sketching, we repeatedly solve a linear squares regression
by using the linear sketching method: ${\bf y}= \tilde{A}\backslash \tilde{\bfb}$ and our new method
over $100$ times and compute the averaged times and errors.
Let Time-LS, Time-New and Time-O denote the computational times for the two methods
as well as the method using {\sc Matlab} backslash, i.e. ${\bf x}=A\backslash {\bf b}$, respectively.
 We use RE-LS to denote the averaged relative residual error, i.e. $\|{\bf y}-{\bf x}\|_2/\|{\bf x}\|_2$
 by the linear sketching solution method.
 Similarly, RE-New denotes the relative residual error, $\|{\bf z}-{\bf x}\|_2/\|{\bf x}\|_2$.
In Table~\ref{tabn}, we use $n=300,m= 90000$.

\begin{table}[htbp]
\caption{Performance of the linear sketching method and our new method\label{tabn}}
\begin{center}
\begin{tabular} {|c|r|r|r|r|r|r|}
\hline
 $\epsilon$ & RE-LS &RE-New & Time-O &Time-LS &Time-New & BE \\ \hline
    0.90 &   2.7487 &  15.5786 &   2.5359 &   0.0013&   0.0123 &         2 \\
    0.80 &   2.0325 &   4.6479 &   2.7808 &   0.0014&   0.0130 &         2 \\
    0.70 &   1.7395 &   2.1582 &   2.7758 &   0.0014&   0.0127 &         2 \\
    0.60 &   1.5517 &   1.1918 &   2.6825 &   0.0017&   0.0130 &         3 \\
    0.50 &   1.4372 &   0.6701 &   2.6087 &   0.0020&   0.0128 &       6 \\
    0.40 &   1.3603 &   0.3807 &   2.7778 &   0.0023&   0.0132 &       7 \\
    0.30 &   1.2931 &   0.2045 &   2.6791 &   0.0026&   0.0121 &       14 \\
    0.20 &   1.1968 &   0.0917 &   2.7881 &   0.0028&   0.0140 &      29 \\
\hline
\end{tabular}
\end{center}
\end{table}
The numbers in the last column are the numbers of the columns in the right-hand side of the least squares problem to
break even (BE) of the computational time of our new method with {\sc Matlab} least squares method, i.e.
${\bf x}=A\backslash \bfb$. That is, if there are $30$ columns or more on the right-hand side of least squares regression,
we should use Algorithm~\ref{alg-ecoIso} to find $\tilde{A}$ and use our new method to solve them with $\epsilon=0.2$.
\end{example}

\end{document}